\renewcommand{\le}{\leqslant}
\renewcommand{\ge}{\geqslant}
\newcommand{\eps}{\varepsilon}
\newcommand{\emp}{\emptyset}
\newcommand{\Sig}{\Sigma}
\newcommand{\sig}{\sigma}
\newcommand{\noin}{\noindent}
\newcommand{\ur}{uniquely reachable}
\newcommand{\bi}{\begin{itemize}}
\newcommand{\ei}{\end{itemize}}
\newcommand{\be}{\begin{enumerate}}
\newcommand{\ee}{\end{enumerate}}
\newcommand{\bd}{\begin{description}}
\newcommand{\ed}{\end{description}}
\newcommand{\bq}{\begin{quote}}
\newcommand{\eq}{\end{quote}}
\newcommand{\ie}{\mbox{\it i.e.}}
\newcommand{\cA}{{\mathcal A}}
\newcommand{\cN}{{\mathcal N}}
\newcommand{\cR}{{\mathcal R}}
\newcommand{\cS}{{\mathcal S}}
\newcommand{\Lra}{{\hspace{.1cm}\Leftrightarrow\hspace{.1cm}}}
\newcommand{\raL}{{\hspace{.1cm}{\rightarrow_L} \hspace{.1cm}}}
\newcommand{\lraL}{{\hspace{.1cm}{\leftrightarrow_L} \hspace{.1cm}}}
\title{Syntactic Complexity of Ideal and Closed Languages
\thanks{This work was supported by the Natural Sciences and Engineering Research Council of Canada under grant No.~OGP0000871 and  a Postgraduate Scholarship, and by a Graduate Award from the Department of Computer Science, University of Toronto.
}
}
\author{Janusz~Brzozowski\inst{1} \and Yuli Ye\inst{2}}
\authorrunning{Brzozowski, Ye}   
\institute{David R. Cheriton School of Computer Science, University of Waterloo, \\
Waterloo, ON, Canada N2L 3G1\\
\{{\tt brzozo@uwaterloo.ca} \}
\and
Department of Computer Science, University of Toronto,\\
 Toronto, ON,  Canada M5S 3G4,\\
\{{\tt y3ye@cs.toronto.edu}\}
}
\begin{document}

\maketitle
\today
\begin{abstract}
The state complexity of a regular language is the number of states in the minimal deterministic automaton accepting the language.
The syntactic complexity of a regular language is the cardinality of its syntactic semigroup.
The syntactic complexity of a subclass of regular languages is the worst-case syntactic complexity taken as a function of the state complexity $n$ of languages in that class.
We study the syntactic complexity of the class of regular ideal  languages and their complements, the closed languages.
We prove that $n^{n-1}$ is a tight upper bound on the complexity of right ideals and prefix-closed languages, and that
there exist left ideals and suffix-closed languages of syntactic complexity $n^{n-1}+n-1$, and two-sided ideals and factor-closed languages
of syntactic complexity $n^{n-2}+(n-2)2^{n-2}+1$.
\bigskip

\noin
{\bf Keywords:}
automaton,  closed,  complexity,  ideal, language, monoid, regular, reversal, semigroup, syntactic 
\end{abstract}

\section{Introduction}

There are two fundamental congruence relations in the theory of regular languages: the Nerode congruence~\cite{Ner58}, and the Myhill congruence~\cite{Myh57}. In both cases, a language is regular if and only if it is a union of congruence classes of a congruence of finite index.
The Nerode congruence leads to the definitions of left quotients of a language and the minimal deterministic finite automaton  recognizing the language. The Myhill congruence leads to the definitions of the syntactic semigroup and the syntactic monoid of the language.

The \emph{state complexity} of a language is defined as the number of states in the minimal deterministic automaton recognizing the language. This concept has been studied quite extensively: for surveys of this topic and lists of references we refer the reader to~\cite{Brz10,Yu01}.
On the other hand, in spite of suggestions that  syntactic semigroups deserve to be studied further~\cite{HoKo04,KLS05}, relatively little has been done on the ``syntactic complexity'' of a regular language, which we define as the cardinality of the syntactic semigroup of the language.
This semigroup is isomorphic to the semigroup of transformations of the set of states of the minimal deterministic automaton recognizing the language, where these transformations are performed  by non-empty words.

The following example illustrates the significant difference between state complexity and syntactic complexity.
\begin{example}
\label{ex:3automata}
The deterministic automata in Fig.~\ref{fig:3automata} have the same alphabet, are all minimal,  and have the same state complexity. However, the syntactic complexity of $\cA_1$ is 3, that of $\cA_2$ is 9, and that of $\cA_3$ is 27. 
\end{example}

\begin{figure}[hbt]
\begin{center}
\setlength{\unitlength}{0.00056868in}
\begingroup\makeatletter\ifx\SetFigFont\undefined%
\gdef\SetFigFont#1#2#3#4#5{%
  \reset@font\fontsize{#1}{#2pt}%
  \fontfamily{#3}\fontseries{#4}\fontshape{#5}%
  \selectfont}%
\fi\endgroup%
{\renewcommand{\dashlinestretch}{30}
\begin{picture}(6677,2936)(0,-10)
\put(1040,1955){\makebox(0,0)[lb]{\smash{{\SetFigFont{9}{10.8}{\familydefault}{\mddefault}{\updefault}$c$}}}}
\put(1801.000,2526.000){\arc{298.530}{2.1848}{7.2400}}
\blacken\path(1628.118,2492.042)(1715.000,2404.000)(1679.771,2522.570)(1628.118,2492.042)
\put(1171.000,879.480){\arc{299.380}{5.3244}{10.3836}}
\blacken\path(1344.575,914.646)(1257.000,1002.000)(1293.164,883.712)(1344.575,914.646)
\put(3526.000,887.480){\arc{299.380}{5.3244}{10.3836}}
\blacken\path(3699.575,922.646)(3612.000,1010.000)(3648.164,891.712)(3699.575,922.646)
\put(2881.000,2504.000){\arc{298.530}{2.1848}{7.2400}}
\blacken\path(2708.118,2470.042)(2795.000,2382.000)(2759.771,2500.570)(2708.118,2470.042)
\put(5214.000,2511.000){\arc{298.530}{2.1848}{7.2400}}
\blacken\path(5041.118,2477.042)(5128.000,2389.000)(5092.771,2507.570)(5041.118,2477.042)
\put(6489.000,2526.000){\arc{298.530}{2.1848}{7.2400}}
\blacken\path(6316.118,2492.042)(6403.000,2404.000)(6367.771,2522.570)(6316.118,2492.042)
\put(5859.000,902.480){\arc{299.380}{5.3244}{10.3836}}
\blacken\path(6032.575,937.646)(5945.000,1025.000)(5981.164,906.712)(6032.575,937.646)
\put(462,2228){\ellipse{368}{368}}
\put(2884,2209){\ellipse{368}{368}}
\put(4160,2212){\ellipse{368}{368}}
\put(3522,1227){\ellipse{368}{368}}
\put(3520,1230){\ellipse{450}{450}}
\put(1798,2222){\ellipse{368}{368}}
\put(5232,2217){\ellipse{368}{368}}
\put(6485,2221){\ellipse{368}{368}}
\put(5848,1221){\ellipse{368}{368}}
\put(5847,1219){\ellipse{450}{450}}
\put(1174,1222){\ellipse{368}{368}}
\put(1175,1219){\ellipse{450}{450}}
\path(12,2224)(275,2224)
\blacken\path(155.000,2194.000)(275.000,2224.000)(155.000,2254.000)(155.000,2194.000)
\path(1145,1452)(1670,2097)
\blacken\path(1617.514,1984.994)(1670.000,2097.000)(1570.981,2022.871)(1617.514,1984.994)
\path(610,2103)(1083,1428)
\blacken\path(989.567,1509.058)(1083.000,1428.000)(1038.704,1543.490)(989.567,1509.058)
\path(2441,2224)(2704,2224)
\blacken\path(2584.000,2194.000)(2704.000,2224.000)(2584.000,2254.000)(2584.000,2194.000)
\path(4775,2217)(5038,2217)
\blacken\path(4918.000,2187.000)(5038.000,2217.000)(4918.000,2247.000)(4918.000,2187.000)
\path(4100,2044)(3650,1414)
\blacken\path(3695.337,1529.085)(3650.000,1414.000)(3744.161,1494.211)(3695.337,1529.085)
\path(6408,2035)(5958,1405)
\blacken\path(6003.337,1520.085)(5958.000,1405.000)(6052.161,1485.211)(6003.337,1520.085)
\path(5720,1414)(5270,2044)
\blacken\path(5364.161,1963.789)(5270.000,2044.000)(5315.337,1928.915)(5364.161,1963.789)
\path(1805,2044)(1310,1414)
\blacken\path(1360.549,1526.893)(1310.000,1414.000)(1407.728,1489.824)(1360.549,1526.893)
\path(4010,2134)(3065,2134)
\blacken\path(3185.000,2164.000)(3065.000,2134.000)(3185.000,2104.000)(3185.000,2164.000)
\path(965,1306)(470,2026)
\blacken\path(562.705,1944.111)(470.000,2026.000)(513.262,1910.119)(562.705,1944.111)
\path(642,2292)(1632,2292)
\blacken\path(1512.000,2262.000)(1632.000,2292.000)(1512.000,2322.000)(1512.000,2262.000)
\path(1610,2142)(627,2142)
\blacken\path(747.000,2172.000)(627.000,2142.000)(747.000,2112.000)(747.000,2172.000)
\path(3080,2284)(4002,2284)
\blacken\path(3882.000,2254.000)(4002.000,2284.000)(3882.000,2314.000)(3882.000,2254.000)
\path(5405,2292)(6305,2292)
\blacken\path(6185.000,2262.000)(6305.000,2292.000)(6185.000,2322.000)(6185.000,2262.000)
\path(6305,2134)(5390,2134)
\blacken\path(5510.000,2164.000)(5390.000,2134.000)(5510.000,2104.000)(5510.000,2164.000)
\put(1219,1752){\makebox(0,0)[lb]{\smash{{\SetFigFont{9}{10.8}{\familydefault}{\mddefault}{\updefault}$c$}}}}
\put(1099,1152){\makebox(0,0)[lb]{\smash{{\SetFigFont{9}{10.8}{\familydefault}{\mddefault}{\updefault}$2$}}}}
\put(1740,2750){\makebox(0,0)[lb]{\smash{{\SetFigFont{9}{10.8}{\familydefault}{\mddefault}{\updefault}$a$}}}}
\put(382,2748){\makebox(0,0)[lb]{\smash{{\SetFigFont{9}{10.8}{\familydefault}{\mddefault}{\updefault}$a$}}}}
\put(913,1752){\makebox(0,0)[lb]{\smash{{\SetFigFont{9}{10.8}{\familydefault}{\mddefault}{\updefault}$c$}}}}
\put(4094,2134){\makebox(0,0)[lb]{\smash{{\SetFigFont{9}{10.8}{\familydefault}{\mddefault}{\updefault}$1$}}}}
\put(3454,1152){\makebox(0,0)[lb]{\smash{{\SetFigFont{9}{10.8}{\familydefault}{\mddefault}{\updefault}$2$}}}}
\put(403,2149){\makebox(0,0)[lb]{\smash{{\SetFigFont{9}{10.8}{\familydefault}{\mddefault}{\updefault}$0$}}}}
\put(1731,2141){\makebox(0,0)[lb]{\smash{{\SetFigFont{9}{10.8}{\familydefault}{\mddefault}{\updefault}$1$}}}}
\put(2825,2134){\makebox(0,0)[lb]{\smash{{\SetFigFont{9}{10.8}{\familydefault}{\mddefault}{\updefault}$0$}}}}
\put(2707,2749){\makebox(0,0)[lb]{\smash{{\SetFigFont{9}{10.8}{\familydefault}{\mddefault}{\updefault}$b,c$}}}}
\put(5172,2141){\makebox(0,0)[lb]{\smash{{\SetFigFont{9}{10.8}{\familydefault}{\mddefault}{\updefault}$0$}}}}
\put(6426,2149){\makebox(0,0)[lb]{\smash{{\SetFigFont{9}{10.8}{\familydefault}{\mddefault}{\updefault}$1$}}}}
\put(5787,1152){\makebox(0,0)[lb]{\smash{{\SetFigFont{9}{10.8}{\familydefault}{\mddefault}{\updefault}$2$}}}}
\put(5150,2742){\makebox(0,0)[lb]{\smash{{\SetFigFont{9}{10.8}{\familydefault}{\mddefault}{\updefault}$c$}}}}
\put(6395,2750){\makebox(0,0)[lb]{\smash{{\SetFigFont{9}{10.8}{\familydefault}{\mddefault}{\updefault}$c$}}}}
\put(5647,2404){\makebox(0,0)[lb]{\smash{{\SetFigFont{9}{10.8}{\familydefault}{\mddefault}{\updefault}$a,b$}}}}
\put(3435,2391){\makebox(0,0)[lb]{\smash{{\SetFigFont{9}{10.8}{\familydefault}{\mddefault}{\updefault}$a$}}}}
\put(1040,2388){\makebox(0,0)[lb]{\smash{{\SetFigFont{9}{10.8}{\familydefault}{\mddefault}{\updefault}$b$}}}}
\put(1108,521){\makebox(0,0)[lb]{\smash{{\SetFigFont{9}{10.8}{\familydefault}{\mddefault}{\updefault}$a$}}}}
\put(5815,537){\makebox(0,0)[lb]{\smash{{\SetFigFont{9}{10.8}{\familydefault}{\mddefault}{\updefault}$b$}}}}
\put(3277,528){\makebox(0,0)[lb]{\smash{{\SetFigFont{9}{10.8}{\familydefault}{\mddefault}{\updefault}$a,b,c$}}}}
\put(3410,93){\makebox(0,0)[lb]{\smash{{\SetFigFont{9}{10.8}{\familydefault}{\mddefault}{\updefault}$\cA_2$}}}}
\put(5727,78){\makebox(0,0)[lb]{\smash{{\SetFigFont{9}{10.8}{\familydefault}{\mddefault}{\updefault}$\cA_3$}}}}
\put(1025,78){\makebox(0,0)[lb]{\smash{{\SetFigFont{9}{10.8}{\familydefault}{\mddefault}{\updefault}$\cA_1$}}}}
\put(1579,1527){\makebox(0,0)[lb]{\smash{{\SetFigFont{9}{10.8}{\familydefault}{\mddefault}{\updefault}$b$}}}}
\put(515,1541){\makebox(0,0)[lb]{\smash{{\SetFigFont{9}{10.8}{\familydefault}{\mddefault}{\updefault}$b$}}}}
\put(3906,1549){\makebox(0,0)[lb]{\smash{{\SetFigFont{9}{10.8}{\familydefault}{\mddefault}{\updefault}$c$}}}}
\put(6218,1555){\makebox(0,0)[lb]{\smash{{\SetFigFont{9}{10.8}{\familydefault}{\mddefault}{\updefault}$a$}}}}
\put(5123,1533){\makebox(0,0)[lb]{\smash{{\SetFigFont{9}{10.8}{\familydefault}{\mddefault}{\updefault}$a,c$}}}}
\put(5741,1910){\makebox(0,0)[lb]{\smash{{\SetFigFont{9}{10.8}{\familydefault}{\mddefault}{\updefault}$b$}}}}
\put(3323,1908){\makebox(0,0)[lb]{\smash{{\SetFigFont{9}{10.8}{\familydefault}{\mddefault}{\updefault}$a,b$}}}}
\put(459.000,2511.000){\arc{298.530}{2.1848}{7.2400}}
\blacken\path(286.118,2477.042)(373.000,2389.000)(337.771,2507.570)(286.118,2477.042)
\end{picture}
}
\end{center}
\caption{Automata with various syntactic complexities.} 
\label{fig:3automata}
\end{figure}

Syntactic complexity  provides an alternative measure for the complexity of a regular language. The following  question then arises:
\begin{quote}
Is it possible to find upper bounds to the syntactic complexity of a regular language from its properties or from the properties of its minimal deterministic automaton?
\end{quote}
We shed some light on this question for ideal and closed regular languages.

\section{Background}

This section provides a brief informal overview of the past work related to the topic of this paper. The relevant concepts will be formally defined later.

In 1938, Piccard~\cite{Pic38} proved that two generators are sufficient to generate the set of all permutations of a set of $n$ elements, that is, the symmetric group of degree $n$. The two generators can be a cyclic permutation of all the elements and a transposition of two  of the elements.
References to her other early papers can be found in her books~\cite{Pic46,Pic57}, published in 1946, and 1957, where the problem of generators of groups is treated in detail.

In 1960, 1962, and 1963 Salomaa~\cite{Sal60,Sal62,Sal63} studied, among other problems, the sets  that can generate the set of all transformations of a set of $n$ elements.  
In particular, his aim was to replace the symmetric group of degree $n$ by smaller groups of degree $n$. 

In 1968, D\'enes~\cite{Den68} proved that three transformations are sufficient to generate the set of all transformations of a set $S_n$ of $n$ elements. One can use the two transformations that generate the symmetric group of degree $n$ and an additional transformation that maps each of $n-1$ elements of a subset $S_{n-1}$ of $S_n$ to itself, and the last element to some element of $S_{n-1}$. 
Moreover, he showed that fewer than three generators are not possible.
A summary of other work by D\'enes on transformations can be found in~\cite{Den72}.

In 1970, Maslov~\cite{Mas70} dealt with the problem of generators of the semigroup of all transformations in the setting of finite automata. He pointed out that a certain ternary automaton with $n$ states has $n^n$ transformations. He also stated without proof that it is not possible to reach this bound with a binary automaton, and that the precise bound for the binary case is not known.
He exhibited a binary automaton with $n$ states that has at least $(n-1)^{n-1}$ transformations.

In 2002--2004, Holzer and K\"onig~\cite{HoKo02,HoKo03,HoKo04} studied the syntactic complexity of automata.
They remarked that the syntactic complexity of a unary regular language of state complexity $n$  is at most $n$, and this bound can be met. They also noted that $n^n$ is a tight bound on the complexity of languages over alphabets $\Sig$ with $|\Sig|\ge 3$. Their main contributions are in the most difficult case, that of a binary alphabet. They proved that, for $n\ge 3$, the function $n^n-n!+g(n)$ is an upper bound to the syntactic complexity of a binary regular language, where $g(n)$ is the Landau function.
They also showed that a syntactic complexity of $n^n(1-2/\sqrt{n})$ can be achieved. For any prime $n\ge 7$, they characterized a 2-generator semigroup of maximal complexity.

In 2003, Salomaa~\cite{Sal03} considered  all the words over the alphabet $\Sig$ of a finite automaton that perform the same transformation $t$. In particular, he defined the length of the shortest such sequence to be the \emph{depth  with respect to $\Sig$} of the transformation $t$. The \emph{depth} of $t$ was then defined as the maximum over all $\Sig$ that produce $t$. 
Finally, he defined the \emph{complete depth} of a transformation to be its depth when $\Sig$ ranges over all alphabets  that generate all the transformations.
Many properties of the depth functions are established in this paper.

In 2003 and 2005, Krawetz, Lawrence and Shallit~\cite{KLS05} studied the state complexity of the operation ${\rm root}(L)=\{w\in\Sig^*\mid \exists n\ge 1 \text{ such that } w^n\in L\}$, which is bounded from above by $n^n$, where $n$ is the state complexity of $L$. In fact, they showed that a finite automaton with at most $n^n$ states can be constructed to accept ${\rm root}(L)$, and
 obtained a lower bound on the state complexity of ${\rm root}(L)$ for binary $L$.
 For alphabets of at least three letters, they showed that the bound on the state complexity of ${\rm root}(L)$
can be improved to $n^n-{n \choose 2}$.

\section{Ideal and Closed Languages}
If $w=uxv$ for some $u,v,x\in\Sigma^*$, then $u$ is a {\em prefix\/} of $w$, $v$ is a {\em suffix\/} of $w$, and  
 $x$ is a {\em factor\/} of $w$.
A prefix or suffix of $w$ is also a factor of $w$.

A~language $L$ is {\it prefix-convex\/}~\cite{AnBr09} if  $u, w \in L$ with
$u$ a prefix of $w$ implies that every word $v$ must also be in $L$ if
$u$ is a prefix of $v$ and $v$ is a prefix of $w$.
It is {\it prefix-closed\/} if $w\in L$ implies that
every prefix of $w$ is also in~$L$. 
In the same way, we define  \emph{suffix-convex} and \emph{factor-convex}, 
and the corresponding closed versions.

A language $L\subseteq\Sig^*$ is a \emph{right ideal} (respectively, \emph{left ideal}, \emph{two-sided ideal}) if it is non-empty and satisfies $L=L\Sig^*$ (respectively, $L=\Sig^*L$, $L=\Sig^*L\Sig^*$).
We refer to all three types  as \emph{ideal languages} or simply \emph{ideals}.

Suffix-closed languages were studied in 1974 by Gill and Kou~\cite{GiKo74}, in 1976 by Galil and Simon~\cite{GaSi76}, in 1979 by Veloso and Gill~\cite{VeGi79}, in 2001 by Holzer, K.~Salomaa, and Yu~\cite{HSY01}, in 2009 by Kao, Rampersad, and Shallit~\cite{KRS09} and by Ang and Brzozowski~\cite{AnBr09}, and in 2010 by Brzozowski, Jir\'askov\'a and Zou~\cite{BJZ10}.

Left and right ideals were studied  
by Paz and Peleg~\cite{PaPe65} in 1965 
under the names ``ultimate definite'' 
and ``reverse ultimate definite events". 
Complexity issues of conversion of nondeterministic finite automata to deterministic finite automata
in right, left, and two-sided ideals were studied 
in 2008 by Bordihn, Holzer, and Kutrib~\cite{BHK09}.
The closure properties of ideals were analyzed in~\cite{AnBr09}. 
Decision problems for various classes of convex languages, including ideals, were addressed by Brzozowski, Shallit and Xu in~\cite{BSX10}. 

\section{Transformations}
A \emph{transformation} of a set $Q$ is a mapping of $Q$ \emph{into} itself, whereas a \emph{permutation}
of $Q$ is a mapping of $Q$ \emph{onto} itself.
In this paper we consider only transformations of finite sets, and we assume
without loss of generality  that $Q=\{0,1,\ldots, n-1\}$.
An arbitrary transformation has the form
$$
t=\left( \begin{array}{ccccc}
0 & 1 &   \cdots &  n-2 & n-1 \\
i_0 & i_1 &   \cdots &  i_{n-2} & i_{n-1}
\end{array} \right ),
$$
where $i_k\in Q$ for $0\le k\le n-1$.
 To simplify the notation, such a transformation
will often be denoted  by $t:[i_0,  i_1,  \ldots,  i_{n-2}, i_{n-1}]$, or just $[i_0,  i_1,  \ldots,  i_{n-2}, i_{n-1}]$ if $t$ is understood.
The \emph{identity} transformation is the mapping
$$
t=\left( \begin{array}{ccccc}
0 & 1 &   \cdots &  n-2 & n-1 \\
0 & 1 &   \cdots &  {n-2} & {n-1}
\end{array} \right ).
$$
We will consider  \emph{cycles} of length $k$ of the following form:
$$
t=\left( \begin{array}{ccccccccccccc}
0 & 1 &   \cdots & i-1&\bf i &\bf  i+1 & \cdots &\bf  i+{k-2} &\bf  i+k-1 & i+k &\cdots & n-2 & n-1 \\
0 & 1 &   \cdots & i-1 &\bf i+1 &\bf i+2 &\cdots &\bf  i+k-1 &\bf i & i+k & \cdots& n-2 & n-1
\end{array} \right ),
$$
where we show in bold type the elements that are changed by $t$. To simplify the notation, such a cycle is represented by $(i,i+1,\ldots,i+k-1)$. 
A cycle of length~1 is the identity.
A \emph{singular} transformation is a transformation of the form
$$
t=\left( \begin{array}{cccccccc}
0 & 1 &   \cdots &i-1 &\bf i & i+1 &  n-2 & n-1 \\
0 & 1 &   \cdots & i-1 &\bf j & i+1 & n-2 & n-1
\end{array} \right ),
$$
which is denoted by $i\choose j$.
The singular transformation $i \choose i$ is the identity.
For $i<j$, a \emph{transposition} is a transformation of the form
$$
t=\left( \begin{array}{ccccccccccccc}
0 & 1 &   \cdots &i-1 &\bf i & i+1&\cdots & j-1 &\bf j &j+1 &\cdots &  n-2 & n-1 \\
0 & 1 &   \cdots & i-1 &\bf j & i+1 &\cdots & j-1 &\bf i &j+1 & \cdots &n-2 & n-1
\end{array} \right ),
$$
which is denoted by $(i,j)$, with $(i,i)$ being the identity.
A transposition is also a cycle of length 2.

A \emph{constant} transformation is a transformation of the form
$$
t=\left( \begin{array}{ccccc}
0 & 1 &   \cdots &  n-2 & n-1 \\
i & i &   \cdots &  i & i
\end{array} \right ),
$$
and it is denoted by $Q \choose i$.

The set of all $n^n$ transformations of a set $Q$ is a monoid under composition of transformations, with identity as the unit element. The set of all $n!$ permutations of $Q$ is a group, the \emph{symmetric} group of degree $n$. 
The following facts  about generators of particular semigroups are well-known:

\begin{theorem}[Permutations]
\label{thm:piccard}
The symmetric group $S_n$ of size $n!$ can be generated by any cyclic
 permutation of $n$ elements together with an arbitrary transposition. In particular, $S_n$ can be generated by
$c=(0,1,\ldots, n-1)$ and  $t=(0,1)$.
\end{theorem}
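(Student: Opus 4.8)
The plan is to prove the explicit assertion — that $c=(0,1,\ldots,n-1)$ and $t=(0,1)$ generate $S_n$ — and then to recover the general phrasing by relabeling the ground set. For the explicit assertion I would combine two classical facts. \emph{Step 1:} the $n-1$ adjacent transpositions $s_k=(k,k+1)$, $0\le k\le n-2$, already generate $S_n$. It is well known that every permutation is a product of transpositions, so it suffices to express an arbitrary transposition $(i,j)$ with $i<j$ in terms of the $s_k$, and this is done by the telescoping identity $(i,j)=s_{j-1}s_{j-2}\cdots s_i\cdots s_{j-2}s_{j-1}$ (with middle factor $s_i$), which one checks by a one-line conjugation computation or by induction on $j-i$ via $(i,j)=s_{j-1}(i,j-1)s_{j-1}$.

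\emph{Step 2:} a single $n$-cycle produces all adjacent transpositions from one of them. Since $c$ acts as $x\mapsto x+1 \pmod n$, conjugation by $c^k$ carries $t=(0,1)$ to $(c^k(0),c^k(1))=(k\bmod n,(k{+}1)\bmod n)$, and for $0\le k\le n-2$ this is precisely $s_k$. Hence $\langle c,t\rangle$ contains $c^k t c^{-k}=s_k$ for all such $k$, so by Step 1 we get $\langle c,t\rangle=S_n$.

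For the general phrasing, relabel $Q$ so that the given $n$-cycle becomes $c$ (all $n$-cycles are conjugate in $S_n$); the given transposition then becomes some $(a,b)$, and conjugating by a power of $c$ we may assume it is $(0,g)$ with $g=b-a$. Conjugating $(0,g)$ by $c^{0},c^{g},c^{2g},\ldots$ yields the transpositions $(jg,(j{+}1)g)\bmod n$; when $\gcd(g,n)=1$ the map $j\mapsto jg$ is a permutation of $Q$ carrying the standard adjacent transpositions onto exactly this family, so by Step 1 the family generates $S_n$ and we are done. (When $\gcd(g,n)>1$ the claim is in fact false — for instance $(0,1,2,3)$ and $(0,2)$ generate only the dihedral group of order $8$ — so the clean and sufficient formulation is the explicit pair $c=(0,1,\ldots,n-1)$, $t=(0,1)$ of the ``in particular'' clause, which Steps 1 and 2 settle outright.)

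There is no conceptual obstacle here: the argument is essentially bookkeeping — keeping the index arithmetic modulo $n$ straight in the conjugation computations, and fixing an unambiguous form for the telescoping product of Step 1 — and the only point that genuinely requires care is the coprimality hypothesis on the transposition in the fully general version, which is why I would phrase and apply the theorem for transpositions of cyclically adjacent elements.
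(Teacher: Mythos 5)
Your proof is correct; note that the paper itself offers no proof of this theorem --- it is stated as a well-known fact with a citation to Piccard --- so there is nothing to compare against except the standard argument, which is exactly what you give: conjugating $t=(0,1)$ by powers of $c$ yields all adjacent transpositions $s_k=(k,k+1)$, and these generate $S_n$ via the telescoping identity $(i,j)=s_{j-1}(i,j-1)s_{j-1}$. Your parenthetical observation is also well taken and worth recording: the general clause ``any cyclic permutation together with an \emph{arbitrary} transposition'' is false for composite $n$ (e.g.\ $\langle (0,1,2,3),(0,2)\rangle$ is dihedral of order $8$); the correct hypothesis is that the cyclic distance between the two transposed elements be coprime to $n$, which holds automatically when $n$ is prime or when the transposition swaps cyclically adjacent elements. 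This imprecision is harmless for the paper, since every application (Definitions~\ref{def:RightIdeal}, \ref{def:semileft}, and \ref{def:Alin}) pairs the cycle with an adjacent transposition, but your flag is a genuine and useful correction to the statement as phrased.
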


\begin{theorem}[Transformations]
\label{thm:salomaa}
The complete transformation monoid $T_n$ of size $n^n$ can be generated by any cyclic
permutation of $n$ elements together with a transposition and a ``returning'' transformation $r={n-1 \choose 0}$. In particular, $T_n$ can be generated by $c=(0,1,\ldots, n-1)$,  $t=(0,1)$ and $r={n-1 \choose 0}$.
\end{theorem}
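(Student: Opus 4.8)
The plan is to take the symmetric group for free from Theorem~\ref{thm:piccard} and then show that adjoining the single transformation $r={n-1 \choose 0}$ already suffices to reach every non-injective transformation, so that the generated subsemigroup is all of $T_n$. Throughout, let $\pi\circ\tau$ denote the transformation $x\mapsto\pi(\tau(x))$, and call the number of distinct values of a transformation its \emph{rank}, so that permutations are exactly the rank-$n$ transformations and $r$ has rank $n-1$. First, with $c=(0,1,\ldots,n-1)$ and $t=(0,1)$ as stated, Theorem~\ref{thm:piccard} gives $\langle c,t\rangle = S_n$, so every permutation of $Q$ already lies in $\langle c,t,r\rangle$ (and for the general statement Theorem~\ref{thm:piccard} does the same for the chosen cyclic permutation and transposition). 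It remains to produce every transformation of rank at most $n-1$.

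Next I would show that an \emph{arbitrary} transformation $s$ of rank $n-1$ can be written as $p\circ r\circ q$ with $p,q\in S_n$. Such an $s$ omits exactly one value $k$ and has exactly one two-element fibre $\{a,b\}$, say $s(a)=s(b)=c$; on $Q\setminus\{a,b\}$ it is a bijection onto $Q\setminus\{k,c\}$. Choose $q\in S_n$ with $q(a)=0$ and $q(b)=n-1$; since $r$ moves only $n-1$, the map $r\circ q$ sends both $a$ and $b$ to $0$ and restricts on $Q\setminus\{a,b\}$ to a bijection onto $Q\setminus\{0,n-1\}$. Then choose $p\in S_n$ with $p(0)=c$, $p(n-1)=k$, and $p$ restricted to $Q\setminus\{0,n-1\}$ equal to the bijection $q(x)\mapsto s(x)$ onto $Q\setminus\{k,c\}$; this is a genuine permutation precisely because $k\neq c$. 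A direct check then gives $p\circ r\circ q = s$. In particular every elementary collapsing ${j \choose i}$ with $i\neq j$ is obtained, but the real point is that we obtain all rank-$(n-1)$ transformations, which is exactly what the next step needs.

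Finally I would prove, by induction on the defect $d = n - \operatorname{rank}(s)$, that every transformation $s$ of rank below $n$ lies in $\langle c,t,r\rangle$. The base case $d=1$ is the previous step. For $d\geq 2$, pick a value $c$ with $|s^{-1}(c)|\geq 2$ (one exists since $n > \operatorname{rank}(s)$), fix some $a$ with $s(a)=c$, pick some $k$ not in the image of $s$, and let $s'$ agree with $s$ off $a$ and satisfy $s'(a)=k$; then $\operatorname{rank}(s') = \operatorname{rank}(s)+1$, because $c$ keeps a preimage. One checks that $e\circ s' = s$ for $e = {k \choose c}$: off $a$ the values of $s'$ coincide with those of $s$ and hence avoid $k$, so $e$ fixes them, while $e(s'(a)) = e(k) = c = s(a)$. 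By the previous step $e\in\langle c,t,r\rangle$, by induction $s'\in\langle c,t,r\rangle$, hence so is $s$. Combining the three steps, $\langle c,t,r\rangle$ contains all permutations and all transformations of rank at most $n-1$, that is, all $n^n$ elements of $T_n$; the reverse inclusion is trivial.

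I expect the only place needing genuine care to be the second step: one must notice that conjugates of $r$ alone yield merely the elementary collapsings, which do not suffice as an induction base, so the full family of rank-$(n-1)$ transformations has to be realized by the two-sided product $p\circ r\circ q$; and one must verify that the permutations $p$ and $q$ are well defined and that the composition order comes out right. The remaining steps are routine bookkeeping.
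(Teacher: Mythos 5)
Your proof is correct, but there is nothing in the paper to compare it against: the authors state this theorem as a well-known classical fact (the history --- Piccard, Salomaa, D\'enes --- is recounted in their Background section) and give no proof of it. Your route is the standard one for results of this kind: reduce to the symmetric group via Theorem~\ref{thm:piccard}, show that the double coset $S_n\, r\, S_n$ is exactly the set of rank-$(n-1)$ transformations via the explicit factorization $s=p\circ r\circ q$, and then induct on the defect by peeling off one elementary collapsing ${k \choose c}$ at a time. The details check out: $k\ne c$ (non-image point versus image point) is what makes $p$ a well-defined permutation in your second step, and in your third step the rank of $s'$ genuinely increases because $c$ retains the preimage $b$; your warning that one-sided conjugates $p\,r\,p^{-1}$ yield only the elementary collapsings, so that the two-sided product is needed to realize all rank-$(n-1)$ maps, is the right point of care (though one could alternatively run the induction using only the collapsings, writing any non-permutation as a permutation followed by a product of collapsings --- the argument usually attributed to D\'enes). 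One caveat you inherit from the paper rather than introduce yourself: Theorem~\ref{thm:piccard} as literally stated (\emph{any} $n$-cycle together with an \emph{arbitrary} transposition) fails for composite $n$ --- for instance $(0,1,2,3)$ and $(0,2)$ generate only a dihedral group of order $8$ --- so the ``any cyclic permutation together with a transposition'' clause of the present theorem requires the same qualification; the concrete generators $c=(0,1,\ldots,n-1)$, $t=(0,1)$, $r={n-1 \choose 0}$, which are all the paper ever uses, are unaffected.
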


\section{Quotient Complexity and Syntactic Complexity} 

If $\Sig$ is a non-empty finite alphabet, then $\Sig^*$ is the free monoid generated by $\Sig$, and $\Sig^+$ is the free semigroup generated by $\Sig$.  A \emph{word} is any element of $\Sig^*$, and the empty word is $\eps$. The length of a word $w\in \Sig^*$ is $|w|$.
A \emph{language} over $\Sig$ is any subset of $\Sig^*$. 
The \emph{left quotient}, or simply \emph{quotient,} of a language $L$ by a word $w$ is  the language $L_w=\{x\in \Sig^*\mid wx\in L \}$.

An equivalence relation $\sim$ on $\Sig^*$ is a \emph{left congruence}  if, for all $x,y\in \Sig^*$, 
\begin{equation}
x\sim y {\Lra} ux \sim uy, \mbox{ for all } u\in\Sig^*.
\end{equation}
It   is a \emph{right congruence} if, for all $x,y\in \Sig^*$,
\begin{equation}
x\sim y {\Lra} xv \sim yv, \mbox{ for all } v\in\Sig^*.
\end{equation}
It is a \emph{congruence} if it is both a left and a right congruence. 
Equivalently, $\sim$ is a congruence if 
\begin{equation}
x\sim y {\Lra} uxv \sim uyv, \mbox{ for all } u,v\in\Sig^*.
\end{equation}

For any language $L\subseteq \Sig^*$, define the \emph{Nerode congruence}~\cite{Ner58} $\raL$ of $L$ by
\begin{equation}
x \raL y \mbox{ if and only if } xv\in L  \Leftrightarrow yv\in L, \mbox { for all } u,v\in\Sig^*.
\end{equation}
Evidently, $L_x=L_y$ if and only if $x\raL y$.
Thus, each equivalence class of this congruence corresponds to a distinct quotient of $L$.

For any language $L\subseteq \Sig^*$, define the \emph{Myhill congruence}~\cite{Myh57} $\lraL$ of $L$ by
\begin{equation}
x \lraL y \mbox{ if and only if } uxv\in L  \Leftrightarrow uyv\in L\mbox { for all } u,v\in\Sig^*.
\end{equation}
This congruence is also known as the \emph{syntactic congruence} of $L$.
The  semigroup  $\Sig^+/ \lraL$ of equivalence classes of the relation $\lraL$,  is the \emph{syntactic semigroup} of $L$, and 
$\Sig^*/ \lraL$ is the \emph{syntactic monoid} of~$L$. 
The \emph{syntactic complexity} $\sig(L)$ of $L$ is the cardinality of its syntactic semigroup.
The \emph{monoid complexity} $\mu(L)$ of $L$ is the cardinality of its syntactic monoid.
If the  equivalence containing $\eps$ is a singleton in the syntactic monoid, then $\sig(L)=\mu(L)-1$; otherwise, $\sig(L)=\mu(L)$.

A \emph{(deterministic) semiautomaton} is a triple, $\cS=(Q, \Sig, \delta)$, where
$Q$ is a finite, non-empty set of \emph{states}, $\Sig$ is a finite non-empty \emph{alphabet}, and $\delta:Q\times \Sig\to Q$ is the \emph{transition function}. 
A~\emph{deterministic finite automaton} or simply \emph{automaton} is a quintuple $\cA=(Q, \Sig, \delta, q_0,F)$, where $Q$, $\Sig$, and $\delta$ are as defined in the semiautomaton $\cS=(Q, \Sig, \delta)$,  $q_0\in Q$ is the \emph{initial state}, and $F\subseteq Q$ is the set of \emph{final states}.
A~\emph{nondeterministic finite automaton}  or simply \emph{nondeterministic automaton} is a quintuple $\cN=(Q, \Sig, \eta, S,F)$, where $Q$, $\Sig$, and $F$ are as defined in  a deterministic automaton,  $S\subseteq  Q$ is the \emph{set of initial states}, and $\eta:Q\times \Sig\to 2^Q$ is the  \emph{transition function}.

The \emph{$\eps$-function}  $L^\eps$ of a regular language $L$ is
$L^\eps=\emp$ if $\eps\not\in L$; $L^\eps=\eps$ if $\eps\in L$.
The \emph{quotient automaton} of a regular language $L$ is 
$\cA=(Q, \Sig, \delta, q_0,F)$, where $Q=\{L_w\mid w\in\Sig^*\}$, $\delta(L_w,a)=L_{wa}$, 
$q_0=L_\eps=L$,  $F=\{L_w \mid L_w^\eps=\eps\}$, and $L_w^\eps=(L_w)^\eps$.
The number of states in the quotient automaton of $L$ is the \emph{quotient complexity} of $L$. 
The quotient complexity is the same as the state complexity, but there are advantages to using quotients~\cite{Brz10}.
A~quotient automaton can be conveniently represented by \emph{quotient equations}~\cite{Brz64}:
\begin{equation}
L_w=\bigcup_{a\in \Sig}aL_{wa}\cup L_w^\eps,
\end{equation}
where there is one equation for each distinct quotient $L_w$.

In terms of automata, each equivalence class $[w]_{\raL}$ of $\raL$ is the set of all words $w$ that take the automaton to the same state from the initial state. In terms of quotients, it is the set of words $w$ that can all be followed by the same quotient $L_w$.

In terms of automata, each equivalence class $[w]_\lraL$ of the syntactic congruence is the set of all words that perform the same transformation on the set of states.

The \emph{transformation semigroup} (respectively, \emph{transformation monoid}) of an automaton is the set of transformations performed by words of $\Sig^+$ (respectively, $\Sig^*$) on the set of states. The transformation semigroup (monoid) of the quotient automaton of $L$ is isomorphic to the syntactic semigroup (monoid) of~$L$.

\begin{proposition}
\label{prop:basicbounds}
For any language $L$ with $\kappa(L)=n$, we have $n-1\le \sig(L)\le n^n$. 
\end{proposition}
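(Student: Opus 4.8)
The plan is to prove the two inequalities separately, exploiting throughout the isomorphism (stated in the excerpt) between the syntactic semigroup of $L$ and the transformation semigroup of the quotient automaton $\cA$ of $L$, which has exactly $n$ states since $\kappa(L)=n$.

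For the upper bound $\sig(L)\le n^n$ the argument is immediate: the syntactic semigroup of $L$ is isomorphic to the set of transformations of the $n$-element state set $Q$ of $\cA$ performed by the words of $\Sig^+$. Every such transformation is one of the $n^n$ transformations of an $n$-element set, so the transformation semigroup is a subset of this set and hence has at most $n^n$ elements.

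For the lower bound $\sig(L)\ge n-1$ I would first observe that the Myhill (syntactic) congruence refines the Nerode congruence: putting $u=\eps$ in the definition of $\lraL$ shows that $x\lraL y$ implies $x\raL y$. Therefore the syntactic monoid $\Sig^*/\lraL$ has at least as many classes as $\Sig^*/\raL$, and the latter has exactly $n$ classes, one per quotient of $L$; hence $\mu(L)\ge n$. Since $\sig(L)\ge \mu(L)-1$ always (the semigroup differs from the monoid only by the class of $\eps$, and only when that class is a singleton), we get $\sig(L)\ge n-1$. Equivalently, one can argue directly on $\cA$: every state is reachable from $q_0$, so the images $t(q_0)$ over all transformations $t$ performed by words of $\Sig^*$ exhaust $Q$; the empty word contributes only the identity, which maps $q_0$ to $q_0$, so each of the $n-1$ states of $Q\setminus\{q_0\}$ must be the image of $q_0$ under a transformation performed by some non-empty word, and since a transformation determines a single image of $q_0$, there are at least $n-1$ distinct transformations in the semigroup.

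There is no genuine obstacle in this proposition; the only point requiring a little care is the bookkeeping around $\eps$ — the distinction between the syntactic semigroup $\Sig^+/\lraL$ and the syntactic monoid $\Sig^*/\lraL$ — which is precisely what the relation $\sig(L)\in\{\mu(L)-1,\mu(L)\}$ encodes. Both bounds are in fact tight: a unary language of quotient complexity $n$ such as $\Sig^*a^{n-1}$ achieves $\sig(L)=n-1$, and, by Theorem~\ref{thm:salomaa}, a ternary automaton whose words of $\Sig^+$ realize all $n^n$ transformations (with final states chosen so that all $n$ states are reachable and pairwise distinguishable) yields a language with $\sig(L)=n^n$.
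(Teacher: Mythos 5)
Your proposal is correct and follows essentially the same route as the paper: the upper bound via the embedding of the syntactic semigroup into the $n^n$ transformations of the state set, and the lower bound via reachability of the $n-1$ non-initial states by non-empty words (your ``equivalently'' argument is precisely the paper's, and your Myhill-refines-Nerode variant is just a restatement of it through $\mu(L)$). The tightness examples you give ($a^{n-1}a^*$ and Theorem~\ref{thm:salomaa}) also match the paper's.
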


\begin{proof}
Since every state other than the initial state has to be reachable from the initial state by a non-empty word, there must be at least $n-1$ transformations.
If $\Sig=\{a\}$ and $L=a^{n-1}a^*$, then  $\kappa(L)=n$, and $\sigma(L)=n-1$.
Thus the lower bound $n-1$ is achievable. It is evident that $n^n$ is an upper bound, and by Theorem~\ref{thm:salomaa} this upper bound is achievable if $|\Sig|\ge 3$.
\qed
\end{proof}

If one of the quotients of $L$ is $\emp$ (respectively, $\eps$, $\Sig^*$, $\Sig^+$), then we say that $L$ \emph{has $\emp$} (respectively, $\eps$, $\Sig^*$, $\Sig^+$).
A quotient $L_w$ of a language $L$ is \emph{\ur{} (ur)}~\cite{Brz10} if $L_x=L_w$ implies that $x=w$.
If $L_{wa}$ is \ur{} for $a\in \Sig$,  then so is $L_w$. 
Thus, if $L$ has a \ur{} quotient, then $L$ itself is \ur{} by the empty word, \ie, the minimal automaton of $L$ is \emph{non-returning}.

\begin{theorem}[Special Quotients]
\label{thm:specialquotients}
Let $L$ be any language with $\kappa(L)=n$. \\
1. If $L$ has $\emp$ or $\Sig^*$, then $\sig(L)\le n^{n-1}$.\\
2. If $L$ has $\eps$ or $\Sig^+$, then $\sig(L)\le n^{n-2}$.\\
3. If $L$ is uniquely reachable, then $\sig(L)\le (n-1)^n$.\\
4. If $L_a$ is uniquely reachable for some $a\in\Sig$, then $\sig(L)\le 1+ (n-2)^n$.\\
Moreover, these effects are cumulative as shown in Table~\ref{tab:special}.
\begin{table}[ht]
\caption{Upper bounds on syntactic complexity for languages with special quotients.}
\begin{center}
$
\begin{array}{| c | c | c | c | c | c | c |}   
\hline
\hspace{.25cm}\emp \hspace{.25cm} &\hspace{.1cm}\Sig^*\hspace{.1cm} & 
\hspace{.25cm}\eps \hspace{.25cm}&  \hspace{.1cm}\Sig^+\hspace{.1cm}  &  &  \mbox{$L$  is ur} &  \mbox{ $L_a$ is ur}
\\
\hline
\surd & &    &    & n^{n-1}  & (n-1)^{n-1} & 1+ (n-3)^{n-2}
\\
\hline
 & \surd &    &    & n^{n-1} & (n-1)^{n-1} & 1+ (n-3)^{n-2}
\\
\hline
\surd  &  & \surd  &    & n^{n-2} & (n-1)^{n-2} & 1+ (n-4)^{n-2}
\\
\hline
 & \surd  &    & \surd   &  n^{n-2} & (n-1)^{n-2} & 1+ (n-4)^{n-2}
\\
\hline
 \surd  & \surd   &    &  &  n^{n-2} & (n-1)^{n-2} & 1+ (n-4)^{n-2}
\\
\hline
 \surd  & \surd &      &\surd  &  n^{n-3} & (n-1)^{n-3} & 1+ (n-5)^{n-2}
\\
\hline
 \surd  &\surd   & \surd   &  & n^{n-3} & (n-1)^{n-3} & 1+ (n-5)^{n-2}
\\
\hline
 \surd  &\surd   & \surd   &\surd  & n^{n-4} & (n-1)^{n-4} & 1+ (n-6)^{n-2}
 \\
\hline
\end{array}
$
\end{center}
\label{tab:special}
\end{table}
\end{theorem}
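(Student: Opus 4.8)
The plan is to argue entirely inside the quotient automaton $\cA=(Q,\Sig,\delta,q_0,F)$ of $L$, where $|Q|=n$; by the isomorphism recalled above, $\sig(L)$ is the number of transformations of the $n$-element set $Q$ induced by words of $\Sig^+$. Each hypothesis translates into a structural restriction on $\cA$, every induced transformation must obey it, and the bound follows by counting. I will use one counting fact repeatedly: a transformation of an $n$-set that has prescribed images on $k$ of its points (all lying in a fixed subset $S$ of size $n-j$) and sends the remaining points into $S$ exists in exactly $(n-j)^{n-k}$ ways.

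For part~1, if $L$ has $\emp$ then $\emp_a=\emp$ for all $a\in\Sig$, so the state $\emp$ is a sink and every transformation induced by a nonempty word fixes it; there are $n^{n-1}$ of those. The case ``$L$ has $\Sig^*$'' is identical since $\Sig^*_a=\Sig^*$. For part~2, $L$ having $\eps$ also gives $\emp$ (because $\eps_a=\emp$), and $\delta(\eps,a)=\emp$ for every $a$, so every induced transformation fixes $\emp$ and sends $\eps$ to $\emp$; two images are then prescribed, leaving $n^{n-2}$. The case $\Sig^+$ is the same with $(\Sig^+,\Sig^*)$ playing the role of $(\eps,\emp)$, using $\Sig^+_a=\Sig^*$.

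For part~3, ``$L$ is \ur{}'' means $\cA$ is non-returning, hence $q_0$ has in-degree $0$; no nonempty word can then induce a transformation whose image meets $q_0$, so all $n$ states go into $Q\setminus\{q_0\}$, and $(n-1)^n$ transformations remain. For part~4, $L_a$ \ur{} forces $L$ \ur{} by applying the propagation rule ``$L_{wa}$ \ur{} $\Rightarrow$ $L_w$ \ur{}'' to $L_{\eps a}$, so $q_0$ again has in-degree $0$; moreover $L_x=L_a$ implies $x=a$, so $L_a$ is reached from $q_0$ only by the word $a$ and its unique in-edge is $q_0\xrightarrow{\,a\,}L_a$. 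If $w\in\Sig^+$, $w\neq a$, and $t_w$ sent some state to $L_a$, then writing $w=w'c$ the terminal transition of the corresponding path is that unique in-edge, forcing $c=a$ and forcing the path to reach $q_0$ after $w'$; since $q_0$ has in-degree $0$ this gives $w'=\eps$ and $w=a$, a contradiction. Hence every $t_w$ with $w\neq a$ maps $Q$ into $Q\setminus\{q_0,L_a\}$ (at most $(n-2)^n$ transformations), and adding the single transformation $t_a$ gives $1+(n-2)^n$.

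For cumulativity one imposes all applicable restrictions at once: each of $\emp,\Sig^*$ prescribes one image, each of $\eps,\Sig^+$ prescribes two (they pull in $\emp,\Sig^*$ respectively), ``$L$ is \ur{}'' removes $q_0$ from every image, and ``$L_a$ is \ur{}'' splits off $t_a$ and removes $L_a$ from the images of all remaining words; each entry of Table~\ref{tab:special} should then come out as the product of the independent choices left over. The step I expect to be the real work is the last column: one must first check that the distinguished states involved are pairwise distinct — e.g.\ $L_a$ \ur{} can be neither $\emp$ nor $\Sig^*$ (a sink is reached by infinitely many words) and $q_0$ differs from every sink when $n\ge2$ — and then, with the sink structure present, pin down exactly how many states stay free for the transformations induced by words other than $a$; reconciling the ``forced in-edge at $L_a$'' phenomenon with the sinks is the delicate point, whereas the remaining entries follow immediately from parts~1--4.
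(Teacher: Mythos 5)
Your treatment of parts 1--4 is correct and matches the paper's proof essentially verbatim: sink states are fixed points of every induced transformation, $\eps$ (resp.\ $\Sig^+$) additionally has its image prescribed, unique reachability of $L$ removes $q_0$ from all images, and unique reachability of $L_a$ isolates $t_a$ and removes $L_a$ from the images of all other transformations; your argument that the only in-edge of $L_a$ is $q_0\xrightarrow{a}L_a$ is in fact more careful than the paper's one-line assertion. For the table, note that the paper offers no proof either, and your ``product of independent choices'' heuristic does not reproduce the last column: for instance, combining $\emp$ with ``$L_a$ is \ur{}'' that way yields $1+(n-2)^{n-1}$ (one fixed image, $n-1$ free positions, $n-2$ allowed targets), whereas the table claims $1+(n-3)^{n-2}$, so the delicate point you flagged is real and remains unresolved on both sides.
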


\begin{proof}
1. Since $\emp_a=\emp$ for all $a\in\Sig$, there are only $n-1$ states in the quotient automaton with which one can distinguish two transformations. Hence there are at most $n^{n-1}$ such transformations. If $L$ has $\Sig^*$, then $\Sig^*_a=\Sig^*$, for all $a\in\Sig$, and the same argument applies.

2. Since $\eps_a=\emp$ for all $a\in\Sig$, $L$ has $\emp$ if $L$ has $\eps$. Now there are two states that do not contribute to distinguishing among different transformations. 
Dually, $\Sig^+_a=\Sig^*$ for all $a\in\Sig$, and the same argument applies.

3.  If $L$ is uniquely reachable then $L_w= L$ implies $w=\eps$. Thus $L$ does not appear as a result of any transformation by a word in $\Sig^+$, and there remain only $n-1$ choices for each of the $n$ states. 

4. If $L_a$ is \ur{}, then so is $L$. Hence $L$ never appears as a result of a transformation by a word in $\Sig^+$, and $L_a$ appears only in one transformation. Therefore there can be at most $(n-2)^n$ other transformations.
\qed
\end{proof}

\section{Right Ideals and Prefix-Closed Languages}
In this section we characterize the syntactic complexity of right ideals. The automaton defined below plays an important role in this theory.
\begin{definition}
\label{def:RightIdeal}
For $n\ge 4$, define the automaton 
$$\cA_n=(\{0,1,\ldots,n-1\}, \{a,b,c,d\},\delta,0, \{n-1\}),$$ 
where $a=(0,1,\ldots,n-2)$, 
$b=(0,1)$, 
$c={n-2 \choose 0}$,
and $d={n-2 \choose n-1}$.
The transition function
$\delta$ is then defined using these transformations.
The automaton so defined accepts a right ideal and is minimal; it is depicted in Fig.~\ref{fig:RightIdeal}.
\end{definition}
\begin{figure}[hbt]
\begin{center}
\setlength{\unitlength}{0.00052493in}
\begingroup\makeatletter\ifx\SetFigFont\undefined%
\gdef\SetFigFont#1#2#3#4#5{%
  \reset@font\fontsize{#1}{#2pt}%
  \fontfamily{#3}\fontseries{#4}\fontshape{#5}%
  \selectfont}%
\fi\endgroup%
{\renewcommand{\dashlinestretch}{30}
\begin{picture}(7725,2080)(0,-10)
\put(3334,147){\makebox(0,0)[lb]{\smash{{\SetFigFont{8}{9.6}{\familydefault}{\mddefault}{\updefault}$a,c$}}}}
\put(6257.000,1577.333){\arc{333.333}{2.2143}{7.2105}}
\blacken\path(6394.638,1540.417)(6357.000,1444.000)(6435.107,1511.913)(6394.638,1540.417)
\put(2747.000,1577.333){\arc{333.333}{2.2143}{7.2105}}
\blacken\path(2884.638,1540.417)(2847.000,1444.000)(2925.107,1511.913)(2884.638,1540.417)
\put(1667.000,1577.333){\arc{333.333}{2.2143}{7.2105}}
\blacken\path(1804.638,1540.417)(1767.000,1444.000)(1845.107,1511.913)(1804.638,1540.417)
\put(5187.000,1547.333){\arc{333.333}{2.2143}{7.2105}}
\blacken\path(5324.638,1510.417)(5287.000,1414.000)(5365.107,1481.913)(5324.638,1510.417)
\put(7362.000,1637.333){\arc{333.333}{2.2143}{7.2105}}
\blacken\path(7499.638,1600.417)(7462.000,1504.000)(7540.107,1571.913)(7499.638,1600.417)
\put(591,1129){\ellipse{630}{630}}
\put(2755,1129){\ellipse{630}{630}}
\put(1685,1127){\ellipse{630}{630}}
\put(6275,1129){\ellipse{630}{630}}
\put(7354,1137){\ellipse{630}{630}}
\put(5187,1129){\ellipse{630}{630}}
\put(7357,1138){\ellipse{720}{720}}
\path(12,1129)(282,1129)
\blacken\path(162.000,1099.000)(282.000,1129.000)(162.000,1159.000)(162.000,1099.000)
\path(1992,1129)(2442,1129)
\blacken\path(2322.000,1099.000)(2442.000,1129.000)(2322.000,1159.000)(2322.000,1099.000)
\path(3072,1129)(3522,1129)
\blacken\path(3402.000,1099.000)(3522.000,1129.000)(3402.000,1159.000)(3402.000,1099.000)
\path(912,1129)(1362,1129)
\blacken\path(1242.000,1099.000)(1362.000,1129.000)(1242.000,1159.000)(1242.000,1099.000)
\path(6589,1129)(6987,1129)
\blacken\path(6867.000,1099.000)(6987.000,1129.000)(6867.000,1159.000)(6867.000,1099.000)
\path(1407,971)(852,971)
\blacken\path(972.000,1001.000)(852.000,971.000)(972.000,941.000)(972.000,1001.000)
\path(4422,1137)(4872,1137)
\blacken\path(4752.000,1107.000)(4872.000,1137.000)(4752.000,1167.000)(4752.000,1107.000)
\path(5495,1129)(5945,1129)
\blacken\path(5825.000,1099.000)(5945.000,1129.000)(5825.000,1159.000)(5825.000,1099.000)
\path(6124,844)(6123,844)(6122,842)
	(6119,840)(6114,837)(6108,833)
	(6099,827)(6089,819)(6075,810)
	(6060,800)(6042,787)(6021,774)
	(5999,759)(5974,743)(5947,726)
	(5919,708)(5889,689)(5857,670)
	(5825,650)(5790,629)(5754,609)
	(5717,588)(5679,567)(5638,545)
	(5596,524)(5552,502)(5505,480)
	(5457,457)(5405,434)(5351,411)
	(5293,387)(5233,363)(5169,339)
	(5102,315)(5033,291)(4962,267)
	(4890,244)(4820,223)(4752,203)
	(4688,185)(4628,169)(4574,154)
	(4525,141)(4482,129)(4444,119)
	(4411,110)(4383,103)(4359,96)
	(4339,91)(4322,86)(4307,81)
	(4294,78)(4282,74)(4271,71)
	(4260,69)(4248,66)(4235,63)
	(4220,60)(4202,57)(4181,54)
	(4156,51)(4127,47)(4093,44)
	(4053,39)(4008,35)(3957,31)
	(3900,27)(3837,22)(3769,19)
	(3697,15)(3621,13)(3544,12)
	(3467,12)(3391,14)(3320,16)
	(3253,19)(3191,23)(3136,26)
	(3087,30)(3044,33)(3006,37)
	(2975,40)(2948,43)(2926,46)
	(2908,48)(2893,51)(2881,53)
	(2871,55)(2862,57)(2854,60)
	(2846,62)(2837,65)(2827,68)
	(2815,72)(2800,76)(2782,82)
	(2760,88)(2733,95)(2702,103)
	(2664,113)(2621,125)(2572,138)
	(2517,152)(2456,169)(2389,187)
	(2317,207)(2242,229)(2165,252)
	(2092,275)(2019,299)(1949,322)
	(1880,346)(1814,370)(1751,394)
	(1690,417)(1632,440)(1576,463)
	(1522,486)(1470,508)(1420,530)
	(1372,552)(1325,574)(1279,595)
	(1235,616)(1192,637)(1150,658)
	(1110,678)(1071,698)(1033,717)
	(998,736)(964,753)(932,770)
	(903,786)(876,800)(852,813)
	(830,825)(811,835)(796,844)
	(783,851)(773,857)(765,861)(754,867)
\blacken\path(873.713,835.875)(754.000,867.000)(844.982,783.201)(873.713,835.875)
\put(536,1062){\makebox(0,0)[lb]{\smash{{\SetFigFont{8}{9.6}{\rmdefault}{\mddefault}{\updefault}$0$}}}}
\put(1624,1062){\makebox(0,0)[lb]{\smash{{\SetFigFont{8}{9.6}{\rmdefault}{\mddefault}{\updefault}$1$}}}}
\put(2711,1062){\makebox(0,0)[lb]{\smash{{\SetFigFont{8}{9.6}{\rmdefault}{\mddefault}{\updefault}$2$}}}}
\put(7104,1074){\makebox(0,0)[lb]{\smash{{\SetFigFont{7}{8.4}{\familydefault}{\mddefault}{\updefault}$n-1$}}}}
\put(6010,1066){\makebox(0,0)[lb]{\smash{{\SetFigFont{7}{8.4}{\familydefault}{\mddefault}{\updefault}$n-2$}}}}
\put(2120,1227){\makebox(0,0)[lb]{\smash{{\SetFigFont{8}{9.6}{\familydefault}{\mddefault}{\updefault}$a$}}}}
\put(3192,1249){\makebox(0,0)[lb]{\smash{{\SetFigFont{8}{9.6}{\familydefault}{\mddefault}{\updefault}$a$}}}}
\put(5622,1249){\makebox(0,0)[lb]{\smash{{\SetFigFont{8}{9.6}{\familydefault}{\mddefault}{\updefault}$a$}}}}
\put(957,1264){\makebox(0,0)[lb]{\smash{{\SetFigFont{8}{9.6}{\familydefault}{\mddefault}{\updefault}$a,b$}}}}
\put(4917,1084){\makebox(0,0)[lb]{\smash{{\SetFigFont{7}{8.4}{\familydefault}{\mddefault}{\updefault}$n-3$}}}}
\put(409,1834){\makebox(0,0)[lb]{\smash{{\SetFigFont{8}{9.6}{\familydefault}{\mddefault}{\updefault}$c,d$}}}}
\put(1504,1842){\makebox(0,0)[lb]{\smash{{\SetFigFont{8}{9.6}{\familydefault}{\mddefault}{\updefault}$c,d$}}}}
\put(2517,1842){\makebox(0,0)[lb]{\smash{{\SetFigFont{8}{9.6}{\familydefault}{\mddefault}{\updefault}$b,c,d$}}}}
\put(4909,1834){\makebox(0,0)[lb]{\smash{{\SetFigFont{8}{9.6}{\familydefault}{\mddefault}{\updefault}$b,c,d$}}}}
\put(6949,1894){\makebox(0,0)[lb]{\smash{{\SetFigFont{8}{9.6}{\familydefault}{\mddefault}{\updefault}$a,b,c,d$}}}}
\put(3875,1070){\makebox(0,0)[lb]{\smash{{\SetFigFont{8}{9.6}{\familydefault}{\mddefault}{\updefault}$\cdots$}}}}
\put(1136,754){\makebox(0,0)[lb]{\smash{{\SetFigFont{8}{9.6}{\familydefault}{\mddefault}{\updefault}$b$}}}}
\put(4550,1242){\makebox(0,0)[lb]{\smash{{\SetFigFont{8}{9.6}{\familydefault}{\mddefault}{\updefault}$a$}}}}
\put(6184,1857){\makebox(0,0)[lb]{\smash{{\SetFigFont{8}{9.6}{\familydefault}{\mddefault}{\updefault}$b$}}}}
\put(6732,1249){\makebox(0,0)[lb]{\smash{{\SetFigFont{8}{9.6}{\familydefault}{\mddefault}{\updefault}$d$}}}}
\put(592.000,1562.333){\arc{333.333}{2.2143}{7.2105}}
\blacken\path(729.638,1525.417)(692.000,1429.000)(770.107,1496.913)(729.638,1525.417)
\end{picture}
}
\end{center}
\caption{Automaton $\cA_n$ of a right ideal with $n^{n-1}$ transformations.} 
\label{fig:RightIdeal}
\end{figure}

\begin{theorem}[Right Ideals and Prefix-Closed Languages)]
\label{thm:rightideals}
Let $L\subseteq\Sig^*$ have quotient complexity $n$.
If $L$ is a right ideal or a prefix-closed language, then the syntactic complexity of $L$ is less than or equal to $ n^{n-1}$. Moreover, the bound is tight for $n=1$  if $|\Sig|\ge 1$, for $n=2$ if $|\Sig|\ge 2$, for $n=3$ if $|\Sig|\ge 3$, and for $n\ge 4$  if $|\Sig|\ge 4$.
\end{theorem}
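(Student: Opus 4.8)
The plan is to prove the statement in two parts: the upper bound, and the matching lower bound via the automaton $\cA_n$ of Definition~\ref{def:RightIdeal}.

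For the \emph{upper bound}, I would first recall that a right ideal $L=L\Sig^*$ always has $\Sig^*$ as a quotient: indeed, if $w\in L$ then $L_w=\Sig^*$. Hence Part~1 of Theorem~\ref{thm:specialquotients} applies directly and gives $\sig(L)\le n^{n-1}$. For a prefix-closed language $L$, its complement $\ol{L}$ is a right ideal, and $L$ and $\ol{L}$ have the same syntactic semigroup (complementation does not change the syntactic congruence, only which classes are ``accepting''); equivalently, $L$ prefix-closed has $\emp$ as a quotient (if $w\notin L$ then no extension of $w$ is in $L$, so $L_w=\emp$), and again Part~1 of Theorem~\ref{thm:specialquotients} gives $\sig(L)\le n^{n-1}$. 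This part is routine.

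For the \emph{lower bound} with $n\ge 4$ and $|\Sig|\ge 4$, the heart of the argument is to show that the transformation semigroup of $\cA_n$ from Definition~\ref{def:RightIdeal} has exactly $n^{n-1}$ elements, i.e.\ that the transformations induced by $a,b,c,d$ on $Q=\{0,\dots,n-1\}$ generate \emph{every} transformation of $Q$ that fixes the absorbing state $n-1$ (every transformation $t$ with $(n-1)t=n-1$); there are exactly $n^{n-1}$ such transformations, and since $n-1$ is a sink, no word can do better, so the count is tight. The strategy mirrors Theorem~\ref{thm:salomaa}: the letters $a=(0,1,\dots,n-2)$ and $b=(0,1)$ generate, by Theorem~\ref{thm:piccard}, the full symmetric group on $\{0,1,\dots,n-2\}$ (fixing $n-1$); the letter $c={n-2\choose 0}$ is a ``returning'' transformation that, together with this symmetric group, generates all transformations of $\{0,1,\dots,n-2\}$ into itself (the analogue of $T_{n-1}$, via the argument of Theorem~\ref{thm:salomaa} applied to the set $\{0,\dots,n-2\}$); and finally the letter $d={n-2\choose n-1}$ lets state $n-2$ be mapped to the sink $n-1$. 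I would then argue that by composing permutations of $\{0,\dots,n-2\}$ before and after applications of $c$ and $d$ one can realize an arbitrary map sending any chosen subset of $\{0,\dots,n-2\}$ into $n-1$ and the rest anywhere in $\{0,\dots,n-2\}$ — i.e.\ every transformation fixing $n-1$. Concretely: given a target $t$ fixing $n-1$, split $Q\setminus\{n-1\}$ into the states $t$ sends to $n-1$ and the rest; use permutations and copies of $d$ to route the first group to $n-1$ one at a time (conjugating $d$ by permutations to move any single state into position $n-2$, then applying $d$), and use the $T_{n-1}$-generation (from $a,b,c$) to realize the required map on the survivors. Checking that $\cA_n$ accepts a right ideal (equivalently, $\delta(q,\sigma)=n-1$ whenever $q=n-1$, which holds since none of $a,b,c,d$ moves $n-1$) and that it is minimal (all states reachable from $0$, and all pairwise distinguishable, using $d$ and the cycle $a$) is a short separate check.

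The small cases $n=1,2,3$ are handled directly: for $n=1$ any non-empty word suffices, for $n=2$ one needs a binary alphabet to get the two transformations $\one$ and ${0\choose 1}$ (giving $\sig=2=2^1$), and for $n=3$ a ternary alphabet suffices to generate all $9=3^2$ transformations fixing the sink. I expect the \textbf{main obstacle} to be the generation argument for $n\ge 4$: rigorously showing that $\{a,b,c,d\}$ generates precisely the $n^{n-1}$ transformations fixing the sink. The delicate point is organizing the induction/decomposition so that routing several states to the sink (via repeated conjugated applications of $d$) does not interfere with the simultaneous requirement of realizing an arbitrary transformation on the non-sink survivors; the clean way is to first apply a permutation collecting the ``to-sink'' states, collapse them with $d$ (and possibly $c$), and only then realize the residual transformation on $\{0,\dots,n-2\}$ using the already-established fact that $a,b,c$ generate all of $T_{n-1}$ on that set. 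I would also need to double-check the exact alphabet-size thresholds claimed (that four letters genuinely suffice and — implicitly — that fewer do not give the bound for general $n$, though the theorem as stated only asserts sufficiency).
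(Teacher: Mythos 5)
Your proposal is correct and follows essentially the same route as the paper: the upper bound comes from the $\Sig^*$ quotient via Theorem~\ref{thm:specialquotients}(1), and the lower bound from showing that $a,b,c$ generate every transformation of $\{0,\dots,n-2\}$ by Theorem~\ref{thm:salomaa} while $d$ handles the states sent to the sink $n-1$, with the small cases treated separately. The only difference is in one step: where you route the to-sink states to $n-1$ one at a time via conjugates of $d$, the paper does it in a single shot as $t=srdr$, choosing (by pigeonhole) a value $j$ missing from the image of $t$, letting $s$ be $t$ with every $n-1$ in the image replaced by $j$, and taking $r=(j,n-2)$ --- both decompositions are valid.
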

\begin{proof}
Since every prefix-closed language other than $\Sigma^*$ is the complement of a right ideal, and complementation preserves syntactic complexity, it suffices to consider only right ideals.

If $L$ is a right ideal, then it has $\Sig^*$ as a quotient. By Theorem~\ref{thm:specialquotients}, we have
$\sig(L)\le n^{n-1}$. 

Next we prove that the language $L=L(\cA_n)$ accepted by the automaton of Fig.~\ref{fig:RightIdeal}
meets this bound.
Consider any transformation $t$ of the form
$$
t=\left( \begin{array}{ccccccc}
0 & 1 & 2 &  \cdots & n-3 & n-2 & n-1 \\
i_0 & i_1 & i_2 &  \cdots & i_{n-3} & i_{n-2} & n-1
\end{array} \right ),
$$
where $i_k\in\{0,1,\ldots, n-1\}$ for $0\le k\le n-2$.
There are two cases:
\be
\item Suppose $i_k\ne n-1$ for all $k$, $0\le k\le n-2$. By Theorem~\ref{thm:salomaa}, 
since all the images of the first $n-1$ states are in the set $\{0,1,\ldots, n-2\}$, transformation $t$ can be performed by $\cA_n$.
\item If $i_h = n-1$ for some $h$, $0\le h\le n-2$, then by the pigeon-hole principle, there exists some $j$,
$0\le j\le n-2$ such that $i_k\ne j$ for all $k$, $0\le k\le n-2$.
\ee

Define $i'_k$ for all $0\le k\le n-2$ as follows:
\begin{equation*} 
i'_k = \left\{ 
\begin{array}{cc} 
j,	&  \quad \text{if } i_k=n-1;\\ 
i_k,   &  \quad  \text{if } i_k\ne n-1.\\ 
\end{array} \right. 
\end{equation*}
Then let 
$$
s=\left( \begin{array}{cccccccc}
0 & 1 & 2 & 3 & \cdots & n-3 & n-2 & n-1 \\
i'_0 & i'_1 & i'_2 & i'_3 & \cdots & i'_{n-3} & i'_{n-2} & n-1
\end{array} \right ),$$
Also, let $r=(j,n-2)$.
Since all the images of the first $n-1$ states in $s$ and $r$ are in the set $\{0,1,\ldots, n-2\}$,
by Theorem~\ref{thm:salomaa}, $s$ and $r$ can be performed by $\cA_n$.

We show now that $t=srdr$, which implies that $t$ can also be performed by~$\cA_n$.
If $t$ maps $k$ to $n-1$, then $s$ maps $k$ to $j$, $r$ maps $j$ to $n-2$, $d$ maps $n-2$ to $n-1$, and $r$ maps $n-1$ to $n-1$.
If $t$ maps $k$ to $n-2$, then $s$ maps $k$ to $n-2$, $r$ maps $n-2$ to $j$, $d$ maps $j$ to $j$, and $r$ maps $j$ to $n-2$.
If $t$ maps $k$ to $i_k < n-2$, then so does $srdr$.
Hence in all cases the mapping performed by $t$ is the same as that of $srdr$.

Since there are $n^{n-1}$ transformations like $t$,  $L(\cA_n)$ meets the bound.

Now we consider the values $n\le 5$. The bounds claimed below have all been verified by a computer program.

\bd
\item[n=1:]
There is only one type of right ideal with $n=1$, namely  $L=\Sig^*$, and its syntactic complexity is $\sig(L)=1$. Thus the bound $1^0=1$ is tight  for $|\Sig|\ge 1$.

\item[n=2:]
If $|\Sig|=1$,  there is only one  right ideal, $L=aa^*$, and $\sig(L)=1$.\\
If $|\Sig|=2$, then  $b^*a(a+b)^*$ meets the bound $2^1=2$ of the theorem.

\item[n=3:]
If $|\Sig|=1$, there is only one right ideal, $L=aaa^*$, and $\sig(L)=2$. 

For $n=3$,  inputs $a$ and $b$ of the automaton of Fig.~\ref{fig:RightIdeal} coincide.

If $|\Sig|=2$, we have verified that $\sig(L)\le 7$ for all right ideals, and the language of $\cA_3$  restricted to input alphabet $\{a,d\}$ meets the bound 7.

If $|\Sig|=3$, then the language of $\cA_3$ restricted to input alphabet $\{a,c,d\}$ meets the bound $3^2=9$ of the theorem.

\item[n= 4:]
If $|\Sig|=1$, there is only one right ideal, $L=aaaa^*$, and $\sig(L)=3$.\\
For $|\Sig|=2$, we have verified that $\sig(L)\le 31$  for all right ideals $L$. 
The bound is reached with the inputs $a:[1,2,0,3]$ and $b:[1,0,3,3]$.
\\
For $|\Sig|=3$, we have verified that $\sig(L)\le 61$ for all right ideals $L$, and
$\cA_4$ restricted to input alphabet $\{a,c,d\}$ meets this bound.
\item[n=5:]
For $|\Sig|=2$, we have verified that $\sig(L)\le 167$  for all right ideals $L$. 
The bound is reached with the inputs $a:[0,1,0,2,4]$ and $b:[1,3,2,4,4]$, or with
$a:[0,0,1,2,4]$ and $b:[2,3,0,4.4]$.
For $|\Sig|=3$, we have verified that $\sig(L)\le 545$  for all right ideals. 
The bound is reached with the inputs $a:[0,0,1,3,4]$, $b:[2,0,3,1,4]$, and $c:[3,1,2,4,4]$.
\qed
\ed
\end{proof}

Table~\ref{tab:RISummary} summarizes our result for right ideals.
All the numbers shown are tight upper bounds.  In general, there are many solutions with the same complexity.
\begin{table}[ht]
\caption{Syntactic complexity bounds for right ideals.}
\label{tab:RISummary}
\begin{center}
$
\begin{array}{| c ||c|c| c| c|c|c|c|}    
\hline
\ \  \ \ &\ \ n=1 \ \ &\ \ n=2 \ \ &\ \ n=3 \  \ & \   n=4 \ \ 
&  \ n=5 \ &\ \ \ldots \ \ 
&\  n=n \
  \\
\hline  \hline
  |\Sig|=1
& \bf 1&\bf 1  &\bf 2&\bf  3  	&\bf 4 &   \ldots	& \bf n-1 \\
\hline
|\Sig|=2 
& - & \bf 2 &\bf 7 &\bf 31  		&\bf 167 & \ldots  	&  \\
\hline
|\Sig|=3
& - &-  &\bf 9  &\bf 61 	&\bf 545 &   \ldots   	& \\
\hline
|\Sig|=4
& - &-  & -  &\bf 64 	&\bf 625 & \ldots  	&\bf n^{n-1}  \\
\hline
\hline
\end{array}
$
\end{center}
\end{table}

It is interesting to note that for our right ideal $L$ with maximal syntactic complexity, the reverse language has maximal state complexity. Recall that the reverse $w^R$ of a word $w$ is defined inductively by
$\eps^R=\eps$, $(au)^R=u^Ra$. The reverse of a language $L$ is $L^R=\{w^R\mid w\in R\}$.
It was shown in~\cite{BJL10} that the reverse of a right ideal with $n$ quotients has at most $2^{n-1}$ quotients, and that this bound can be met by a binary automaton. 
We now prove that automaton $\cA_n'$, which is $\cA_n$ restricted to inputs $a$ and $d$,  is another example of a binary automaton that meets the $2^{n-1}$ bound for reversal of right ideals.
 The nondeterministic automaton  $\cN_n$ obtained by reversing  $\cA_n'$ is shown in Fig.~\ref{fig:RightIdealB}. 
 
\begin{figure}[h]
\begin{center}
\setlength{\unitlength}{0.00052493in}
\begingroup\makeatletter\ifx\SetFigFont\undefined%
\gdef\SetFigFont#1#2#3#4#5{%
  \reset@font\fontsize{#1}{#2pt}%
  \fontfamily{#3}\fontseries{#4}\fontshape{#5}%
  \selectfont}%
\fi\endgroup%
{\renewcommand{\dashlinestretch}{30}
\begin{picture}(7715,2035)(0,-10)
\put(5475,1286){\makebox(0,0)[lb]{\smash{{\SetFigFont{8}{9.6}{\familydefault}{\mddefault}{\updefault}$a$}}}}
\put(1476.000,1592.000){\arc{332.800}{2.2155}{7.2093}}
\path(1613.379,1555.062)(1576.000,1459.000)(1654.187,1526.170)
\put(4996.000,1562.000){\arc{332.800}{2.2155}{7.2093}}
\path(5133.379,1525.062)(5096.000,1429.000)(5174.187,1496.170)
\put(7103.000,1577.000){\arc{332.800}{2.2155}{7.2093}}
\path(7240.379,1540.062)(7203.000,1444.000)(7281.187,1511.170)
\put(381.000,1614.000){\arc{332.800}{2.2155}{7.2093}}
\path(518.379,1577.062)(481.000,1481.000)(559.187,1548.170)
\put(2564,1144){\ellipse{630}{630}}
\put(1494,1142){\ellipse{630}{630}}
\put(6084,1144){\ellipse{630}{630}}
\put(4996,1144){\ellipse{630}{630}}
\put(7116,1147){\ellipse{630}{630}}
\put(368,1139){\ellipse{720}{720}}
\put(365,1134){\ellipse{630}{630}}
\blacken\path(1921.000,1174.000)(1801.000,1144.000)(1921.000,1114.000)(1921.000,1174.000)
\path(1801,1144)(2251,1144)
\blacken\path(3001.000,1174.000)(2881.000,1144.000)(3001.000,1114.000)(3001.000,1174.000)
\path(2881,1144)(3331,1144)
\blacken\path(841.000,1174.000)(721.000,1144.000)(841.000,1114.000)(841.000,1174.000)
\path(721,1144)(1171,1144)
\blacken\path(6518.000,1174.000)(6398.000,1144.000)(6518.000,1114.000)(6518.000,1174.000)
\path(6398,1144)(6796,1144)
\blacken\path(4351.000,1182.000)(4231.000,1152.000)(4351.000,1122.000)(4351.000,1182.000)
\path(4231,1152)(4681,1152)
\blacken\path(5424.000,1174.000)(5304.000,1144.000)(5424.000,1114.000)(5424.000,1174.000)
\path(5304,1144)(5754,1144)
\path(7703,1144)(7433,1144)
\path(7553.000,1174.000)(7433.000,1144.000)(7553.000,1114.000)
\blacken\path(5919.896,750.608)(6001.000,844.000)(5885.488,799.761)(5919.896,750.608)
\path(6001,844)(5991,837)(5985,833)
	(5976,827)(5966,819)(5952,810)
	(5937,800)(5919,787)(5898,774)
	(5876,759)(5851,743)(5824,726)
	(5796,708)(5766,689)(5734,670)
	(5702,650)(5667,629)(5631,609)
	(5594,588)(5556,567)(5515,545)
	(5473,524)(5429,502)(5382,480)
	(5334,457)(5282,434)(5228,411)
	(5170,387)(5110,363)(5046,339)
	(4979,315)(4910,291)(4839,267)
	(4767,244)(4697,223)(4629,203)
	(4565,185)(4505,169)(4451,154)
	(4402,141)(4359,129)(4321,119)
	(4288,110)(4260,103)(4236,96)
	(4216,91)(4199,86)(4184,81)
	(4171,78)(4159,74)(4148,71)
	(4137,69)(4125,66)(4112,63)
	(4097,60)(4079,57)(4058,54)
	(4033,51)(4004,47)(3970,44)
	(3930,39)(3885,35)(3834,31)
	(3777,27)(3714,22)(3646,19)
	(3574,15)(3498,13)(3421,12)
	(3344,12)(3268,14)(3197,16)
	(3130,19)(3068,23)(3013,26)
	(2964,30)(2921,33)(2883,37)
	(2852,40)(2825,43)(2803,46)
	(2785,48)(2770,51)(2758,53)
	(2748,55)(2739,57)(2731,60)
	(2723,62)(2714,65)(2704,68)
	(2692,72)(2677,76)(2659,82)
	(2637,88)(2610,95)(2579,103)
	(2541,113)(2498,125)(2449,138)
	(2394,152)(2333,169)(2266,187)
	(2194,207)(2119,229)(2042,252)
	(1969,275)(1896,299)(1826,322)
	(1757,346)(1691,370)(1628,394)
	(1567,417)(1509,440)(1453,463)
	(1399,486)(1347,508)(1297,530)
	(1249,552)(1202,574)(1156,595)
	(1112,616)(1069,637)(1027,658)
	(987,678)(948,698)(910,717)
	(875,736)(841,753)(809,770)
	(780,786)(753,800)(729,813)
	(707,825)(688,835)(673,844)
	(660,851)(650,857)(642,861)
	(637,864)(634,866)(632,867)(631,867)
\put(1433,1077){\makebox(0,0)[lb]{\smash{{\SetFigFont{8}{9.6}{\rmdefault}{\mddefault}{\updefault}$1$}}}}
\put(2520,1077){\makebox(0,0)[lb]{\smash{{\SetFigFont{8}{9.6}{\rmdefault}{\mddefault}{\updefault}$2$}}}}
\put(5819,1081){\makebox(0,0)[lb]{\smash{{\SetFigFont{7}{8.4}{\familydefault}{\mddefault}{\updefault}$n-2$}}}}
\put(4726,1099){\makebox(0,0)[lb]{\smash{{\SetFigFont{7}{8.4}{\familydefault}{\mddefault}{\updefault}$n-3$}}}}
\put(3684,1085){\makebox(0,0)[lb]{\smash{{\SetFigFont{8}{9.6}{\familydefault}{\mddefault}{\updefault}$\cdots$}}}}
\put(1425,1841){\makebox(0,0)[lb]{\smash{{\SetFigFont{8}{9.6}{\familydefault}{\mddefault}{\updefault}$d$}}}}
\put(871,1279){\makebox(0,0)[lb]{\smash{{\SetFigFont{8}{9.6}{\familydefault}{\mddefault}{\updefault}$a$}}}}
\put(1951,1286){\makebox(0,0)[lb]{\smash{{\SetFigFont{8}{9.6}{\familydefault}{\mddefault}{\updefault}$a$}}}}
\put(3016,1286){\makebox(0,0)[lb]{\smash{{\SetFigFont{8}{9.6}{\familydefault}{\mddefault}{\updefault}$a$}}}}
\put(4373,1286){\makebox(0,0)[lb]{\smash{{\SetFigFont{8}{9.6}{\familydefault}{\mddefault}{\updefault}$a$}}}}
\put(3308,162){\makebox(0,0)[lb]{\smash{{\SetFigFont{8}{9.6}{\familydefault}{\mddefault}{\updefault}$a$}}}}
\put(2505,1841){\makebox(0,0)[lb]{\smash{{\SetFigFont{8}{9.6}{\familydefault}{\mddefault}{\updefault}$d$}}}}
\put(4942,1841){\makebox(0,0)[lb]{\smash{{\SetFigFont{8}{9.6}{\familydefault}{\mddefault}{\updefault}$d$}}}}
\put(6861,1089){\makebox(0,0)[lb]{\smash{{\SetFigFont{7}{8.4}{\familydefault}{\mddefault}{\updefault}$n-1$}}}}
\put(322,1849){\makebox(0,0)[lb]{\smash{{\SetFigFont{8}{9.6}{\familydefault}{\mddefault}{\updefault}$d$}}}}
\put(286,1070){\makebox(0,0)[lb]{\smash{{\SetFigFont{8}{9.6}{\rmdefault}{\mddefault}{\updefault}$0$}}}}
\put(6937,1841){\makebox(0,0)[lb]{\smash{{\SetFigFont{8}{9.6}{\familydefault}{\mddefault}{\updefault}$a,d$}}}}
\put(6556,1264){\makebox(0,0)[lb]{\smash{{\SetFigFont{8}{9.6}{\familydefault}{\mddefault}{\updefault}$d$}}}}
\put(2556.000,1592.000){\arc{332.800}{2.2155}{7.2093}}
\path(2693.379,1555.062)(2656.000,1459.000)(2734.187,1526.170)
\end{picture}
}
\end{center}
\caption{Nondeterministic automaton of the reverse of a right ideal.} 
\label{fig:RightIdealB}
\end{figure}

\begin{theorem}[Reverse of Right Ideal]
\label{thm:RightIdealRev}
The reverse of the right ideal $L(\cA'_n)$  has $2^{n-1}$ quotients.
\end{theorem}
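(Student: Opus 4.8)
The plan is to apply the subset construction to the nondeterministic automaton $\cN_n$ of Fig.~\ref{fig:RightIdealB}, to show that exactly $2^{n-1}$ subsets of $\{0,\dots,n-1\}$ are reachable from its initial set and that these are pairwise distinguishable; by minimality this says that $L(\cA'_n)^R$ has $2^{n-1}$ quotients.

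First I would record the transitions of $\cN_n$ explicitly. If $\delta'$ is the transition function of $\cA'_n$, then $\cN_n$ has transition function $\eta(q,x)=\{p\mid \delta'(p,x)=q\}$, initial set $\{n-1\}$ (the former final state), and final set $\{0\}$ (the former initial state). Concretely $\eta(n-1,a)=\{n-1\}$, $\eta(i+1,a)=\{i\}$ for $0\le i\le n-3$, $\eta(0,a)=\{n-2\}$; and $\eta(i,d)=\{i\}$ for $0\le i\le n-3$, $\eta(n-2,d)=\emp$, $\eta(n-1,d)=\{n-2,n-1\}$. Since $n-1\in\eta(n-1,a)$ and $n-1\in\eta(n-1,d)$, state $n-1$ lies in every subset reachable from $\{n-1\}$; hence each reachable subset has the form $S\cup\{n-1\}$ with $S\subseteq\{0,\dots,n-2\}$, which already bounds the number of reachable subsets by $2^{n-1}$. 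Moreover, on the $S$-part the letter $a$ acts as the single $(n-1)$-cycle $s\mapsto s-1\bmod(n-1)$ on $\{0,\dots,n-2\}$, while $d$ simply adjoins the element $n-2$: reading $d$ sends $S\cup\{n-1\}$ to $(S\cup\{n-2\})\cup\{n-1\}$.

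For reachability I would prove, by induction on $|S|$, that $S\cup\{n-1\}$ is reachable for every $S\subseteq\{0,\dots,n-2\}$. The base case $S=\emp$ is the initial subset. For the inductive step, since $a$ acts as a single cycle on $\{0,\dots,n-2\}$, after applying a suitable power of $a$ it suffices to treat the case $n-2\in S$; then $S\setminus\{n-2\}$ has smaller size and is reachable by the induction hypothesis, and one further application of $d$ produces $S\cup\{n-1\}$. Hence all $2^{n-1}$ subsets of the required form are reachable. For distinguishability I would use the fact that, in the determinization of the reverse of a DFA, reading a word $w$ from a subset $X$ reaches an accepting subset of $\cN_n$ if and only if $\delta'(0,w^R)\in X$. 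In $\cA'_n$ every state is reachable from the start state $0$ (state $i\le n-2$ by $a^i$, state $n-1$ by $a^{n-2}d$), so given two distinct reachable subsets $S_1\cup\{n-1\}$ and $S_2\cup\{n-1\}$ I pick $j$ in the symmetric difference of $S_1$ and $S_2$ and a word $u$ with $\delta'(0,u)=j$; then $u^R$ is accepted from exactly one of the two subsets. Therefore the reachable part of the subset automaton is the minimal DFA of $L(\cA'_n)^R$ and has exactly $2^{n-1}$ states.

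The main obstacle is the reachability argument: one must check that the cyclic action of $a$ together with the single ``add $n-2$'' operation of $d$ really generates all of $\{S\cup\{n-1\}\mid S\subseteq\{0,\dots,n-2\}\}$ from the initial subset, and the induction on $|S|$ combined with pre-rotation by a power of $a$ is what makes this go through. The bookkeeping for the reversed transition function (in particular $\eta(n-2,d)=\emp$) and the correct notion of an accepting subset need care but are routine.
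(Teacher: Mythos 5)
Your proposal is correct and follows essentially the same strategy as the paper's proof: apply the subset construction to the reversal NFA $\cN_n$, show that exactly the $2^{n-1}$ subsets containing state $n-1$ are reachable, and distinguish any two of them by a word testing membership of a state in their symmetric difference. The only cosmetic difference is that the paper exhibits an explicit family of words $d(ae_{j-1})(ae_{j-2})\cdots(e_1a)(e_0)$ reaching all these subsets, whereas you obtain reachability by induction on $|S|$ using the cyclic action of $a$ together with the action of $d$ that adjoins $n-2$; both arguments are sound.
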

\begin{proof}
Let $Z$ be the set of  words of the form 
$w=d(ae_{j-1})(ae_{j-2})\cdots (e_{1}a)(e_{0})$, where $0\le j\le n-2$, $e_i\in\{\eps,d\}$ for $1\le i\le j$.
In the subset construction applied to $\cN_n$, word $da^{j}$ reaches  $\{n-2-j,n-1\}$.
For $1\le i < j$, the set of states reached by $w$ includes state $n-2-i$ if and only if $e_i=d$. 
Thus each word $w$ reaches states $n-1$, $n-2-j$, and a different subset of $\{n-2, n-3,\ldots,n-2-(j-1)\}$. 
There are $2^j$ such subsets. 
As $j$ ranges from $0$ to $n-2$, we get $2^0+2^1+\cdots + 2^{n-2}$ different subsets.
Adding the subset $\{n-1\}$ reached by $\eps$, we get
 $2^{n-1}$ reachable subsets, each containing state $n-1$. 
 
Let $K=L^R$. The only state accepting $da^{n-2}$ is  $n-1$ reached by $\eps$. 
 If $S$ and $T$ are two different  subsets of $\{0,\ldots,n-1\}$ reachable by $u$ and $v$, respectively, and $i\in S\setminus T$,
 then $a^{n-2-i}\in K_u\setminus K_v$.
  Hence all the words in $\{\eps\}\cup Z$ are pairwise distinguishable, and $K=L^R$ has $2^{n-1}$
 distinct quotients.
\qed
\end{proof}

\section{Left Ideals and Suffix-Closed Languages}
\label{sec:leftideals12}

We provide strong support for the following conjecture about left ideals and suffix-closed languages:
\medskip

\noin
{\bf Conjecture 1 (Left Ideals and Suffix-Closed Languages).}
{\it If $L$ is a left ideal or a suffix-closed language with quotient complexity $\kappa(L)=n\ge 1$, then its syntactic complexity is less than or equal to $n^{n-1}+n-1$.}
\medskip

We show in this section that this complexity can be reached.
Since every suffix-closed language other than $\Sigma^*$ is the complement of a left ideal and complementation preserves syntactic complexity, it suffices to consider only left ideals.
Before attacking the conjecture itself, we prove some auxiliary results.

First we recall a result of Restivo and Vaglica~\cite{ReVa10}.
Consider a  semiautomaton $\cS=(P\cup\{0\},\Sig, \delta)$, where $0$ is a sink state, meaning that $\delta(0,a)=0$ for all $a\in \Sig$, and $P$ is strongly connected. Such a semiautomaton is \emph{uniformly minimal} if the automaton
$\cA=(P\cup \{0\}, \Sig,\delta,q_0, F)$ is minimal for every $q_0\in P$ and $\emp\not\subseteq F\subseteq P$.

One can test whether a semiautomaton is uniformly minimal with the aid of the directed \emph{pair graph} $G=G(\cS)=(V,E)$. The vertices of $G$ are all the unordered pairs $(p,q)$ of states with $p\neq q$.
There is an edge from $(p,q)$ to $(r,s)$ if and only if $\delta(p,a)=r$ and $\delta(q,a)=s$ for some $a\in \Sig$.
Then $\cS$ is uniformly minimal if and only if, for any pair $(p,q)$, there is a path to $(0,r)$ for some $r\in P$.

\begin{definition}
\label{def:semileft}
Let $n\ge 3$, and let $\cS_n$ be the  semiautomaton 
$$\cS_n =(\{0,\ldots,n-1\},\{a,b,c,d,e\},\delta),$$
where $a=(1,2,\ldots,n-1)$,
$b=(1,2)$,
$c={n-1\choose 1}$,
$d={n-1\choose 0}$,
and $e$ is the uniform transformation $Q \choose 1$.
The state graph of $\cS_n$ is shown in Fig.~\ref{fig:SemiautomatonS}.
For $n=3$ inputs $a$ and $b$ coincide; hence here we use $\Sig=\{b,c,d,e\}$.
\end{definition}

\begin{figure}[hbt]
\begin{center}
\setlength{\unitlength}{0.00048119in}
\begingroup\makeatletter\ifx\SetFigFont\undefined%
\gdef\SetFigFont#1#2#3#4#5{%
  \reset@font\fontsize{#1}{#2pt}%
  \fontfamily{#3}\fontseries{#4}\fontshape{#5}%
  \selectfont}%
\fi\endgroup%
{\renewcommand{\dashlinestretch}{30}
\begin{picture}(7430,3048)(0,-10)
\put(7028,2825){\makebox(0,0)[lb]{\smash{{\SetFigFont{8}{9.6}{\familydefault}{\mddefault}{\updefault}$b$}}}}
\put(6059.000,2575.333){\arc{333.333}{2.2143}{7.2105}}
\blacken\path(6196.638,2538.417)(6159.000,2442.000)(6237.107,2509.913)(6196.638,2538.417)
\put(2549.000,2575.333){\arc{333.333}{2.2143}{7.2105}}
\blacken\path(2686.638,2538.417)(2649.000,2442.000)(2727.107,2509.913)(2686.638,2538.417)
\put(1469.000,2575.333){\arc{333.333}{2.2143}{7.2105}}
\blacken\path(1606.638,2538.417)(1569.000,2442.000)(1647.107,2509.913)(1606.638,2538.417)
\put(7096.000,2553.333){\arc{333.333}{2.2143}{7.2105}}
\blacken\path(7233.638,2516.417)(7196.000,2420.000)(7274.107,2487.913)(7233.638,2516.417)
\put(3654.000,2560.333){\arc{333.333}{2.2143}{7.2105}}
\blacken\path(3791.638,2523.417)(3754.000,2427.000)(3832.107,2494.913)(3791.638,2523.417)
\put(393,2127){\ellipse{630}{630}}
\put(2557,2127){\ellipse{630}{630}}
\put(1487,2125){\ellipse{630}{630}}
\put(6077,2127){\ellipse{630}{630}}
\put(3649,2137){\ellipse{630}{630}}
\put(7107,2116){\ellipse{630}{630}}
\path(1794,2127)(2244,2127)
\blacken\path(2124.000,2097.000)(2244.000,2127.000)(2124.000,2157.000)(2124.000,2097.000)
\path(2874,2127)(3324,2127)
\blacken\path(3204.000,2097.000)(3324.000,2127.000)(3204.000,2157.000)(3204.000,2097.000)
\path(714,2127)(1164,2127)
\blacken\path(1044.000,2097.000)(1164.000,2127.000)(1044.000,2157.000)(1044.000,2097.000)
\path(6391,2127)(6789,2127)
\blacken\path(6669.000,2097.000)(6789.000,2127.000)(6669.000,2157.000)(6669.000,2097.000)
\path(5297,2127)(5747,2127)
\blacken\path(5627.000,2097.000)(5747.000,2127.000)(5627.000,2157.000)(5627.000,2097.000)
\path(3962,2127)(4412,2127)
\blacken\path(4292.000,2097.000)(4412.000,2127.000)(4292.000,2157.000)(4292.000,2097.000)
\path(2304,1955)(1749,1955)
\blacken\path(1869.000,1985.000)(1749.000,1955.000)(1869.000,1925.000)(1869.000,1985.000)
\path(3543,1833)(3542,1832)(3540,1831)
	(3536,1828)(3529,1824)(3520,1817)
	(3508,1809)(3493,1800)(3476,1788)
	(3455,1774)(3431,1759)(3405,1743)
	(3376,1725)(3346,1707)(3314,1688)
	(3280,1669)(3245,1650)(3209,1631)
	(3171,1612)(3133,1594)(3093,1576)
	(3052,1560)(3009,1544)(2965,1529)
	(2920,1515)(2872,1503)(2822,1492)
	(2770,1482)(2716,1475)(2660,1469)
	(2603,1466)(2545,1466)(2484,1469)
	(2425,1475)(2368,1483)(2314,1494)
	(2264,1507)(2217,1521)(2172,1537)
	(2131,1554)(2091,1573)(2054,1592)
	(2018,1612)(1984,1633)(1952,1654)
	(1921,1676)(1891,1697)(1863,1719)
	(1836,1740)(1812,1761)(1789,1780)
	(1768,1799)(1750,1815)(1735,1829)
	(1721,1842)(1711,1852)(1703,1859)(1691,1871)
\blacken\path(1797.066,1807.360)(1691.000,1871.000)(1754.640,1764.934)(1797.066,1807.360)
\path(5858,1894)(5857,1893)(5856,1892)
	(5853,1889)(5849,1884)(5843,1878)
	(5836,1869)(5826,1858)(5814,1845)
	(5800,1830)(5784,1814)(5766,1795)
	(5746,1775)(5725,1754)(5702,1732)
	(5677,1709)(5651,1685)(5623,1660)
	(5594,1636)(5563,1611)(5531,1585)
	(5496,1559)(5459,1533)(5420,1506)
	(5378,1479)(5333,1452)(5284,1423)
	(5232,1395)(5176,1366)(5116,1336)
	(5054,1307)(4988,1279)(4925,1253)
	(4863,1229)(4803,1207)(4746,1187)
	(4694,1169)(4646,1152)(4603,1138)
	(4566,1126)(4533,1115)(4505,1105)
	(4481,1097)(4461,1090)(4443,1084)
	(4428,1079)(4415,1075)(4403,1071)
	(4392,1067)(4380,1064)(4368,1061)
	(4354,1058)(4338,1054)(4320,1051)
	(4298,1047)(4272,1043)(4242,1039)
	(4206,1034)(4165,1029)(4118,1024)
	(4066,1019)(4007,1014)(3943,1010)
	(3875,1006)(3803,1003)(3729,1002)
	(3655,1002)(3583,1004)(3514,1007)
	(3450,1011)(3392,1015)(3339,1020)
	(3292,1025)(3251,1029)(3216,1034)
	(3185,1038)(3159,1042)(3137,1046)
	(3119,1049)(3103,1053)(3089,1057)
	(3077,1060)(3066,1064)(3054,1068)
	(3042,1072)(3029,1076)(3014,1082)
	(2997,1087)(2976,1094)(2952,1102)
	(2923,1111)(2890,1122)(2852,1133)
	(2809,1147)(2761,1162)(2707,1179)
	(2650,1198)(2589,1218)(2525,1241)
	(2461,1264)(2394,1290)(2329,1317)
	(2268,1343)(2211,1370)(2157,1396)
	(2107,1421)(2060,1446)(2016,1471)
	(1975,1495)(1937,1518)(1901,1542)
	(1867,1565)(1834,1587)(1804,1610)
	(1775,1631)(1747,1653)(1721,1674)
	(1696,1694)(1673,1713)(1652,1730)
	(1633,1747)(1616,1762)(1601,1775)
	(1589,1787)(1578,1797)(1570,1804)
	(1563,1810)(1554,1819)
\blacken\path(1660.066,1755.360)(1554.000,1819.000)(1617.640,1712.934)(1660.066,1755.360)
\path(624,1872)(631,1872)
\path(631,1872)(624,1872)
\path(6856,1903)(6855,1903)(6854,1902)
	(6852,1900)(6848,1897)(6842,1893)
	(6834,1887)(6824,1879)(6811,1870)
	(6796,1859)(6777,1846)(6756,1830)
	(6732,1813)(6705,1793)(6674,1772)
	(6641,1748)(6604,1723)(6565,1695)
	(6523,1666)(6478,1635)(6431,1603)
	(6381,1570)(6330,1535)(6276,1499)
	(6220,1463)(6163,1426)(6104,1388)
	(6043,1350)(5981,1312)(5918,1273)
	(5854,1235)(5789,1197)(5723,1159)
	(5656,1121)(5588,1084)(5519,1047)
	(5449,1011)(5378,976)(5307,941)
	(5234,907)(5160,875)(5085,843)
	(5008,811)(4930,782)(4851,753)
	(4771,725)(4689,699)(4605,674)
	(4519,650)(4432,628)(4344,608)
	(4253,590)(4162,574)(4069,560)
	(3975,548)(3881,539)(3786,533)
	(3692,530)(3588,530)(3487,534)
	(3387,541)(3291,551)(3199,565)
	(3109,581)(3024,599)(2942,620)
	(2863,643)(2787,668)(2715,695)
	(2646,723)(2579,753)(2515,785)
	(2454,817)(2395,851)(2338,887)
	(2283,923)(2230,960)(2179,998)
	(2129,1037)(2081,1076)(2035,1116)
	(1990,1157)(1946,1197)(1904,1238)
	(1863,1279)(1824,1319)(1786,1359)
	(1750,1399)(1715,1437)(1682,1475)
	(1651,1511)(1622,1545)(1595,1578)
	(1569,1609)(1546,1638)(1525,1665)
	(1506,1689)(1489,1711)(1474,1731)
	(1461,1748)(1451,1762)(1442,1774)
	(1434,1784)(1429,1791)(1425,1797)(1419,1805)
\blacken\path(1515.000,1727.000)(1419.000,1805.000)(1467.000,1691.000)(1515.000,1727.000)
\path(7111,1805)(7109,1803)(7107,1801)
	(7104,1798)(7100,1794)(7093,1787)
	(7085,1779)(7074,1769)(7062,1756)
	(7047,1741)(7029,1724)(7009,1705)
	(6986,1683)(6961,1658)(6933,1631)
	(6902,1602)(6869,1571)(6833,1537)
	(6794,1501)(6753,1464)(6710,1425)
	(6664,1384)(6617,1341)(6567,1298)
	(6516,1253)(6463,1208)(6408,1161)
	(6352,1115)(6294,1067)(6235,1020)
	(6175,972)(6113,925)(6051,878)
	(5987,831)(5922,784)(5856,738)
	(5789,693)(5720,648)(5651,604)
	(5580,561)(5508,519)(5434,478)
	(5359,438)(5283,399)(5205,361)
	(5125,325)(5043,290)(4959,256)
	(4873,224)(4785,194)(4695,165)
	(4603,138)(4509,114)(4412,92)
	(4313,72)(4212,54)(4109,39)
	(4004,28)(3898,19)(3791,14)
	(3684,12)(3573,14)(3464,20)
	(3355,29)(3249,42)(3144,58)
	(3042,77)(2943,98)(2846,122)
	(2752,148)(2661,177)(2572,207)
	(2485,239)(2402,273)(2320,309)
	(2240,346)(2163,384)(2088,424)
	(2014,465)(1942,507)(1872,551)
	(1804,595)(1737,640)(1671,687)
	(1606,734)(1543,781)(1481,829)
	(1421,878)(1361,927)(1303,976)
	(1246,1025)(1191,1074)(1137,1123)
	(1084,1172)(1033,1219)(984,1266)
	(936,1312)(891,1357)(847,1401)
	(805,1443)(766,1483)(729,1521)
	(694,1557)(662,1591)(632,1623)
	(605,1652)(580,1678)(558,1702)
	(539,1724)(522,1742)(507,1758)
	(495,1772)(484,1783)(476,1792)
	(470,1799)(466,1805)(459,1812)
\blacken\path(565.066,1748.360)(459.000,1812.000)(522.640,1705.934)(565.066,1748.360)
\put(338,2060){\makebox(0,0)[lb]{\smash{{\SetFigFont{8}{9.6}{\rmdefault}{\mddefault}{\updefault}$0$}}}}
\put(1426,2060){\makebox(0,0)[lb]{\smash{{\SetFigFont{8}{9.6}{\rmdefault}{\mddefault}{\updefault}$1$}}}}
\put(5812,2064){\makebox(0,0)[lb]{\smash{{\SetFigFont{7}{8.4}{\familydefault}{\mddefault}{\updefault}$n-2$}}}}
\put(2994,2247){\makebox(0,0)[lb]{\smash{{\SetFigFont{8}{9.6}{\familydefault}{\mddefault}{\updefault}$a$}}}}
\put(2498,2052){\makebox(0,0)[lb]{\smash{{\SetFigFont{8}{9.6}{\rmdefault}{\mddefault}{\updefault}$2$}}}}
\put(3586,2068){\makebox(0,0)[lb]{\smash{{\SetFigFont{8}{9.6}{\rmdefault}{\mddefault}{\updefault}$3$}}}}
\put(6846,2064){\makebox(0,0)[lb]{\smash{{\SetFigFont{7}{8.4}{\familydefault}{\mddefault}{\updefault}$n-1$}}}}
\put(15,2862){\makebox(0,0)[lb]{\smash{{\SetFigFont{8}{9.6}{\familydefault}{\mddefault}{\updefault}$a,b,c,d$}}}}
\put(2378,2847){\makebox(0,0)[lb]{\smash{{\SetFigFont{8}{9.6}{\familydefault}{\mddefault}{\updefault}$c,d$}}}}
\put(3371,2840){\makebox(0,0)[lb]{\smash{{\SetFigFont{8}{9.6}{\familydefault}{\mddefault}{\updefault}$b,c,d$}}}}
\put(5799,2841){\makebox(0,0)[lb]{\smash{{\SetFigFont{8}{9.6}{\familydefault}{\mddefault}{\updefault}$b,c,d$}}}}
\put(842,2232){\makebox(0,0)[lb]{\smash{{\SetFigFont{8}{9.6}{\familydefault}{\mddefault}{\updefault}$e$}}}}
\put(4067,2247){\makebox(0,0)[lb]{\smash{{\SetFigFont{8}{9.6}{\familydefault}{\mddefault}{\updefault}$a$}}}}
\put(5432,2261){\makebox(0,0)[lb]{\smash{{\SetFigFont{8}{9.6}{\familydefault}{\mddefault}{\updefault}$a$}}}}
\put(6489,2254){\makebox(0,0)[lb]{\smash{{\SetFigFont{8}{9.6}{\familydefault}{\mddefault}{\updefault}$a$}}}}
\put(1853,2255){\makebox(0,0)[lb]{\smash{{\SetFigFont{8}{9.6}{\familydefault}{\mddefault}{\updefault}$a,b$}}}}
\put(2574,1550){\makebox(0,0)[lb]{\smash{{\SetFigFont{8}{9.6}{\familydefault}{\mddefault}{\updefault}$e$}}}}
\put(1957,1746){\makebox(0,0)[lb]{\smash{{\SetFigFont{8}{9.6}{\familydefault}{\mddefault}{\updefault}$b,e$}}}}
\put(1231,2855){\makebox(0,0)[lb]{\smash{{\SetFigFont{8}{9.6}{\familydefault}{\mddefault}{\updefault}$c,d,e$}}}}
\put(3469,658){\makebox(0,0)[lb]{\smash{{\SetFigFont{8}{9.6}{\familydefault}{\mddefault}{\updefault}$a,c,e$}}}}
\put(4682,2068){\makebox(0,0)[lb]{\smash{{\SetFigFont{8}{9.6}{\familydefault}{\mddefault}{\updefault}$\cdots$}}}}
\put(3677,1102){\makebox(0,0)[lb]{\smash{{\SetFigFont{8}{9.6}{\familydefault}{\mddefault}{\updefault}$e$}}}}
\put(3640,141){\makebox(0,0)[lb]{\smash{{\SetFigFont{8}{9.6}{\familydefault}{\mddefault}{\updefault}$d$}}}}
\put(394.000,2560.333){\arc{333.333}{2.2143}{7.2105}}
\blacken\path(531.638,2523.417)(494.000,2427.000)(572.107,2494.913)(531.638,2523.417)
\end{picture}
}
\end{center}
\caption{Semiautomaton $\cS_n$  with $n^{n-1}+n-1$ transformations.} 
\label{fig:SemiautomatonS}
\end{figure}

\begin{definition}
Let $\Sig'=\Sig\setminus \{e\}$ and let $\cR_n$ be the semiautomaton
$\cR_n =(Q,\Sig',\delta'),$
where $Q=P\cup \{0\}$, $P=\{1,\ldots,n-1\}$, and $\delta'$ is 
the restriction of $\delta$ to $Q\times \Sig'$.
Note that 0 is a sink state of $\cR_n$.

\label{def:Blin}
\end{definition}

\begin{lemma}
\label{lem:BC}
The set $P$ is   strongly connected and $\cR_n$ is uniformly minimal.
\end{lemma}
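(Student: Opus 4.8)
The plan is to prove the two assertions in turn, using only the cyclic letter $a$ (or $b$ in the degenerate case) together with $d$. First I would establish that $P=\{1,\dots,n-1\}$ is strongly connected in $\cR_n$: the letter $a$ acts on $P$ as the single $(n-1)$-cycle $(1,2,\dots,n-1)$, so for any $p,q\in P$ some power $a^k$ satisfies $\delta'(p,a^k)=q$; hence every state of $P$ is reachable from every other. For $n=3$, where $\Sig'=\{b,c,d\}$ and the inputs $a,b$ coincide, the letter $b=(1,2)$ plays the role of this cycle and the same argument applies. Together with the already-noted fact that $0$ is a sink of $\cR_n$, this puts $\cR_n$ into exactly the form required by the Restivo--Vaglica setup, so uniform minimality makes sense and can be tested on the pair graph $G(\cR_n)$.

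Next I would apply the pair-graph criterion recalled above: it suffices to show that from every unordered pair $(p,q)$ of distinct states of $Q$ there is a path in $G(\cR_n)$ to some pair $(0,r)$ with $r\in P$. Pairs already of that form are trivially handled by a path of length zero, so fix $p,q\in P$ with $p\neq q$. The crucial observation is that $d={n-1 \choose 0}$ moves no state other than $n-1$, which it sends to the sink $0$. Using transitivity of the cyclic action, choose $k$ with $\delta'(p,a^k)=n-1$; since $a^k$ restricts to a bijection of $P$ and $p\neq q$, we get $\delta'(q,a^k)=r$ for some $r\in P$ with $r\neq n-1$, which $d$ therefore fixes. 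Consequently the word $a^kd$ drives $(p,q)$ to $(0,r)$, giving a path $(p,q)\to\cdots\to(n-1,r)\to(0,r)$ in the pair graph. By the criterion, $\cR_n$ is uniformly minimal.

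I do not anticipate a genuine obstacle here; the only point requiring care is to ensure that applying $d$ after the permutation does not also collapse the second coordinate onto $0$ — which is precisely why we route $p$ (and not $q$) onto state $n-1$ first and invoke injectivity of $a^k$. The sole bookkeeping nuisance is the case $n=3$: there $\Sig'=\{b,c,d\}$, the cyclic letter used throughout is $b$ in place of $a$, and the argument goes through verbatim (in fact one may then take the permutation prefix to be empty or $b$ and finish with $d$ directly).
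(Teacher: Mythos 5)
Your proposal is correct and follows essentially the same route as the paper: strong connectivity of $P$ via the cycle $a=(1,2,\ldots,n-1)$, and then, in the pair graph, rotating one coordinate of a pair $(p,q)$ with $p,q\in P$ onto $n-1$ by a power of $a$ and applying $d$ to reach a pair of the form $(0,r)$. The paper's explicit word $a^{n-1-j}d$ is just a particular choice of your $a^k d$, so the two arguments coincide.
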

\begin{proof}
Since $a$ is a cycle of the states in $P$,  $\cR_n$ is strongly connected. 

To show that $\cR_n$ is uniformly minimal, we construct the state-pair graph $G=G(\cR_n)$ of $\cR_n$, as in~\cite{ReVa10}. We need to show that for every vertex $v$ in $G$, there is a path to a vertex of the form $(0, j)$, where $j\in P$.

Assume that all the unordered pairs of distinct states of $\cR_n$ are represented as $(i,j)$, where $i<j$.
If a vertex  is of the form $(0,j)$, then there is nothing to prove.
If a vertex is of the form $(i,j)$, $0<i<j$, then applying $a^{n-1-j}$ reaches $(i+n-1-j, n-1)$.
Then $d$ takes the pair $(i+n-1-j, n-1)$ to $(0,i+n-1-j)$.
Consequently, $\cR_n$ is uniformly minimal.
\qed
\end{proof}

\begin{theorem}[Left Ideals and Suffix-Closed Languages]
\label{thm:LeftIdeal4}
For $n\ge 3$, let
$\cA_n=(Q, \Sig,\delta,0, F),$ where $(Q,\Sig,\delta)=\cS_n$ of Def.~\ref{def:semileft}, and $F$ is any non-empty subset of 
$Q\setminus\{0\}$.
Then $\cA_n$ is minimal, and the language $L=L(\cA_n)$ accepted by $\cA_n$ is a left ideal and has syntactic complexity $\sig(L)=n^{n-1}+n-1$.
\end{theorem}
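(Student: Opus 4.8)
The plan is to work directly with the transformation semigroup $T$ of the semiautomaton $\cS_n$ generated by $\{a,b,c,d,e\}$, show that $|T|=n^{n-1}+n-1$, and separately verify that $\cA_n$ is minimal; since the transformation semigroup of the minimal automaton of a language is isomorphic to its syntactic semigroup, this gives $\sig(L)=n^{n-1}+n-1$. First I would dispose of the two easy claims. For the \emph{left-ideal} property, note that $e$ is the constant transformation $\binom{Q}{1}$ with $1\in P$, and $F\subseteq P$; hence any word $x$ accepted from state $0$ must contain an occurrence of $e$, and writing $x=yez$ with $z$ the suffix after the \emph{last} $e$, the state reached from \emph{any} start state is $\delta(1,z)$. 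Thus $\delta(q,x)=\delta(0,x)\in F$ for every $q\in Q$, so $L\subseteq L_w$ for all $w$; together with $L\ne\emp$ (since $F\ne\emp$) this makes $L$ a left ideal, and every suffix-closed language is a complement of one. For \emph{minimality}: every state is reachable from $0$ because $\delta(0,e)=1$ and $P$ is strongly connected (Lemma~\ref{lem:BC}); and any two distinct states are distinguishable because $\cR_n$ is uniformly minimal (Lemma~\ref{lem:BC}) --- for a pair $\{p,q\}$ the pair graph yields a word over $\{a,b,c,d\}$ driving one of $p,q$ to the sink $0$ and the other into $P$, after which a power of the cycle $a$ routes the latter to a state of the non-empty set $F\subseteq P$ while $0$ stays fixed.

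The core of the proof is the sub-lemma $\langle a,b,c,d\rangle=\{t\in T_n : t(0)=0\}$, a set of size $n^{n-1}$. The inclusion $\subseteq$ is immediate since $a,b,c,d$ all fix $0$. For $\supseteq$, I would first observe that $a|_P$ is a full cycle of $P=\{1,\dots,n-1\}$, $b|_P$ a transposition, and $c|_P=\binom{n-1}{1}$ a returning transformation of $P$, so by Theorems~\ref{thm:piccard} and~\ref{thm:salomaa}, $\langle a,b,c\rangle$ realizes \emph{every} transformation of $Q$ that fixes $0$ and maps $P$ into $P$. Next I would show $f_r:=\binom{r}{0}\in\langle a,b,c,d\rangle$ for each $r\in P$: for $r=n-1$ this is $d$ itself, and for $r<n-1$ one checks that $f_r=(r,n-1)\,d\,\binom{r}{n-1}$ by tracing where $r$, $n-1$, $0$ and the remaining states go (the two $P$-transformations around $d$ are there precisely to repair the damage $d$ does to state $n-1$). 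Finally, for an arbitrary $t$ with $t(0)=0$, put $Z=\{i\in P : t(i)=0\}$ and $k=|Z|$: if $k=n-1$ then $t=\binom{Q}{0}=f_{z_1}\cdots f_{z_{n-1}}$; if $k\le n-2$ then $|P\setminus t(P\setminus Z)|\ge(n-1)-(n-1-k)=k$, so we may choose distinct $r_1,\dots,r_k\in P\setminus t(P\setminus Z)$, and letting $t'$ be the $P$-preserving map (hence in $\langle a,b,c\rangle$) that agrees with $t$ on $P\setminus Z$ and sends the $j$-th element of $Z$ to $r_j$, we get $t=t'\,f_{r_1}\cdots f_{r_k}$ --- the $f_{r_j}$ collapse exactly $r_1,\dots,r_k$ to $0$, and the choice of the $r_j$ outside $t(P\setminus Z)$ guarantees nothing else is disturbed. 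This proves the sub-lemma.

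It remains to account for $e$. Any word containing $e$ factors as $yez$ with $z\in\{a,b,c,d\}^*$, and the transformation it performs is the constant map $\binom{Q}{\delta(1,z)}$; as $z$ varies, $\delta(1,z)$ assumes every value in $Q$ (take $z=a^{j-1}$ to hit $j$ for $1\le j\le n-1$, and $z=a^{n-2}d$ to hit $0$). Hence the words containing $e$ contribute exactly the $n$ constant maps $\binom{Q}{0},\dots,\binom{Q}{n-1}$, of which $\binom{Q}{0}$ already fixes $0$ and so lies in $\langle a,b,c,d\rangle$, while the remaining $n-1$ do not. Therefore $T=\{t:t(0)=0\}\cup\{\binom{Q}{1},\dots,\binom{Q}{n-1}\}$ has $n^{n-1}+(n-1)$ elements, and combined with minimality this yields $\sig(L)=n^{n-1}+n-1$.

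I expect the main obstacle to be the sub-lemma, and inside it two points: the construction of $f_r$ for $r\ne n-1$ (since $d$ unavoidably moves state $n-1$, one is forced to sandwich $d$ between suitable transformations of $P$ and check the composition by cases), and the counting inequality $|P\setminus t(P\setminus Z)|\ge k$ that guarantees enough ``fresh'' targets $r_1,\dots,r_k$ to peel the preimages of $0$ off one at a time.
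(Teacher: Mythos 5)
Your proof is correct, and its overall architecture matches the paper's: minimality via the uniform minimality of $\cR_n$ (plus reachability through $e$), the left-ideal property via the fact that any accepted word must contain the constant letter $e$ and hence acts identically from every state, and a count of the transformation semigroup as (all maps fixing $0$) together with (the constant maps onto $P$). Where you genuinely diverge is in realizing the transformations $t$ with $t(0)=0$ that send some states of $P$ to $0$. The paper handles these in one stroke: by pigeonhole there is a $j\in P$ missing from the image, so replacing every $0$ in the image by $j$ gives a $P$-preserving map $s$, and one checks $t=s\,r\,d\,r$ with $r=(j,n-1)$. You instead manufacture the elementary collapses $f_r=\binom{r}{0}$, each a conjugate $(r,n-1)\,d\,\binom{r}{n-1}$ of $d$, and peel the preimages of $0$ off one at a time; this costs you the counting inequality $|P\setminus t(P\setminus Z)|\ge |Z|$ to secure enough fresh intermediate targets, but both routes are sound. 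The paper's sandwich is shorter; your factorization is more modular and would survive replacing $d$ by any letter that collapses a single state of $P$ to $0$. One genuine advantage of your write-up is that you prove the matching \emph{upper} bound explicitly: every nonempty word either avoids $e$ (hence induces a map fixing $0$) or contains $e$ (hence induces a constant map $\binom{Q}{\delta(1,z)}$ determined by the suffix $z$ after the last $e$), so the semigroup is exactly $\{t:t(0)=0\}\cup\{\binom{Q}{1},\dots,\binom{Q}{n-1}\}$ of size $n^{n-1}+n-1$. The paper only exhibits these transformations as achievable and leaves the fact that no others arise implicit.
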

\begin{proof}
Since semiautomaton $\cR_n$ is uniformly minimal, automaton $\cA_n$ is minimal for every choice of $F$.
Hence $L$ has $n$ quotients.

To prove that $L$ is a left ideal it suffices to show that, for any $w\in L$, we also have $hw\in L$ for every $h\in\Sig$. This is obvious if $h\in\Sig\setminus \{e\}$, since all transitions from state 0 under $h$ lead to state 0. If $w\in L$, then $w$ has the form $w =uev$, where $\delta(0,u)=0$,
$\delta(0,ue)=1$, and $v\in L_e$. 
But  $\delta(0,eue)=1$, since $\delta(i,eue)=1$ for all $i\in Q$, and $v\in L_e$ gives us $euev=ew\in L$.
Thus $L$ is a left ideal.
\medskip

Consider any transformation $t$ of the form
$$
t=\left( \begin{array}{cccccccc}
0 & 1 & 2 & 3 & \cdots & n-3 & n-2 & n-1 \\
0 & i_1 & i_2 & i_3 & \cdots & i_{n-3} & i_{n-2} & i_{n-1}
\end{array} \right ),
$$
where $i_k\in\{0,1,2,\ldots,n-2, n-1\}$ for $1\le k\le n-1$;
there are $n^{n-1}$ such transformations.
We have two cases:
\be
\item If $i_k\ne 0$ for all $k$, $1\le k\le n-1$, then all the images of the last $n-1$ states are in the set $\{1,\ldots,n-1\}$.
By Theorem~\ref{thm:salomaa}, $t$ can be performed by $\cA_n$.
\item If $i_h = 0$ for some $h$, $1\le h\le n-1$, then there exists some $j$,
$1\le j\le n-1$ such that $i_k\ne j$ for all $k$, $1\le k\le n-1$.
\ee
Define $i'_k$ for all $1\le k\le n-1$ as follows:
\begin{equation*} 
i'_k = \left\{ 
\begin{array}{ccc} 
j,	& & \quad \text{if } i_k=0;\\ 
i_k,   &   & \quad \text{if }  i_k\ne 0.\\ 
\end{array} \right. 
\end{equation*}
Let 
$$
s=\left( \begin{array}{cccccccc}
0 & 1 & 2 & 3 & \cdots & n-3 & n-2 & n-1 \\
0 & i'_1 & i'_2 & i'_3 & \cdots & i'_{n-3} & i'_{n-2} & i'_{n-1} 
\end{array} \right ),$$
and
$r=(j,n-1)$.
By Theorem~\ref{thm:salomaa}, $s$ and $r$ can be performed by $\cA_n$.

Now consider $srdr$.
If $t$ maps $k$ to 0, then $s$ maps $k$ to $j$,  $r$ maps $j$ to $n-1$, 
$d$ maps $n-1$ to 0, and $r$ maps 0 to 0.
If $t$ maps $k$ to $n-1$, then $s$ maps $k$ to $n-1$, $r$ maps $n-1$ to $j$, $d$ maps $j$ to $j$, and $r$ maps $j$ to $n-1$.
Finally, if $t$ maps $k$ to an element other than 0 or $n-1$, then $srdr$ maps $k$ to the same element.
Hence we have $t=srdr$, and $t$ can be performed by $\cA_n$ as well.
\medskip

Now consider any transformation $t$ that maps all the states to some state $j\ne 0$;
there are  $n-1$ such transformations.
We have two cases:
\be
\item If $j=1$, then $t=e$; therefore $t$ can be performed by $\cA_n$.
\item Otherwise, let $s=(1,j)$.
By Theorem~\ref{thm:salomaa}, $s$ can be performed by $\cA_n$.
Since $t=es$, $t$ can also be performed by $\cA_n$ as well.
\ee
In summary, the syntactic complexity of $L(\cA_n)$ is $n^{n-1}+n-1$.\qed
\end{proof}

Since inputs $a$ and $b$ of automaton $\cA_3$ coincide, we omit $a$. Table~\ref{tab:li3} shows the transition table of $\cA_3$ and  its $3^2+2=11$ transformations.
We will show that 11 is indeed the maximal bound for $n=3$, but we require more properties of left ideals.

\begin{table}[ht]
\caption{The eleven transformations of automaton $\cA_{3}$ of a left ideal.}
\label{tab:li3}
\begin{center}
$
\begin{array}{| c ||c|c| c| c||c|c|c|c|c|c|c|}    
\hline
\ \  \ \ &\ \ b \ \ &\ \ c \ \ &\ \ d \  \ & \  \ e \ \ 
&  \ bb \ &\ bd\ &\ cb \ &\ db \ 
&\  eb \  & \  bdb \ & \  cbd \
  \\
\hline  \hline
  0
& 0&0  &0&  1  	          &0 &0   & 0 & 0   	& 2 &0& 0\\
\hline
1 
& 2& 1 &1 & 1  		&1 & 0   & 2 & 2 	& 2 &0& 0\\
\hline
2
& 1 &1  & 0  & 1 	&2 & 1   & 2 &  0  	&2  &2&0\\
\hline
\hline
\end{array}
$
\end{center}
\end{table}

Let $\cA=(Q, \Sig,\delta,q_0, F)$ be the quotient automaton  
 of a left ideal.
For every word $w\in\Sig^*$, consider the sequence $q_0=p_0, p_1, p_2\ldots $ of states obtained by applying powers of $w$ to the initial state $q_0$, that is, let $p_i=\delta(q_0,w^i)$. Since $\cA$ has $n$ states, we must eventually have a repeated state in that sequence, that is, we must have 
some $i$ and $j>i$ such that $p_0,p_1,\ldots, p_i,p_{i+1},\ldots p_{j-1}$ are distinct and $p_j=p_i$.
The sequence $q_0=p_0,p_1,\ldots, p_i,p_{i+1},\ldots p_{j-1}$ of states with $p_j=p_i$ is called the \emph{behavior of $w$ on $\cA$}, and the integer $j-i$ is the \emph{period} of that behavior.
We will use the notation $\langle p_0,p_1,\ldots, p_i,p_{i+1},\ldots p_{j-1}; p_j=p_i\rangle$ for such behaviors.
If the period of $w$ is 1, then its behavior is \emph{aperiodic}; otherwise, it is \emph{periodic}.

\begin{lemma}
\label{lem:cycles}
If $\cA$ is the quotient automaton of a left ideal $L$, then the behavior of every word $w\in \Sig^*$ is aperiodic. 
Moreover, $L$ does not have the empty quotient.
\end{lemma}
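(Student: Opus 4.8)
\textbf{Proof proposal for Lemma~\ref{lem:cycles}.}

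The plan is to argue by contradiction for the first statement, using the defining property $L=\Sig^*L$ of a left ideal. Suppose some word $w$ has a periodic behavior, so that in the sequence $p_i=\delta(q_0,w^i)$ we have distinct states $p_0,\dots,p_{j-1}$ with $p_j=p_i$ and period $j-i\ge 2$. The key observation is that the states on the cycle $p_i,p_{i+1},\dots,p_{j-1}$ are permuted cyclically by $w$, so none of them is the "pre-cycle" state $p_{i-1}$ (if $i\ge 1$), and in particular, since $i<j$, the state $p_{i-1}\neq p_i$ and $p_{i-1}$ is not reached again by any further power of $w$. I would then exploit the left-ideal property as follows: because $L=\Sig^*L$, prepending $w$ to any word of $L$ keeps it in $L$; translated to the quotient automaton, this forces, for every state $q$ reachable from $q_0$, the quotient $L_q$ to be contained in the quotient obtained by first applying $w$ and then reaching $q$ — more precisely, $w$ acts "non-decreasingly" on quotients. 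The contradiction will come from the fact that a genuine cycle of length $\ge 2$ cannot be compatible with such monotonicity: iterating $w$ around the cycle returns to $p_i$, but the left-ideal inclusion $L_{q_0}=L\subseteq L_w\subseteq L_{w^2}\subseteq\cdots$ combined with finiteness forces the chain to stabilize, and stabilization of quotients in a minimal automaton forces stabilization of states, i.e. $\delta(q_0,w^m)=\delta(q_0,w^{m+1})$ for large $m$, which means the behavior is aperiodic.

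Concretely, here is the cleaner route I would actually write. Since $L=\Sig^*L$, for any $u\in\Sig^*$ we have $L_u\subseteq L_{wu}$: indeed $x\in L_u$ means $ux\in L$, hence $wux\in \Sig^*L=L$, hence $x\in L_{wu}$. Taking $u=w^i$ gives $L_{w^i}\subseteq L_{w^{i+1}}$ for all $i\ge 0$, so $\{L_{w^i}\}_{i\ge 0}$ is a non-decreasing chain of quotients. As $L$ has only finitely many quotients, this chain stabilizes: there is $m$ with $L_{w^m}=L_{w^{m+1}}=\cdots$. In the quotient automaton states \emph{are} quotients, so $\delta(q_0,w^m)=L_{w^m}=L_{w^{m+1}}=\delta(q_0,w^{m+1})=\delta(\delta(q_0,w^m),w)$; that is, $p_m=\delta(q_0,w^m)$ is a fixed point of the transformation induced by $w$, and therefore the behavior of $w$ has period $1$. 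This handles every $w\in\Sig^*$.

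For the second statement, suppose $L$ has the empty quotient, say $L_u=\emp$. Then $\emp_a=\emp$ for all $a$, so in the minimal automaton the state $\emp$ is a sink, and moreover $\emp=L_u$ is reachable from $q_0=L$. But the inclusion $L_v\subseteq L_{wv}$ derived above, applied with $v$ such that $L_v=\emp$ and a suitable $w$ with $L_w\neq\emp$, is fine — so I instead argue directly: if $L_u=\emp$ then $u\notin L$ and $uz\notin L$ for all $z$; pick any $y\in L$ (which exists since a left ideal is non-empty), then $u\cdot$ anything is not in $L$, yet the left-ideal property would let us prepend arbitrary words — this alone does not contradict. The correct observation is: in a left ideal, $L_\eps=L\supseteq L_w$ reversed — wait, we showed $L=L_\eps\subseteq L_w$, so every quotient \emph{contains} $L$, hence no quotient can be $\emp$ unless $L=\emp$, contradicting non-emptiness of an ideal. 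I would present exactly this: since $L\subseteq L_w$ for every $w$ and $L\neq\emp$, no quotient is empty.

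\textbf{Main obstacle.} The only delicate point is the very first reduction — recognizing that the left-ideal condition gives the \emph{monotonicity} $L_u\subseteq L_{wu}$ of quotients under prepending, and then seeing that a finite non-decreasing chain forced by the powers of a single word must stabilize at a state fixed by that word. Once that is in hand the rest is immediate. I expect no serious computation; the risk is merely in stating the inclusion in the right direction (prepending enlarges the quotient, because more words now succeed), which is easy to get backwards.
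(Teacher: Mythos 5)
Your proof is correct, but it reaches the conclusion by a different decomposition than the paper. The paper argues directly on a putative cycle: it takes the two distinct cycle states $p_i$ and $p_{j-1}$, uses minimality to pick a word $x$ distinguishing them, and notes that whichever of $w^ix$, $w^{j-1}x$ lies in $L$, prepending a suitable power of $w$ (legitimate for a left ideal) forces the other into $L$ as well, a contradiction; the empty-quotient claim is then handled separately by picking $w\in L$ and noting $xw\in\Sig^*L=L$ would fail if $L_x=\emp$. You instead isolate the inclusion $L_u\subseteq L_{wu}$, valid for every $u,w$ when $L=\Sig^*L$, specialize it to the non-decreasing chain $L=L_\eps\subseteq L_w\subseteq L_{w^2}\subseteq\cdots$, and let finiteness of the set of quotients force stabilization $L_{w^m}=L_{w^{m+1}}$; since states of the quotient automaton \emph{are} quotients, $p_m$ is then a fixed point under $w$, which is exactly aperiodicity. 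Both arguments run on the same engine (prepending preserves membership in a left ideal), but your packaging buys something: the case $u=\eps$ of your inclusion gives $L\subseteq L_w$ for all $w$, so the second claim (no empty quotient, since $L\neq\emp$) falls out of the same lemma rather than needing a separate argument, and the monotone-chain formulation is a reusable structural fact about left ideals. Two small points for the write-up: strip the false starts (the digression about $p_{i-1}$ and the aborted attempts in your empty-quotient paragraph) and keep only your ``cleaner route''; and make explicit the final step that if the period $j-i$ were at least $2$ then $p_m\neq p_{m+1}$ for every $m$ (distinct before the cycle by definition of the behavior, and distinct on the cycle because consecutive elements of a cycle of length at least $2$ differ), which is what contradicts the stabilization you established.
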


\begin{proof}
Suppose that $w$ has the behavior 
$\langle q_0=p_0,p_1,\ldots, p_i,p_{i+1},\ldots p_{j-1};p_j=p_i\rangle$, where $j-i\ge 2$; then $j-1\ge i+1$. Since $\cA$ is minimal, states $p_i$ and $p_{j-1}$ must be distinguishable, say by word $x\in\Sig^*$.
If $w^ix\in L$, then $w^{j-1} x=w^iw^{j-i-1}x=w^{j-i-1}(w^ix) \not\in L$, contradicting the assumption that $L$ is a left ideal.
If $w^{j-1}x\in L$, then $w^jx=w(w^{j-1}x) \not\in L$, again contradicting that $L$ is a  left ideal.

For the second claim, we know that a left ideal is non-empty by definition. So suppose that $w\in L$. If $L$ has the empty quotient, say $L_x=\emp$, then $xw\not\in L$, which is a contradiction.
\qed
\end{proof}
\begin{example}

Note that the conditions of Lemma~\ref{lem:cycles} are not sufficient. For $\Sig=\{a,b\}$, the language $L=b\cup \Sig^*a$ satisfies the conditions, but is not a left ideal because $b\in L$ but $ab\not\in L$.
Its quotient automaton is shown in Fig.~\ref{fig:notideal}.

If the accepting state is 2 instead of 1, the language becomes $L'=\Sig\Sig^*b=\Sig^*\Sig b$, which \emph{is} a left ideal.
The languages $L$ and $L'$  have the same syntactic semigroup, but one is a left ideal while the other is not.
\end{example}

\begin{figure}[hbt]
\begin{center}
\setlength{\unitlength}{0.00048119in}
\begingroup\makeatletter\ifx\SetFigFont\undefined%
\gdef\SetFigFont#1#2#3#4#5{%
  \reset@font\fontsize{#1}{#2pt}%
  \fontfamily{#3}\fontseries{#4}\fontshape{#5}%
  \selectfont}%
\fi\endgroup%
{\renewcommand{\dashlinestretch}{30}
\begin{picture}(4045,1377)(0,-10)
\put(2884,734){\makebox(0,0)[lb]{\smash{{\SetFigFont{8}{9.6}{\familydefault}{\mddefault}{\updefault}$b$}}}}
\put(3713.000,899.000){\arc{298.530}{2.1848}{7.2400}}
\blacken\path(3540.118,865.042)(3627.000,777.000)(3591.771,895.570)(3540.118,865.042)
\put(2230,445){\ellipse{768}{768}}
\put(592,459){\ellipse{656}{656}}
\put(3709,468){\ellipse{656}{656}}
\put(2237,450){\ellipse{656}{656}}
\path(2554,636)(3431,636)
\blacken\path(3311.000,606.000)(3431.000,636.000)(3311.000,666.000)(3311.000,606.000)
\path(3440,276)(2563,276)
\blacken\path(2683.000,306.000)(2563.000,276.000)(2683.000,246.000)(2683.000,306.000)
\path(912,449)(1857,449)
\blacken\path(1737.000,419.000)(1857.000,449.000)(1737.000,479.000)(1737.000,419.000)
\path(12,456)(275,456)
\blacken\path(155.000,426.000)(275.000,456.000)(155.000,486.000)(155.000,426.000)
\put(1212,591){\makebox(0,0)[lb]{\smash{{\SetFigFont{8}{9.6}{\familydefault}{\mddefault}{\updefault}$a,b$}}}}
\put(3642,1131){\makebox(0,0)[lb]{\smash{{\SetFigFont{8}{9.6}{\familydefault}{\mddefault}{\updefault}$b$}}}}
\put(2187,381){\makebox(0,0)[lb]{\smash{{\SetFigFont{8}{9.6}{\familydefault}{\mddefault}{\updefault}$1$}}}}
\put(522,396){\makebox(0,0)[lb]{\smash{{\SetFigFont{8}{9.6}{\familydefault}{\mddefault}{\updefault}$0$}}}}
\put(3612,411){\makebox(0,0)[lb]{\smash{{\SetFigFont{8}{9.6}{\familydefault}{\mddefault}{\updefault}$2$}}}}
\put(2142,1191){\makebox(0,0)[lb]{\smash{{\SetFigFont{8}{9.6}{\familydefault}{\mddefault}{\updefault}$a$}}}}
\put(2967,73){\makebox(0,0)[lb]{\smash{{\SetFigFont{8}{9.6}{\familydefault}{\mddefault}{\updefault}$a$}}}}
\put(2213.000,953.000){\arc{298.530}{2.1848}{7.2400}}
\blacken\path(2040.118,919.042)(2127.000,831.000)(2091.771,949.570)(2040.118,919.042)
\end{picture}
}
\end{center}
\caption{Automaton of a language that is not a left ideal.} 
\label{fig:notideal}
\end{figure}

\begin{proposition}
\label{prop:cycles}
The number of transformations  ruled out by Lemma~\ref{lem:cycles} is
\begin{equation}
\label{eq:cycles}
 \sum_{j=2}^n {{n-1} \choose {j-1}}\, (j-1)!  (j-1) \,n^{n-j}
 =\sum_{j=2}^n\frac{(n-1)! } {(n-j)!}\,   (j-1) \,n^{n-j}. 
\end{equation}
\end{proposition}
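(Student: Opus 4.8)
The plan is to count, by the behavior's period $j-i$, how many transformations $t$ of $Q=\{0,\dots,n-1\}$ are ruled out by Lemma~\ref{lem:cycles}, namely those for which the sequence $p_0, p_1, p_2,\dots$ defined by $p_0=0$ (the initial state, which we may take to be $0$) and $p_{m+1}=t(p_m)$ becomes periodic with period at least $2$. First I would observe that a transformation $t$ is forbidden precisely when the orbit of $0$ under iteration of $t$ eventually enters a cycle of length $\ell\ge 2$; equivalently, the ``rho'' shape traced by $0$ under $t$ has a loop of size $\ell\ge 2$ rather than terminating at a fixed point. So I will stratify the forbidden transformations by the quantity $j$, where the orbit $p_0,p_1,\dots,p_{j-1}$ consists of $j$ distinct states and $p_j=p_i$ for some $0\le i\le j-1$ with $j-i\ge 2$; here $j$ ranges over $2,\dots,n$ since we need at least $2$ distinct states on the orbit and at most $n$.

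Next I would count, for each fixed $j$, the number of such transformations. The orbit of $0$ visits an ordered sequence of $j$ distinct states $p_0=0, p_1,\dots,p_{j-1}$; since $p_0=0$ is fixed, there are $\binom{n-1}{j-1}(j-1)!$ ways to choose and order the remaining $j-1$ states, which accounts for the factor $\binom{n-1}{j-1}(j-1)!$. Having fixed this orbit as a set and sequence, the values $t(p_0)=p_1,\dots,t(p_{j-2})=p_{j-1}$ are determined, and $t(p_{j-1})$ must be one of $p_i$ for $i$ with $j-i\ge 2$, i.e.\ $i\in\{0,1,\dots,j-2\}$, giving $j-1$ choices (this is the factor $(j-1)$ and is where the periodicity-$\ge 2$ constraint enters, excluding only the choice $t(p_{j-1})=p_{j-1}$, which would make the behavior aperiodic). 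Finally, the remaining $n-j$ states outside the orbit may be mapped arbitrarily into $Q$, contributing $n^{n-j}$. Multiplying these factors and summing over $j$ from $2$ to $n$ gives the left-hand side of~(\ref{eq:cycles}), and the right-hand side follows from $\binom{n-1}{j-1}(j-1)! = \frac{(n-1)!}{(n-j)!}$.

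The one point requiring care is the disjointness of the strata and the claim that every forbidden transformation is counted exactly once. I would argue that for a given $t$, the value $j$ — the number of distinct states on the forward orbit of $0$ — is determined by $t$, as is the entire sequence $p_0,\dots,p_{j-1}$ and the ``return'' value $t(p_{j-1})$; conversely, the data (ordered orbit of length $j$, return index $i$ with $j-i\ge 2$, arbitrary images of the other $n-j$ states) reconstructs $t$ uniquely. Hence each forbidden $t$ is produced by exactly one term of the sum, and no allowed $t$ is produced. The main obstacle is thus purely bookkeeping: making sure the range of the return index is exactly $\{0,\dots,j-2\}$ (not $\{i : j-i\ge 2\}$ misread as something smaller) so that the multiplicity is $j-1$ rather than $j$ or $j-2$, and confirming that the $j=n$ term is handled correctly (then $n^{n-j}=n^0=1$ and there are no free states). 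I expect no genuine difficulty beyond this.
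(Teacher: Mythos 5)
Your proof is correct and takes essentially the same approach as the paper's: stratify the forbidden transformations by the length $j$ of the forward orbit of the initial state, count the ordered orbits ($\binom{n-1}{j-1}(j-1)!$ of them), the $j-1$ admissible return points $p_j=p_i$ with $i\le j-2$, and the $n^{n-j}$ arbitrary images of the remaining states. The paper's argument is a terser version of the same count; your explicit check that each forbidden transformation is reconstructed uniquely from these data is left implicit there.
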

\begin{proof}
Consider a behavior $\langle p_0,p_1,\ldots, p_i,p_{i+1},\ldots p_{j-1}; p_j=p_i\rangle$ of length $j$. 
The first state, $p_0$, must be $0$, but the set $\{p_1,\ldots p_{j-1}\}$ can be any subset of cardinality $j-1$ of the remaining $n-1$ states, and there are ${n-1}\choose {j-1}$ such subsets.
The states in each subset can be arranged in any order, giving $(j-1)!$ permutations. Then  there are $j-1$ choices for $p_j$.  Finally, $n-j$ states that are not part of the behavior can have $n$ transformations each, adding the factor $n^{n-j}$. 
\qed
\end{proof}

Lemma~\ref{lem:cycles} provides an upper bound to the syntactic complexity of left ideals, as shown in Table~\ref{tab:ruledout}. However, there is a large gap between this bound and the bound we can achieve, and we know that this bound cannot be reached for $n=3$.
\begin{table}[ht]
\caption{Number of transformations ruled out  by Lemma~\ref{lem:cycles}.}
\label{tab:ruledout}
\begin{center}
$
\begin{array}{| c |c|c| c| c|c|}    
\hline
\ \ n \ \ &\ \ 2 \ \ &\ \ 3 \ \ &\ \ 4 \  \ & \  \  5 \ \ 
 \  &\ \ \  \ldots \  \    \\
\hline  
n^n  & 4&27  &256 &\ 3,125 \ 	 & \  \ldots 	 \ \\
\hline  
\text{ruled out by lemma} & 1&10  &162 &\ 1,556 \ 	 & \  \ldots 	 \ \\
\hline  
\text{an upper bound} & 3&17  &94 &\ 1,569 \ 	 & \  \ldots 	 \ \\
\hline  
(n-1)^{n-1}+n-1& 3& 11  &67 &\ 629 \ 	 & \  \ldots 	 \ \\
\hline
\end{array}
$
\end{center}
\end{table}

\begin{theorem}[Small Left Ideals and Suffix-Closed Languages]
\label{thm:li3}
If $1\le n\le 3$ and $L$ is a left ideal or a suffix-closed language with $\kappa(L)=n$, then $\sig(L)\le n^{n-1}+n-1$. 
Moreover, the bound is tight for $n=1$ if $|\Sig|\ge 1$, 
for $n=2$  if  $|\Sig|\ge 3$, and for $n=3$ if  $|\Sig|\ge 4$.
\end{theorem}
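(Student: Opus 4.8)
The plan is to handle the small cases $n=1,2,3$ separately, establishing both the upper bound $\sig(L)\le n^{n-1}+n-1$ and the tightness claims. The upper bound portion has already been essentially reduced: for $n=1$ the only left ideal is $\Sig^*$, so $\sig(L)=1=1^0+1-1$; for $n=2$ we must show $\sig(L)\le 3$; and for $n=3$ we must show $\sig(L)\le 11$. For $n=2$, the only transformations of a $2$-element set are the identity, the two constants, and the transposition $(0,1)$. By Lemma~\ref{lem:cycles} the behavior of every word is aperiodic, which immediately rules out the transposition (whose behavior on the initial state $0$ would be periodic with period $2$, unless $\delta(0,\cdot)$ fixes $0$, but then it is not a transposition of a reachable configuration relative to the quotient automaton). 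Also by Lemma~\ref{lem:cycles} the empty quotient is excluded, but for $n=2$ both constants can still occur. A short direct argument then gives at most $3$ transformations. For $n=3$, I would invoke Proposition~\ref{prop:cycles}: Lemma~\ref{lem:cycles} rules out $10$ of the $27$ transformations, leaving $17$. This is not yet $11$, so the main work is to rule out $6$ more transformations using additional structural constraints on quotient automata of left ideals.

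For that extra work I would isolate the further necessary conditions already implicit in the left-ideal property. First, since $L$ is a left ideal, every quotient $L_w$ satisfies $L\subseteq L_w$ (because $w x \in L$ whenever $x\in L$, as $L = \Sig^* L$); equivalently, in the quotient automaton the initial state $0$ is "smallest" in the sense that any state reachable from $0$ accepts a superset of the words accepted from $0$. In particular no state other than $0$ can be mapped to $0$ by any non-empty word in a way consistent with minimality together with the ideal condition — more precisely, one shows that if $\delta(q,w)=0=q_0$ for some $q\neq 0$, a periodicity or a non-ideal contradiction arises. Combining this observation with aperiodicity and a case analysis over which of the remaining $17$ transformations are simultaneously realizable by the inputs of a single minimal $3$-state quotient automaton of a left ideal, I expect to eliminate exactly the $6$ surplus transformations, arriving at $11$. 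A convenient bookkeeping device is to note that the surviving transformations must form a semigroup, and a $3$-state automaton realizing $12$ or more transformations would have to contain a sub-configuration violating one of the constraints; this can be checked by a finite (small) enumeration, which the authors state has been verified by computer, but which I would also justify by hand via the $\cL$- and $\cR$-class structure of $T_3$ restricted to transformations fixing a suitable flavor of the initial state.

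For tightness: for $n=1$, $L=\Sig^*$ works for any $|\Sig|\ge 1$. For $n=3$ with $|\Sig|\ge 4$, tightness follows directly from Theorem~\ref{thm:LeftIdeal4} applied with $n=3$, whose automaton $\cA_3$ uses the four inputs $b,c,d,e$ (inputs $a$ and $b$ coinciding) and realizes $3^2+2=11$ transformations, as exhibited in Table~\ref{tab:li3}; one only needs to confirm minimality and the left-ideal property, both of which are given by that theorem. For $n=2$ with $|\Sig|\ge 3$, I would exhibit an explicit ternary left ideal of syntactic complexity $3$ — e.g. a $2$-state automaton with one input acting as the identity, one as the constant ${Q\choose 1}$, and one as the constant ${Q\choose 0}$ guarded so that state $0$ is a sink, which is forced since no non-identity permutation survives; the resulting language $\Sig^*$-prefixed appropriately is a left ideal, and its transformation semigroup is $\{[0,0],[1,1],\text{id}\}$ of size $3$. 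One then checks that no binary alphabet suffices for $n=2$ (a binary $2$-state left-ideal automaton realizes at most $2$ transformations), which pins down the threshold $|\Sig|\ge 3$.

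The main obstacle is the $n=3$ upper bound: closing the gap from the Lemma~\ref{lem:cycles} bound of $17$ down to $11$ requires genuinely new constraints beyond aperiodicity and absence of the empty quotient, and organizing the case analysis so that it is both correct and human-checkable (rather than merely citing the computer verification) is the delicate part. The tightness constructions and the $n=1,2$ cases are routine by comparison.
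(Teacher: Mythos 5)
Your treatment of $n=1$, $n=2$, and the tightness witnesses is sound and essentially matches the paper's, but the heart of the theorem --- the upper bound $\sig(L)\le 11$ for $n=3$ --- is not established, and the one concrete new constraint you propose for it is false. You claim that in the quotient automaton of a left ideal no state $q\neq 0$ can be mapped to the initial state $0$ by a non-empty word. The inclusion $L\subseteq L_w$ that you derive is correct, but it does not imply this: it only says that every quotient contains $L$, not that no quotient equals $L$ again after a further non-empty word. Indeed the extremal automaton $\cA_3$ of Table~\ref{tab:li3} itself violates your constraint: the input $d={2\choose 0}$ maps state $2$ to state $0$, and $bd$ maps state $1$ to state $0$. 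More simply, for $L=\Sig^*a$ over $\{a,b\}$ one has $L_{ab}=L$ while $L_a\neq L$. So the case analysis you propose to build on this observation cannot go through, and the remainder of your $n=3$ argument (``I expect to eliminate exactly the 6 surplus transformations,'' a finite enumeration deferred to the authors' computer check) is a plan rather than a proof.

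The paper closes the gap from $17$ to $11$ by a different, purely semigroup-theoretic device: each of the six transformations that survive Lemma~\ref{lem:cycles} but do not occur in Table~\ref{tab:li3} is paired with a \emph{distinct} transformation that does occur there, in such a way that the composition of the pair is one of the ten transformations forbidden by aperiodicity (for instance $[1,1,0]$ followed by $[0,2,2]$ gives $[2,2,0]$, whose behaviour from state $0$ has period $2$). Since the transformation semigroup of any minimal three-state left-ideal automaton is closed under composition and contains none of the forbidden transformations, it contains at most one member of each of these six disjoint pairs, hence at most $11$ of the $17$ candidates --- independently of the choice of accepting states. This pairing is the missing idea you need; the appeal to $L\subseteq L_w$ should be dropped.
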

\begin{proof}
We consider the three values of $n$ separately. The bounds claimed below have all been verified by a computer program.
\bd
\item[n=1:]
Here, there is only one type of left ideal,  $L=\Sig^*$.
Thus the  bound 1 holds, and  is met by $a^*$ over $\Sig=\{a\}$.
\medskip

\item[n=2:]
There is only one periodic behavior $\langle p_0=0,p_1=1; p_2=p_0\rangle$; hence only   transformation $[1,0]$ is ruled out by Lemma~\ref{lem:cycles}. Thus  the bound 3 holds.
\smallskip

Now consider any left ideal $L$ with $n=2$.
State 1 must be reachable from state 0, say by input~$a$. By Lemma~\ref{lem:cycles}, we cannot have $a:[1,0]$, and so we have $a:[1,1]$.
\medskip
	
	If $\Sig=\{a\}$, then we have the left ideal $L=a^*a$ with $\sig(L)=1$.\\
	Thus $\sig(L)=1$ if $|\Sig|=1$.
	\medskip
	
	If $\Sig=\{a,b\}$, then we have three cases:
		\be
		\item
		If $b:[1,1]$,  then $L=\Sig^*\Sig$ with $\sig(L)=1$.
		\item
		If  $b:[0,0]$,  then  $L=\Sig^*a$ with $\sig(L)=2$.
		\item
		If  $b:[0,1]$, then  $L=\Sig^*a\Sig^*$ with $\sig(L)=2$.
		\ee
Thus $\sig(L)\le2$ if $|\Sig|=2$.
\medskip
	
 If $\Sig=\{a,b,c\}$,  the language $L=\Sig^*a(a+b)^*$ meets the bound 3.
 \medskip
 
\item[n=3:]
For $|\Sig|=1$, there is only one left ideal,  namely $L=\Sig^*aa$, and it has $\sig(L)=2$.
\medskip

For $|\Sig|=2$, we have verified that the number of transformations is  at most 7, and the automaton with inputs $a:[001]$ and $b:[122]$ meets this bound.
\medskip

For $|\Sig|=3$, we have verified that the number of transformations is  at most 9, and the automaton $\cA_3$ of Theorem~\ref{thm:LeftIdeal4} restricted to inputs $b:[0,2,1]$, $d:[0,1,0]$ and $e:[1,1,1]$ meets this bound.
\medskip

Now consider the case $|\Sig|=4$.
For $n=3$, there are three types of periodic behaviors:
$(p_0,p_1; p_2=p_0)$,
$(p_0,p_1,p_2;p_3=p_0)$, and
$(p_0,p_1,p_2;p_3=p_1)$.
The following ten transformations are ruled out by Lemma~\ref{lem:cycles}:
$[1,0,0]$, $[1,0,1]$, $[1,0,2]$, $[1,2,0]$, $[1,2,1]$,
$[2,0,0]$, $[2,1,0]$, $[2,2,0]$, $[2,0,1]$, and $[2,2,1]$.

There are six transformations that are not ruled out  by Lemma~\ref{lem:cycles} and that do not appear in Table~\ref{tab:li3}, namely: $[1,1,0]$, $[1,1,2]$, $[1,2,2]$, $[2,0,2]$, $[2,1,1]$,  and 
$[2,1,2]$.
Each of these transformations, when followed by a transformation from Table~\ref{tab:li3}, results in a transformation ruled out  by Lemma~\ref{lem:cycles}:\\
$t_1:[1,1,0]$ and $cb:[0,2,2]$ yield $t_1cb:[2,2,0]$,\\
$t_2:[1,1,2]$ and $db:[0,2,0]$ yield $t_2db:[2,2,0]$,\\
$t_3:[1,2,2]$ and $d:[0,1,0]$ yield $t_3d:[1,0,0]$,\\
$t_4:[2,0,2]$ and $c:[0,1,1]$ yield $t_4c:[1,0,1]$,\\
$t_5:[2,1,1]$ and $bdb:[0,0,2]$ yield $t_5bdb:[2,0,0]$,\\
$t_6:[2,1,2]$ and $bd:[0,0,1]$ yield $t_6bd:[1,0,1]$.\\
All these conflicts are independent of the set of accepting states. Furthermore, each transformation  not ruled out by  Lemma~\ref{lem:cycles} conflicts with a \emph{different} transformation from Table~\ref{tab:li3}. So at most one transformation can be chosen from each pair, showing that there cannot be more than 11 transformations for any automaton with three states.
Hence the syntactic complexity of any left ideal with quotient complexity 3 is at most 11, and the example of Table~\ref{tab:li3} shows that this bound is tight.	\qed
\ed

\end{proof}

Table~\ref{tab:LISummary} summarizes our results concerning left ideals. 
The figures in bold type are tight upper bounds. The other  complexities are achievable, but we have no proof that they are upper bounds. In general, there are many solutions with the same complexity.

The complexity 17 for $n=4$, $|\Sig|=2$ is reached with the inputs $a:[1,2,3,3]$ and $b:[0,0,1,2]$.
The complexity 25 for $n=4$, $|\Sig|=3$ is met by $\cA_4$ of Theorem~\ref{thm:LeftIdeal4}
restricted to $a,d,e$. 
The complexity 64 for $n=4$, $|\Sig|=4$ is met by $\cA_4$ of Theorem~\ref{thm:LeftIdeal4}
restricted to $a,c,d,e$. 

The complexity 34 for $n=4$, $|\Sig|=2$ is reached with the inputs $a:[1,2,3,4,4]$ and $b:[0,0,1,2,3]$.
The complexity 65 for $n=5$, $|\Sig|=3$ is met by $\cA_5$ of Theorem~\ref{thm:LeftIdeal4}
restricted to $a,d,e$. 
The complexity of 453 for $n=5$, $|\Sig|=4$ is met by $\cA_5$ of Theorem~\ref{thm:LeftIdeal4}
restricted to $a,c,d,e$. 
\medskip

\begin{table}[ht]
\caption{Syntactic complexities for left ideals.}
\label{tab:LISummary}
\begin{center}
$
\begin{array}{| c ||c|c| c| c|c|c|c|}    
\hline
\ \  \ \ &\ \ n=1 \ \ &\ \ n=2 \ \ &\ \ n=3 \  \ & \  \  n=4 \ \ 
&  \ n=5 \  &\ \ \ \ 
&\  n=n \  
  \\
\hline  \hline
  |\Sig|=1
& \bf 1&\bf 1  &\bf 2&\bf  3  	&\bf 4    &   \ldots 	& \bf n-1 \\
\hline
|\Sig|=2 
& - & \bf 2 &\bf 7 &   	17	& 34  &     \ldots&  \\
\hline
|\Sig|=3
& - &\bf 3 &\bf 9  & 25 	& 65  &   \ldots  	& \\
\hline
|\Sig|=4
& - &-  & \bf 11 & 64 	& 453 &      \ldots	& \\\hline
|\Sig|=5
& - &-  & -        & 67 	& 629 &   \ldots	& n^{n-1} +n-1 \\
\hline
\end{array}
$
\end{center}
\end{table}

As was the case with right ideals, for our left ideal with maximal syntactic complexity, the reverse language has maximal state complexity, as the next result shows. 
This time, however, we require an alphabet of four letters.

\begin{theorem}[Reverse of Left Ideal]
The reverse of the left ideal accepted by automaton $\cA_n$ of Theorem~\ref{thm:LeftIdeal4} restricted to $\{a,c,d,e\}$
has $2^{n-1}+1$ quotients, which is the maximum possible for a left ideal.
\end{theorem}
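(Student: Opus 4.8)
The plan is to work out the reverse automaton at the level of subsets and match it against a short structural upper bound. Write $\cB_n$ for the automaton $\cA_n$ of Theorem~\ref{thm:LeftIdeal4} restricted to $\{a,c,d,e\}$, take $F=\{n-1\}$ (the argument is insensitive to the choice of non-empty $F\subseteq Q\setminus\{0\}$), put $L=L(\cB_n)$, and let $\cD$ be the subset automaton of the reverse NFA $\cB_n^R$, so that $\cD$ has start state $F$, a subset $S$ is accepting iff $0\in S$, and $\Delta(S,x)=\{p\in Q:\delta(p,x)\in S\}$. The first step is to compute, for $S\subseteq\{1,\dots,n-1\}$, the action of each letter: $a$ acts as the inverse of the cyclic permutation $\gamma:1\mapsto2\mapsto\cdots\mapsto n-1\mapsto1$; $d$ sends $S\mapsto S\setminus\{n-1\}$; $c$ sends $S\mapsto S\cup\{n-1\}$ if $1\in S$ and $S\mapsto S\setminus\{n-1\}$ otherwise; and $e$ sends $S\mapsto Q$ if $1\in S$ and $S\mapsto\emp$ otherwise. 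Hence $0$ can enter a subset only through $e$ (when $1\in S$), in which case the result is the full set $Q$, which is a sink of $\cD$; so every state reachable in $\cD$ lies in $\cP(\{1,\dots,n-1\})\cup\{Q\}$, giving at most $2^{n-1}+1$ reachable states.

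For the maximality claim over \emph{all} left ideals I would invoke the known bound~\cite{BJL10} that the reverse of a left ideal with $n$ quotients has at most $2^{n-1}+1$ quotients, or reprove it in two lines: since $L^R$ is a right ideal, any word $w\in L^R$ has quotient $(L^R)_w=\Sig^*$, so every reachable subset containing $0$ represents the single quotient $\Sig^*$, leaving only the $2^{n-1}$ subsets of $Q\setminus\{0\}$ as the remaining quotients. It then remains to show that $\cB_n$ attains the value, i.e. that all $2^{n-1}$ subsets of $\{1,\dots,n-1\}$ together with $Q$ are reachable in $\cD$ and pairwise distinguishable.

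For reachability, $Q$ is reached from $\{n-1\}$ by $a^{n-2}e$ (first landing on $\{1\}$, then jumping to $Q$). For the subsets of $\{1,\dots,n-1\}$ I would proceed in two stages. First, $U=\{1,\dots,n-1\}$ is reachable: from $\{1\}$ the word $(c\,a^{n-2})^{n-3}c$ walks through $\{1\}\to\{1,n-1\}\to\{1,2\}\to\{1,2,n-1\}\to\{1,2,3\}\to\cdots\to\{1,\dots,n-2\}\to U$, since $a^{n-2}$ acts as $\gamma$ and $c$ adjoins $n-1$ whenever $1$ is present. Second, every proper subset $T$ of $U$ is reachable, by induction on $|U\setminus T|$ with base case $T=U$: if $m\in U\setminus T$ and $t$ is the unique exponent with $\gamma^t(n-1)=m$, then applying $a^t\,d\,a^{n-1-t}$ to $T\cup\{m\}$ (reachable by induction) returns $T$, because $a^t$ moves $m$ onto position $n-1$, $d$ erases it (note $n-1\notin\gamma^{-t}(T)$), and $a^{n-1-t}$ rotates back. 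For distinguishability, $Q$ is accepting while every subset of $\{1,\dots,n-1\}$ is not, so $\eps$ separates $Q$ from all of them; and if $S,T\subseteq\{1,\dots,n-1\}$ are distinct with $i\in S\setminus T$, then since $\delta(1,a^{i-1})=i$ the word $a^{i-1}e$ sends $S$ to $Q$ (accepting) and $T$ to $\emp$ (a rejecting sink). Thus $\cD$ has exactly $2^{n-1}+1$ reachable, pairwise distinguishable states, so $\kappa(L^R)=2^{n-1}+1$, which is the claimed maximum.

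The main obstacle is the reachability of all subsets, and specifically that $c$ is only \emph{conditionally} productive — it overwrites position $n-1$ by position $1$, so one cannot simply adjoin a chosen element. The device that unlocks the proof is to reach $U$ first, keeping position $1$ continuously occupied so that $c$ always adds, and then to carve out the complement of an arbitrary target one element at a time using only the permutation $a$ and the eraser $d$. Some care is also needed regarding minimization: among the reachable subsets, exactly those containing $0$ collapse (all to $\Sig^*$), while none of the $2^{n-1}$ subsets of $\{1,\dots,n-1\}$ do, so that the count of reachable states of $\cD$ really equals the number of quotients of $L^R$.
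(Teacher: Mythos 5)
Your proof is correct and follows essentially the same route as the paper's own: reach $Q$ by $a^{n-2}e$, build the full set $\{1,\ldots,n-1\}$ by alternating $c$ with rotations (the paper uses $(a^{n-2}c)^{n-2}$ from the start state, you use $(c\,a^{n-2})^{n-3}c$ from $\{1\}$ -- the same trick), carve out arbitrary subsets with the rotate--delete--rotate-back gadget $a^{t}da^{n-1-t}$, and distinguish subsets with the words $a^{i-1}e$. The only difference is that you additionally make explicit the letter actions on subsets, the $2^{n-1}+1$ upper bound for reverses of left ideals, and the minimization subtlety (all accepting subsets collapsing to $\Sigma^*$), which the paper leaves implicit or cites.
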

\begin{proof}
Consider the subset construction applied to the nondeterministic automaton of Fig.~\ref{fig:LeftReverse}.
First we show that  the subset $Q=\{0,1,\ldots,n-1\}$ and all subsets of $P=Q\setminus\{0\}$ are reachable. 
The word $a^{n-2}e$ reaches $Q$, and 
the word $(a^{n-2}c)^{n-2}$ reaches $P$.
Now suppose we have a set $S$ of $k$ elements,
$S=\{i_1,i_2,\ldots,i_k\}$, where $\{1\le i_1<i_2<\cdots < i_k\le n-1\}$. To delete the $j$th element of $S$ apply $a^{i_j} da^{n-1-i{_j}}$.
Hence all subsets of $P$ can be reached.

Note now that $a^{i-1}e$ is accepted only from state $i$, for $i=1,\ldots,n-1$, and
the empty word is accepted only from state 0.
It follows that all subsets of $P$ are pairwise distinguishable. 
\qed
\end{proof}

\begin{figure}[h]
\begin{center}
\setlength{\unitlength}{0.00052493in}
\begingroup\makeatletter\ifx\SetFigFont\undefined%
\gdef\SetFigFont#1#2#3#4#5{%
  \reset@font\fontsize{#1}{#2pt}%
  \fontfamily{#3}\fontseries{#4}\fontshape{#5}%
  \selectfont}%
\fi\endgroup%
{\renewcommand{\dashlinestretch}{30}
\begin{picture}(7724,2973)(0,-10)
\put(5912,2706){\makebox(0,0)[lb]{\smash{{\SetFigFont{8}{9.6}{\familydefault}{\mddefault}{\updefault}$c,d$}}}}
\put(1485.000,2448.333){\arc{333.333}{2.2143}{7.2105}}
\blacken\path(1622.638,2411.417)(1585.000,2315.000)(1663.107,2382.913)(1622.638,2411.417)
\put(3670.000,2433.333){\arc{333.333}{2.2143}{7.2105}}
\blacken\path(3807.638,2396.417)(3770.000,2300.000)(3848.107,2367.913)(3807.638,2396.417)
\put(6077.000,2433.333){\arc{333.333}{2.2143}{7.2105}}
\blacken\path(6214.638,2396.417)(6177.000,2300.000)(6255.107,2367.913)(6214.638,2396.417)
\put(367.000,2471.333){\arc{333.333}{2.2143}{7.2105}}
\blacken\path(504.638,2434.417)(467.000,2338.000)(545.107,2405.913)(504.638,2434.417)
\put(2573,2000){\ellipse{630}{630}}
\put(1503,1998){\ellipse{630}{630}}
\put(6093,2000){\ellipse{630}{630}}
\put(3665,2010){\ellipse{630}{630}}
\put(7123,1989){\ellipse{630}{630}}
\put(368,1990){\ellipse{720}{720}}
\put(367,1992){\ellipse{630}{630}}
\blacken\path(1930.000,2030.000)(1810.000,2000.000)(1930.000,1970.000)(1930.000,2030.000)
\path(1810,2000)(2260,2000)
\blacken\path(3010.000,2030.000)(2890.000,2000.000)(3010.000,1970.000)(3010.000,2030.000)
\path(2890,2000)(3340,2000)
\blacken\path(850.000,2030.000)(730.000,2000.000)(850.000,1970.000)(850.000,2030.000)
\path(730,2000)(1180,2000)
\blacken\path(6527.000,2030.000)(6407.000,2000.000)(6527.000,1970.000)(6527.000,2030.000)
\path(6407,2000)(6805,2000)
\blacken\path(5433.000,2030.000)(5313.000,2000.000)(5433.000,1970.000)(5433.000,2030.000)
\path(5313,2000)(5763,2000)
\blacken\path(4098.000,2030.000)(3978.000,2000.000)(4098.000,1970.000)(4098.000,2030.000)
\path(3978,2000)(4428,2000)
\blacken\path(2200.000,1798.000)(2320.000,1828.000)(2200.000,1858.000)(2200.000,1798.000)
\path(2320,1828)(1765,1828)
\path(7712,1992)(7442,1992)
\blacken\path(7562.000,2022.000)(7442.000,1992.000)(7562.000,1962.000)(7562.000,2022.000)
\blacken\path(3474.281,1615.874)(3559.000,1706.000)(3441.836,1666.344)(3474.281,1615.874)
\path(3559,1706)(3545,1697)(3536,1690)
	(3524,1682)(3509,1673)(3492,1661)
	(3471,1647)(3447,1632)(3421,1616)
	(3392,1598)(3362,1580)(3330,1561)
	(3296,1542)(3261,1523)(3225,1504)
	(3187,1485)(3149,1467)(3109,1449)
	(3068,1433)(3025,1417)(2981,1402)
	(2936,1388)(2888,1376)(2838,1365)
	(2786,1355)(2732,1348)(2676,1342)
	(2619,1339)(2561,1339)(2500,1342)
	(2441,1348)(2384,1356)(2330,1367)
	(2280,1380)(2233,1394)(2188,1410)
	(2147,1427)(2107,1446)(2070,1465)
	(2034,1485)(2000,1506)(1968,1527)
	(1937,1549)(1907,1570)(1879,1592)
	(1852,1613)(1828,1634)(1805,1653)
	(1784,1672)(1766,1688)(1751,1702)
	(1737,1715)(1727,1725)(1719,1732)
	(1713,1738)(1710,1741)(1708,1743)(1707,1744)
\blacken\path(5816.023,1657.736)(5874.000,1767.000)(5771.425,1697.874)(5816.023,1657.736)
\path(5874,1767)(5865,1757)(5859,1751)
	(5852,1742)(5842,1731)(5830,1718)
	(5816,1703)(5800,1687)(5782,1668)
	(5762,1648)(5741,1627)(5718,1605)
	(5693,1582)(5667,1558)(5639,1533)
	(5610,1509)(5579,1484)(5547,1458)
	(5512,1432)(5475,1406)(5436,1379)
	(5394,1352)(5349,1325)(5300,1296)
	(5248,1268)(5192,1239)(5132,1209)
	(5070,1180)(5004,1152)(4941,1126)
	(4879,1102)(4819,1080)(4762,1060)
	(4710,1042)(4662,1025)(4619,1011)
	(4582,999)(4549,988)(4521,978)
	(4497,970)(4477,963)(4459,957)
	(4444,952)(4431,948)(4419,944)
	(4408,940)(4396,937)(4384,934)
	(4370,931)(4354,927)(4336,924)
	(4314,920)(4288,916)(4258,912)
	(4222,907)(4181,902)(4134,897)
	(4082,892)(4023,887)(3959,883)
	(3891,879)(3819,876)(3745,875)
	(3671,875)(3599,877)(3530,880)
	(3466,884)(3408,888)(3355,893)
	(3308,898)(3267,902)(3232,907)
	(3201,911)(3175,915)(3153,919)
	(3135,922)(3119,926)(3105,930)
	(3093,933)(3082,937)(3070,941)
	(3058,945)(3045,949)(3030,955)
	(3013,960)(2992,967)(2968,975)
	(2939,984)(2906,995)(2868,1006)
	(2825,1020)(2777,1035)(2723,1052)
	(2666,1071)(2605,1091)(2541,1114)
	(2477,1137)(2410,1163)(2345,1190)
	(2284,1216)(2227,1243)(2173,1269)
	(2123,1294)(2076,1319)(2032,1344)
	(1991,1368)(1953,1391)(1917,1415)
	(1883,1438)(1850,1460)(1820,1483)
	(1791,1504)(1763,1526)(1737,1547)
	(1712,1567)(1689,1586)(1668,1603)
	(1649,1620)(1632,1635)(1617,1648)
	(1605,1660)(1594,1670)(1586,1677)
	(1579,1683)(1575,1687)(1572,1690)
	(1571,1691)(1570,1692)
\path(640,1745)(647,1745)
\path(647,1745)(640,1745)
\blacken\path(6794.000,1680.000)(6872.000,1776.000)(6758.000,1728.000)(6794.000,1680.000)
\path(6872,1776)(6864,1770)(6858,1766)
	(6850,1760)(6840,1752)(6827,1743)
	(6812,1732)(6793,1719)(6772,1703)
	(6748,1686)(6721,1666)(6690,1645)
	(6657,1621)(6620,1596)(6581,1568)
	(6539,1539)(6494,1508)(6447,1476)
	(6397,1443)(6346,1408)(6292,1372)
	(6236,1336)(6179,1299)(6120,1261)
	(6059,1223)(5997,1185)(5934,1146)
	(5870,1108)(5805,1070)(5739,1032)
	(5672,994)(5604,957)(5535,920)
	(5465,884)(5394,849)(5323,814)
	(5250,780)(5176,748)(5101,716)
	(5024,684)(4946,655)(4867,626)
	(4787,598)(4705,572)(4621,547)
	(4535,523)(4448,501)(4360,481)
	(4269,463)(4178,447)(4085,433)
	(3991,421)(3897,412)(3802,406)
	(3708,403)(3604,403)(3503,407)
	(3403,414)(3307,424)(3215,438)
	(3125,454)(3040,472)(2958,493)
	(2879,516)(2803,541)(2731,568)
	(2662,596)(2595,626)(2531,658)
	(2470,690)(2411,724)(2354,760)
	(2299,796)(2246,833)(2195,871)
	(2145,910)(2097,949)(2051,989)
	(2006,1030)(1962,1070)(1920,1111)
	(1879,1152)(1840,1192)(1802,1232)
	(1766,1272)(1731,1310)(1698,1348)
	(1667,1384)(1638,1418)(1611,1451)
	(1585,1482)(1562,1511)(1541,1538)
	(1522,1562)(1505,1584)(1490,1604)
	(1477,1621)(1467,1635)(1458,1647)
	(1450,1657)(1445,1664)(1441,1670)
	(1438,1674)(1436,1676)(1435,1677)(1435,1678)
\blacken\path(7011.360,1556.934)(7075.000,1663.000)(6968.934,1599.360)(7011.360,1556.934)
\path(7075,1663)(7063,1651)(7057,1645)
	(7049,1637)(7038,1626)(7025,1614)
	(7010,1599)(6993,1582)(6973,1562)
	(6950,1540)(6924,1515)(6896,1489)
	(6866,1460)(6832,1428)(6797,1395)
	(6759,1360)(6718,1323)(6675,1284)
	(6630,1244)(6584,1203)(6535,1161)
	(6484,1117)(6432,1073)(6378,1029)
	(6323,984)(6266,939)(6208,893)
	(6149,848)(6089,803)(6027,758)
	(5964,714)(5900,670)(5835,627)
	(5768,585)(5700,543)(5631,503)
	(5560,463)(5488,424)(5414,386)
	(5339,350)(5261,314)(5182,281)
	(5100,248)(5017,217)(4931,188)
	(4842,160)(4752,134)(4658,110)
	(4563,88)(4465,69)(4364,52)
	(4261,38)(4157,27)(4051,18)
	(3943,13)(3835,12)(3727,14)
	(3619,20)(3513,29)(3408,41)
	(3305,56)(3205,73)(3107,93)
	(3011,116)(2917,140)(2826,167)
	(2738,195)(2651,225)(2567,257)
	(2486,290)(2406,324)(2328,360)
	(2252,397)(2178,435)(2105,474)
	(2034,515)(1965,556)(1896,598)
	(1830,641)(1764,684)(1699,729)
	(1636,773)(1574,818)(1513,864)
	(1454,909)(1395,955)(1338,1001)
	(1282,1046)(1228,1091)(1176,1135)
	(1125,1179)(1076,1222)(1028,1263)
	(983,1304)(940,1343)(899,1380)
	(861,1415)(825,1449)(791,1480)
	(760,1510)(732,1536)(706,1561)
	(683,1583)(663,1603)(645,1620)
	(630,1635)(617,1648)(607,1658)
	(598,1667)(592,1673)(587,1678)
	(584,1681)(582,1683)(580,1685)
\put(1442,1933){\makebox(0,0)[lb]{\smash{{\SetFigFont{8}{9.6}{\rmdefault}{\mddefault}{\updefault}$1$}}}}
\put(5828,1937){\makebox(0,0)[lb]{\smash{{\SetFigFont{7}{8.4}{\familydefault}{\mddefault}{\updefault}$n-2$}}}}
\put(3010,2120){\makebox(0,0)[lb]{\smash{{\SetFigFont{8}{9.6}{\familydefault}{\mddefault}{\updefault}$a$}}}}
\put(2514,1925){\makebox(0,0)[lb]{\smash{{\SetFigFont{8}{9.6}{\rmdefault}{\mddefault}{\updefault}$2$}}}}
\put(3602,1941){\makebox(0,0)[lb]{\smash{{\SetFigFont{8}{9.6}{\rmdefault}{\mddefault}{\updefault}$3$}}}}
\put(6862,1937){\makebox(0,0)[lb]{\smash{{\SetFigFont{7}{8.4}{\familydefault}{\mddefault}{\updefault}$n-1$}}}}
\put(858,2105){\makebox(0,0)[lb]{\smash{{\SetFigFont{8}{9.6}{\familydefault}{\mddefault}{\updefault}$e$}}}}
\put(4083,2120){\makebox(0,0)[lb]{\smash{{\SetFigFont{8}{9.6}{\familydefault}{\mddefault}{\updefault}$a$}}}}
\put(5448,2134){\makebox(0,0)[lb]{\smash{{\SetFigFont{8}{9.6}{\familydefault}{\mddefault}{\updefault}$a$}}}}
\put(6505,2127){\makebox(0,0)[lb]{\smash{{\SetFigFont{8}{9.6}{\familydefault}{\mddefault}{\updefault}$a$}}}}
\put(2590,1423){\makebox(0,0)[lb]{\smash{{\SetFigFont{8}{9.6}{\familydefault}{\mddefault}{\updefault}$e$}}}}
\put(1973,1619){\makebox(0,0)[lb]{\smash{{\SetFigFont{8}{9.6}{\familydefault}{\mddefault}{\updefault}$e$}}}}
\put(3633,945){\makebox(0,0)[lb]{\smash{{\SetFigFont{8}{9.6}{\familydefault}{\mddefault}{\updefault}$e$}}}}
\put(3485,531){\makebox(0,0)[lb]{\smash{{\SetFigFont{8}{9.6}{\familydefault}{\mddefault}{\updefault}$a,c,e$}}}}
\put(4698,1941){\makebox(0,0)[lb]{\smash{{\SetFigFont{8}{9.6}{\familydefault}{\mddefault}{\updefault}$\cdots$}}}}
\put(1944,2121){\makebox(0,0)[lb]{\smash{{\SetFigFont{8}{9.6}{\familydefault}{\mddefault}{\updefault}$a$}}}}
\put(294,1918){\makebox(0,0)[lb]{\smash{{\SetFigFont{8}{9.6}{\rmdefault}{\mddefault}{\updefault}$0$}}}}
\put(3671,90){\makebox(0,0)[lb]{\smash{{\SetFigFont{8}{9.6}{\familydefault}{\mddefault}{\updefault}$d$}}}}
\put(98,2787){\makebox(0,0)[lb]{\smash{{\SetFigFont{8}{9.6}{\familydefault}{\mddefault}{\updefault}$a,c,d$}}}}
\put(1269,2743){\makebox(0,0)[lb]{\smash{{\SetFigFont{8}{9.6}{\familydefault}{\mddefault}{\updefault}$c,d,e$}}}}
\put(2423,2727){\makebox(0,0)[lb]{\smash{{\SetFigFont{8}{9.6}{\familydefault}{\mddefault}{\updefault}$c,d$}}}}
\put(3521,2720){\makebox(0,0)[lb]{\smash{{\SetFigFont{8}{9.6}{\familydefault}{\mddefault}{\updefault}$c,d$}}}}
\put(2565.000,2448.333){\arc{333.333}{2.2143}{7.2105}}
\blacken\path(2702.638,2411.417)(2665.000,2315.000)(2743.107,2382.913)(2702.638,2411.417)
\end{picture}
}
\end{center}
\caption{Nondeterministic automaton of the reverse of a left ideal.} 
\label{fig:LeftReverse}
\end{figure}

\section{Two-Sided Ideals and Factor-Closed Languages}
\label{sec:2Ideals}
We now consider two-sided ideals and factor-closed languages. We provide support for the following conjecture:
\medskip

\noin
{\bf Conjecture 2 (Two-Sided Ideals and Factor-Closed Languages).}
\label{con:leftideals}
{\it If $L$ is a two-sided ideal or a factor-closed language with quotient complexity $\kappa(L)=n\ge 2$, then it has syntactic complexity $\sig(L)\le n^{n-2} + (n-2) 2^{n-2} +1$.}
\medskip

We show in this section that this complexity can be reached.
Since every factor-closed language other than $\Sigma^*$ is the complement of a two-sided ideal, and complementation preserves syntactic complexity, it suffices to consider only two-sided ideals.

For $n=1$, the bound of the conjecture does not apply. The only two-sided ideal is $L=\Sig^*$, and it has $\sig(L)=1$.

For $n=2$ and $\Sig=\{a,b\}$, the only two-sided ideal is $L=\Sig^*a\Sig^*$, and it has $\sig(L)=2$, which is the bound of the conjecture.

For $n=3$ and $\Sig=\{a,b,c\}$, the automaton with inputs $a:[1,2,2]$, $b:[0,0,2]$, and $c:[0,1,2]$ has $\sig(L)=6$, which is the bound of the conjecture.

\begin{definition}
\label{def:Alin}
Let $n\ge 4$, and let $\cA_n$ be the  automaton 
$$\cA_n =(\{0,\ldots,n-1\},\{a,b,c,d,e,f\},\delta, 0,\{n-1\}),$$
where $a=(1,2,\ldots,n-2)$,
$b=(1,2)$,
$c={n-2\choose 1}$,
$d={n-2\choose 0}$,
for $i=0,\ldots,n-2$, $\delta(i,e)=1$ and $\delta(n-1,e)=n-1$,
and $f={1\choose n-1}$.
The state graph of $\cA_n$ is shown in Fig.~\ref{fig:TwoSided}.
For $n=4$, inputs $a$ and $b$ coincide.
\end{definition}

\begin{figure}[hbt]
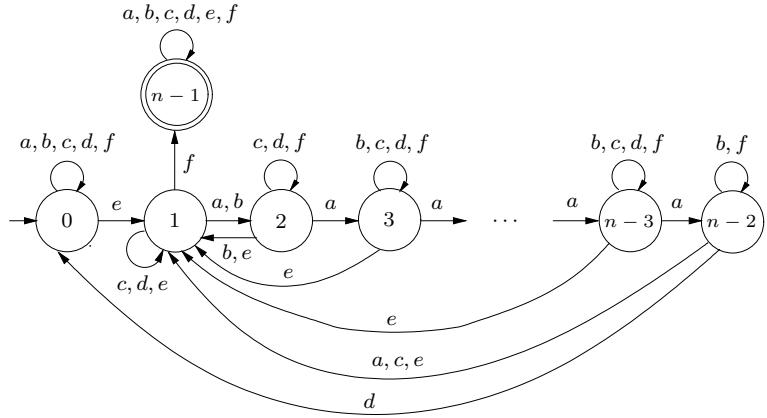

\begin{center}
\input TwoSided.eepic
\end{center}
\caption{Automaton $\cA_n$ of a two-sided ideal with $n^{n-2} + (n-2) 2^{n-2} +1$ transformations.} 
\label{fig:TwoSided}
\end{figure}

\begin{theorem}[Two-Sided Ideals and Factor-Closed Languages]
\label{thm:TwoSided}
Automaton $\cA_n$ of Fig.~\ref{fig:TwoSided} is minimal and the language $L=L(\cA_n)$ accepted by $\cA_n$ is a two-sided ideal and has syntactic complexity $\sig(L)=n^{n-2} + (n-2) 2^{n-2} +1$.
\end{theorem}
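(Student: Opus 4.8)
The plan is to follow the same three-part template used for right and left ideals, adapted to the two-sided setting. First I would establish that $\cA_n$ is minimal and that $L=L(\cA_n)$ is a two-sided ideal. Minimality should follow by an argument analogous to Lemma~\ref{lem:BC}: the states $\{1,\ldots,n-2\}$ together with the sink $0$ should form a uniformly-minimal-type structure under the inputs $\{a,b,c,d\}$ (with $a$ a cycle on $\{1,\ldots,n-2\}$ and $d$ sending $n-2$ to $0$), so that all pairs of states among $\{0,1,\ldots,n-2\}$ are separated; state $n-1$ is then separated from the rest because it is the unique accepting state and is absorbing under several inputs (so $\eps$ distinguishes $n-1$ from all others, and quotients differ). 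For the two-sided ideal property, since $0$ is a sink under $\{a,b,c,d,f\}$ and $n-1$ is absorbing, one checks $L=\Sig^*L\Sig^*$: right-invariance ($L=L\Sig^*$) is immediate because $n-1$ is a sink accepting state; left-invariance ($L=\Sig^*L$) uses the input $e$, which collapses $\{0,\ldots,n-2\}$ to $1$ and fixes $n-1$, exactly as in the proof of Theorem~\ref{thm:LeftIdeal4}: any $w\in L$ has the form $uev$ with $\delta(0,ue)=1$, $v\in L_e$, and prepending any letter $h$ still reaches state $1$ after the first $e$, so $hw\in L$.

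Second, and this is the heart of the matter, I would count the transformations realizable by $\cA_n$ and show there are exactly $n^{n-2}+(n-2)2^{n-2}+1$ of them. Since $L$ has both $\Sig^*$ (the quotient at $n-1$) and $\emp$ is ruled out, and $0$ is a non-returning sink while $n-1$ is absorbing, every realizable transformation $t$ must fix $0$ and fix $n-1$; thus $t$ is determined by the images $i_1,\ldots,i_{n-2}\in\{0,1,\ldots,n-1\}$ of the ``middle'' states. The three summands should correspond to: (i) transformations with all images $i_k\in\{1,\ldots,n-2\}$ together with those hitting $0$ — realized, as in Theorems~\ref{thm:rightideals} and \ref{thm:LeftIdeal4}, by the Salomaa generators $a,b,d$ via a conjugation trick $t=srdr$ (using a transposition $r$ to move a ``free'' value into position $n-2$ and then $d$ to send it to $0$), giving the $n^{n-2}$ count when $n-1$ is not in the image; (ii) transformations that send some middle states to $n-1$ — here the letter $f={1\choose n-1}$ is the new ingredient, and composing $e$ or a permutation with $f$ and then ``spreading'' via the other generators should realize, for each choice of which nonempty-or-arbitrary subset of middle states maps to $n-1$ and each choice of the restricted map on the rest, the $(n-2)2^{n-2}$ transformations (the factor $2^{n-2}$ being the subsets of $\{1,\ldots,n-2\}$ that get sent to $n-1$, the factor $n-2$ being something like the choice tied to $e$); and (iii) the $+1$ for the constant transformation to $n-1$, i.e. $e$ followed by $f$, or a similar short word. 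I would make each of these realizability claims precise by exhibiting the word, exactly in the style of the $srdr$ computations already in the paper, and then argue the upper bound by showing no other transformation is possible: any realizable $t$ fixes $0$ and $n-1$, and a case analysis on whether $n-1$ is in the image of the middle block, combined with the constraint from Lemma~\ref{lem:cycles}-type aperiodicity (two-sided ideals are left ideals, so behaviors are aperiodic) plus the fact that once a middle state is mapped to $n-1$ it is ``absorbed,'' forces $t$ into one of the three families, and a careful enumeration gives exactly the claimed total.

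The main obstacle I expect is the exact bookkeeping in part (ii): pinning down precisely which transformations with $n-1$ in the image of the middle block are realizable, and proving that the count is exactly $(n-2)2^{n-2}$ rather than something larger or smaller. The subtlety is that $f$ only moves state $1$ to $n-1$, so to send an arbitrary middle state to $n-1$ one must first route it to state $1$ (costing a permutation or $e$), and the interaction between ``which states are collapsed to $n-1$'' and ``what happens to the remaining states'' is constrained in a way that must be shown to be neither over- nor under-counted. I would handle this by partitioning these transformations according to the set $S=t^{-1}(n-1)\cap\{1,\ldots,n-2\}$ and the induced map $\{1,\ldots,n-2\}\setminus S\to\{0,\ldots,n-2\}$, showing realizability family by family and matching the resulting cardinality $\sum_S (\text{count of induced maps})$ to $(n-2)2^{n-2}$; verifying the upper bound (that nothing outside these families is realizable) is the dual difficulty, handled by tracking which states can ever be ``activated'' toward $n-1$. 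As in the companion theorems, the small cases $n\le 5$ (in particular $n=4$, where $a$ and $b$ coincide) I would note have been verified by computer.
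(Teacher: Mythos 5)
There is a genuine gap in the counting part, and it stems from a wrong structural premise that makes your proposal internally inconsistent. You assert that ``every realizable transformation $t$ must fix $0$ and fix $n-1$,'' but state $0$ is \emph{not} a sink for the letter $e$: Definition~\ref{def:Alin} gives $\delta(0,e)=1$. Indeed the constant transformation ${Q\choose n-1}$, which you yourself include as the ``$+1$'', moves $0$ to $n-1$; and if every realizable transformation fixed $0$ there could be at most $n^{n-2}$ of them, short of the claimed total. Because of this, your three summands are misattributed. In the correct decomposition, the $n^{n-2}$ term counts \emph{all} transformations fixing $0$ and $n-1$, with each of the $n-2$ middle states mapping anywhere in $\{0,\ldots,n-1\}$, including to $n-1$: the images in $\{0,\ldots,n-2\}$ come from $a,b,c,d$ acting on the sink-$0$ part, and the images equal to $n-1$ are produced by the conjugation $t=srfr$ with $r=(1,j)$ and $f={1\choose n-1}$ --- so $f$, not $d$, is the letter doing the work there. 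Your family (i) (``$n-1$ not in the image'') has only $(n-1)^{n-2}$ members, and your family (ii), allowing ``each choice of the restricted map on the rest,'' would then have to absorb the remaining $n^{n-2}-(n-1)^{n-2}$ maps as well, which is not $(n-2)2^{n-2}$; the bookkeeping you flag as the main obstacle does not close under your classification.

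The missing idea is a dichotomy on the letter $e$, which is the only letter that moves state $0$ and which simultaneously collapses all of $\{0,\ldots,n-2\}$ onto state $1$. A word with no occurrence of $e$ fixes $0$ and $n-1$, hence lies in the $n^{n-2}$ family (all of which are realizable as above). A word $w=uev$ with $v$ free of $e$ sends \emph{every} state into $\{\delta(1,v),\,n-1\}$; writing $k=\delta(1,v)$, if $k$ is a middle state the resulting transformation maps $0$ to $k$, each middle state to $k$ or $n-1$, and $n-1$ to $n-1$ --- the ``restricted map on the rest'' is forced to be the constant $k$, not arbitrary. This is exactly where $(n-2)2^{n-2}$ comes from: $n-2$ choices of $k$ times $2^{n-2}$ choices of the subset of middle states sent to $n-1$ (realized by conjugates of $f$ by powers of $a$, followed by $e$ and a power of $a$), plus the single constant ${Q\choose n-1}=ef$ when $k=n-1$, while $k=0$ falls back into the first family. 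This $e$-dichotomy, rather than aperiodicity or absorption at $n-1$, is what yields both the realizability of exactly these families and the matching upper bound; without it neither direction of your count goes through.
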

\begin{proof}
For $i=1,\ldots,n-2$, state $i$ is the only non-final state that accepts $a^{n-1-i}f$; hence all these states are distinguishable. State 0 is distinguishable from these states, because it does not accept any words in $a^*f$. Hence $\cA_n$ is minimal.
The proof that $\cA_n$ is a left ideal is like that in Theorem~\ref{thm:LeftIdeal4}. Since $L_{ef}=\Sig^*$ is the only accepting quotient, $L$ is a right ideal. Hence it is two-sided.

Consider any transformation $t$ of the form
$$
t=\left( \begin{array}{cccccccc}
0 & 1 & 2 & 3 & \cdots & n-3 & n-2 & n-1 \\
0 & i_1 & i_2 & i_3 & \cdots & i_{n-3} & i_{n-2} & n-1
\end{array} \right ),
$$
where $i_k\in\{0,1,2,\ldots,n-2, n-1\}$ for $1\le k\le n-1$;
there are $n^{n-2}$ such transformations.
We have two cases:
\be
\item If $i_k\ne n-1$ for all $k$, $1\le k\le n-2$, then all the images of the first $n-2$ states are in the set $\{0,\ldots,n-2\}$.
By Theorem~\ref{thm:salomaa}, $t$ can be done by $\cA_n$.
\item If $i_h = n-1$ for some $h$, $1\le h\le n-2$, then there exists some $j$,
$1\le j\le n-2$ such that $i_k\ne j$ for all $k$, $1\le k\le n-2$.
\ee
Define $i'_k$ for all $1\le k\le n-2$ as follows:
\begin{equation*} 
i'_k = \left\{ 
\begin{array}{ccc} 
j,	& & \quad \text{if } i_k=n-1;\\ 
i_k,   &   & \quad \text{if }  i_k\ne n-1.\\ 
\end{array} \right. 
\end{equation*}
Let 
$$
s=\left( \begin{array}{cccccccc}
0 & 1 & 2 & 3 & \cdots & n-3 & n-2 & n-1 \\
0 & i'_1 & i'_2 & i'_3 & \cdots & i'_{n-3} & i'_{n-2} & n-1 
\end{array} \right ),$$
and
$r=(1,j)$.
By Theorem~\ref{thm:salomaa}, $s$ and $r$ can be performed by $\cA_n$.

Now consider $srfr$.
If $t$ maps $k$ to $n-1$, then $s$ maps $k$ to $j$,  $r$ maps $j$ to $1$, 
$f$ maps $1$ to $n-1$, and $r$ maps $n-1$ to $n-1$.
If $t$ maps $k$ to $1$, then $s$ maps $k$ to $1$, $r$ maps $1$ to $j$, $f$ maps $j$ to $j$, and $r$ maps $j$ to $1$.
Finally, if $t$ maps $k$ to an element other than 1 or $n-1$, then $srfr$ maps $k$ to the same element.
Hence we have $t=srfr$, and $t$ can be performed by $\cA_n$ as well.
\medskip

Refer to states in $\{1,\ldots,n-2\}$ as the \emph{middle} states.
Take any transformation $t$ that maps 0 to $k\in\{1,\ldots,n-2\}$, and 
any middle state to either state in $\{i,n-1\}$. There are $(n-2)2^{n-2}$ such transformations.
First consider any entry $i$ that is mapped to $n-1$ by $t$. We can map $i$ to $n-1$ without changing any other states. First, apply $a^{n-1-i}$'s to rotate all the middle states clockwise, so that $i$ is mapped to $1$,
then apply $f$ to map $i$ to $n-1$, and then $a^i$ to return all the states other than $n-1$ to their original positions.
This is repeated for all the states that are mapped to $n-1$ by $t$.
After this is done, apply  $e$ to replace all the middle states by 1, and apply $a^{i-1}$ to move 1 to $i$. Hence $t$ can be performed.

Finally, the constant transformation $Q\choose n-1$ is done by $ef$.

In summary, the syntactic complexity of the language accepted by $\cA_n$ is at least $n^{n-2}+(n-2)2^{n-2}+1$.

Note that $0$ is mapped to a middle state $1$ if and only if the input word contains an $e$.
But every word of the form $xe$ leaves the automaton in a state in $\{1,n-1\}$. Applying any other word can only result in a state in $\{i,n-1\}$, for some middle state $i$.  Hence no transformations other  than the ones we have considered can be done by $\cA_n$, and the syntactic complexity of the language accepted by $\cA_n$ is precisely $n^{n-2}+(n-2)2^{n-2}+1$.
\qed

\end{proof}

Table~\ref{tab:2ISummary} summarizes our results for two-sided ideals.
For $\Sig=\{a,b\}$, the  values are reached by the languages $\Sig^*a^{n-1}\Sig^*$ for $n\ge 2$.
For $n=4$, $|\Sig|=3$, the value 16 is reached by $\cA_4$ restricted to $\{a,e,f\}$.
For  $|\Sig|=4$, the value 23 is reached by $\cA_4$ restricted to $\{a,d,e,f\}$.
For $n=5$, $|\Sig|=3$, the value 47 is reached by $\cA_5$ restricted to $\{a,e,f\}$.
For  $|\Sig|=4$, the value 90 is reached by $\cA_5$ restricted to $\{a,d,e,f\}$.
For  $|\Sig|=5$, the value 90 is reached by $\cA_5$ restricted to $\{a,c,d,e,f\}$. 

\begin{table}[ht]
\caption{Syntactic complexities for two-sided ideals.}
\label{tab:2ISummary}
\begin{center}
$
\begin{array}{| c ||c|c| c| c|c|c|c|}    
\hline
\ \  \ \ &\ \ n=1 \ \ &\ \ n=2 \ \ &\ \ n=3 \  \ & \  \  n=4 \ \ 
&  \ n=5 \  &\ \ \ \ 
&\  n=n \  
  \\
\hline  \hline
  |\Sig|=1
& \bf 1&\bf 1  &\bf 2&\bf  3  	&\bf 4    &   \ldots 	& \bf n-1 \\
\hline
|\Sig|=2 
& - & \bf 2 & 5 &   11	& 19 &     \ldots&  \\
\hline
|\Sig|=3
& - & - & 6 &  16	& 47  &   \ldots  	& \\
\hline
|\Sig|=4
& - &-  & - & 23 	& 90 &      \ldots	& \\\hline
|\Sig|=5
& - &-  & -        & 25	& 147  &   \ldots	&  \\
\hline
|\Sig|=6
& - &-  & -        &  -	& 150 &   \ldots	& n^{n-2} +(n-2)2^{n-2}+1 \\
\hline
\end{array}
$
\end{center}
\end{table}

 Our previous two results about reversal apply here as well.
\begin{theorem}[Reverse of Two-Sided Ideal]
The reverse of the two-sided ideal accepted by automaton $\cA_n$ of Theorem~\ref{thm:TwoSided} restricted to $\{a,d,e,f\}$
has $2^{n-2}+1$ quotients, which is the maximum possible for a two-sided  ideal.
\end{theorem}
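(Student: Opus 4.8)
The plan is to analyze the subset construction applied to the nondeterministic automaton $\cN_n$ obtained by reversing $\cA_n$ restricted to $\{a,d,e,f\}$: in $\cN_n$ all transitions are reversed, the unique initial state is $n-1$ (the former accepting state), and the unique final state is $0$ (the former initial state). First I would record how each reversed input acts on a subset $S$ of $Q$. Since $n-1$ is an absorbing state of $\cA_n$ under every letter of $\{a,d,e,f\}$, every reversed letter keeps $n-1$, so every reachable subset contains $n-1$. Reversed $a$ cyclically permutes the \emph{middle} states $\{1,\dots,n-2\}$ and fixes $0$ and $n-1$; reversed $f$ adjoins the middle state $1$; reversed $e$ sends $S$ to $Q$ when $1\in S$ and to $\{n-1\}$ otherwise; and reversed $d$ removes $n-2$ from $S$, except that when $0\in S$ it keeps $S$ and adjoins $n-2$.

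Second, for the lower bound I would start from $\{n-1\}$ and, using only reversed $f$ (adjoin $1$) and reversed $a$ (rotate the middle cycle), reach every set $\{n-1\}\cup T$ with $T\subseteq\{1,\dots,n-2\}$. This is the familiar fact that a cyclic rotation together with the operation ``adjoin a fixed point'' generates all subsets of the cycle, proved by induction on $|T|$: rotate so that the target set contains $1$, obtain its other $|T|-1$ elements by the induction hypothesis, and then adjoin $1$. Applying reversed $e$ to any such set with $1\in T$ yields $Q$, so at least $2^{n-2}+1$ subsets are reachable. For the matching upper count I would check that the family $\mathcal{F}=\{\{n-1\}\cup T : T\subseteq\{1,\dots,n-2\}\}\cup\{Q\}$ is closed under all four reversed inputs; the only cases that require thought are reversed $e$, which collapses onto $Q$ or onto $\{n-1\}$, and reversed $d$, whose action on $0$ and on $n-2$ is not monotone, and in each case the image plainly remains in $\mathcal{F}$. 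Since $\{n-1\}\in\mathcal{F}$, the set of reachable subsets is exactly $\mathcal{F}$, which has $2^{n-2}+1$ elements.

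Third, distinguishability is essentially automatic. Reading a word $w$ in $\cN_n$ from a subset $S$ reaches $\{p\in Q : \delta(p,w^R)\in S\}$, and this is an accepting subset of the determinized automaton iff $\delta(0,w^R)\in S$. Two distinct members $\{n-1\}\cup T_1$ and $\{n-1\}\cup T_2$ with, say, $i\in T_1\setminus T_2$ are separated by any $w$ with $\delta(0,w^R)=i$, and such a $w$ exists because from state $0$ the letter $e$ reaches the middle state $1$ and powers of $a$ then reach every middle state; moreover $Q$ is separated from every other reachable subset already by the empty word, since $Q$ contains $0$ while no other reachable subset does. Hence the determinized automaton is minimal, and the reverse of the two-sided ideal accepted by $\cA_n$ restricted to $\{a,d,e,f\}$ has exactly $2^{n-2}+1$ quotients. (Alternatively, since $\cA_n$ restricted to $\{a,d,e,f\}$ is accessible---every state is reached from $0$ by a word in $ea^*$ or $ea^*f$---minimality of the determinized reverse is immediate.)

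Finally, $2^{n-2}+1$ is the largest possible number of quotients for the reverse of a two-sided ideal with $n$ quotients, by the matching upper bound of~\cite{BJL10}; together with the construction above, this proves the theorem. The step I expect to be the main obstacle is the reachability bookkeeping in the second paragraph---identifying the family $\mathcal{F}$ precisely and verifying that the non-monotone reversed transitions $d$ and $e$ never lead outside it---rather than the distinguishability argument, which reduces to the accessibility of $\cA_n$.
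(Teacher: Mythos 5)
Your proof is correct and follows essentially the same route as the paper's: the subset construction applied to the reversed NFA, reachability of $Q$ and of all sets $\{n-1\}\cup T$ with $T\subseteq\{1,\ldots,n-2\}$, and pairwise distinguishability. The only differences are cosmetic: you build subsets up by adjoining $1$ with reversed $f$ and rotating with reversed $a$, where the paper first reaches the full middle set and then deletes elements with reversed $d$; and you make explicit the closure argument showing no further subsets are reachable and the distinguishability check, both of which the paper leaves to the reader (the paper instead infers exactness from the known upper bound of~\cite{BJL10}).
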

\begin{proof}
Consider the subset construction applied to the nondeterministic automaton of Fig.~\ref{fig:2Reverse}.
Let $P=Q\setminus \{0,n-1\}$. We will show that $Q$ and all sets of the form $\{n-1\}\cup S$, where $S\subseteq P$
are reachable.
First, $Q$ is reached by $fe$.
Also,  $\delta(\{n-1\},(fa)^{n-3}f)=\{n-1\}\cup P$. To remove $i$, $1\le i \le n-2$, from any set
$\{n-1\}\cup S$, apply $a^{i-1}d$; this also rotates the remaining states of $P$ to the left by $i-1$ positions. Then apply $a^{n-2-(i-1)}$ to return the remaining states to their  original positions.
Hence all sets of the form $\{n-1\}\cup S$ are reachable. One verifies that all the $2^{n-2}+1$ subsets are pairwise distinguishable.
\qed
\end{proof}

\begin{figure}[hbt]
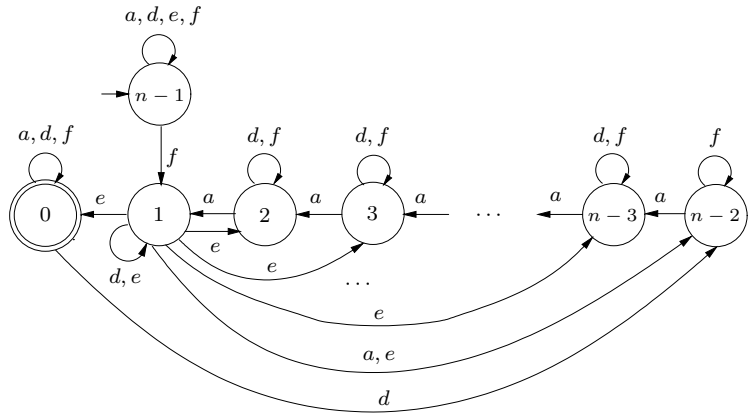

\begin{center}
\input 2Reverse.eepic
\end{center}
\caption{Nondeterministic automaton of the reverse of a two-sided ideal.} 
\label{fig:2Reverse}
\end{figure}

 \label{sec:conclusions}
Despite the fact that the Myhill congruence has left-right symmetry, there are significant differences between left and right ideals. 
The major open problem concerning ideals is to find a better upper bound for left ideals. 
Also, the relation between syntactic complexity and reversal deserves further study.
\providecommand{\noopsort}[1]{}


\begin{thebibliography}{10}

\bibitem{AnBr09}
Ang, T., Brzozowski, J.:
\newblock Languages convex with respect to binary relations, and their closure
  properties.
\newblock Acta Cybernet. \textbf{19}(2) (2009)  445--464

\bibitem{BHK09}
Bordihn, H., Holzer, M., Kutrib, M.:
\newblock Determination of finite automata accepting subregular languages.
\newblock Theoret. Comput. Sci. \textbf{410} (2009)  3209--3249

\bibitem{Brz10}
Brzozowski, J.:
\newblock Quotient complexity of regular languages.
\newblock J. Autom. Lang. Comb. (2010) to appear.

\bibitem{Brz64}
Brzozowski, J.:
\newblock Derivatives of regular expressions.
\newblock J. ACM \textbf{11}(4) (1964)  481--494

\bibitem{BJL10}
Brzozowski, J., Jir{\'a}skov{\'a}, G., Li, B.:
\newblock Quotient complexity of ideal languages.
\newblock In L\'opez-Ortiz, A., ed.: Proceedings of the 9th Latin American
  Theoretical Informatics Symposium, $($LATIN\/$)$. Volume 6034 of LNCS,
  Springer (2010)  208--211 (Full paper at {\tt
  http://arxiv.org/abs/0908.2083)}.

\bibitem{BJZ10}
Brzozowski, J., Jir{\'a}skov{\'a}, G., Zou, C.:
\newblock Quotient complexity of closed languages.
\newblock In Ablayev, F., Mayr, E.W., eds.: Proceedings of the 5th
  International Computer Science Symposium in Russia, $($CSR\/$)$. Volume 6072
  of LNCS, Springer (2010)  84--95

\bibitem{BSX10}
Brzozowski, J., Shallit, J., Xu, Z.:
\newblock Decision problems for convex languages.
\newblock Information and Computation (2010) to appear.

\bibitem{Den68}
D\'enes, J.:
\newblock On transformations, transformation semigroups and graphs.
\newblock In Erd\"os, P., Katona, G., eds.: Theory of Graphs. Proceedings of
  the Colloquium on Graph Theory held at Tihany, 1966, Akad\'emiai Kiado (1968)
   65--75

\bibitem{Den72}
D\'enes, J.:
\newblock On generalizations of permutations: some properties of
  transformations.
\newblock In: Actes du Colloque sur les Permutation, Paris 1972,
  Gauthier-Villars (1972)  117--120

\bibitem{GaSi76}
Galil, Z., Simon, J.:
\newblock A note on multiple-entry finite automata.
\newblock J. Comput. Syst. Sci. \textbf{12} (1976)  350--351

\bibitem{GiKo74}
Gill, A., Kou, L.T.:
\newblock Multiple-entry finite automata.
\newblock J. Comput. Syst. Sci. \textbf{9}(1) (1974)  1--19

\bibitem{HoKo02}
Holzer, M., K\"onig, B.:
\newblock On deterministic finite automata and syntactic monoid size.
\newblock In Ito, M., Toyama, M., eds.: Proceedings of the 6th International
  Conference on Developments in Language Theory $($DLT\/$)$. Volume 2450 of
  LNCS, Springer (2002)  229--240

\bibitem{HoKo03}
Holzer, M., K\"onig, B.:
\newblock On deterministic finite automata and syntactic monoid size,
  continued.
\newblock In \'Esik, Z., F\"ul\"op, Z., eds.: Proceedings of the 7th
  International Conference on Developments in Language Theory $($DLT\/$)$.
  Volume 2710 of LNCS, Springer (2003)  349--360

\bibitem{HoKo04}
Holzer, M., K\"onig, B.:
\newblock On deterministic finite automata and syntactic monoid size.
\newblock Theoret. Comput. Sci. \textbf{327} (2004)  319--347

\bibitem{HSY01}
Holzer, M., Salomaa, K., Yu, S.:
\newblock On the state complexity of $k$-entry deterministic finite automata.
\newblock J. Autom. Lang. Comb. \textbf{6} (2001)  453--466

\bibitem{KRS09}
Kao, J.Y., Rampersad, N., Shallit, J.:
\newblock On {NFA}s where all states are final, initial, or both.
\newblock Theoret. Comput. Sci. \textbf{410}(47-49) (2009)  5010--5021
  Available in 2003 at arXiv:math/0306416v2.

\bibitem{KLS05}
Krawetz, B., Lawrence, J., Shallit, J.:
\newblock State complexity and the monoid of transformations of a finite set.
\newblock In Domaratzki, M., Okhotin, A., Salomaa, K., Yu, S., eds.:
  Proceedings of the Implementation and Application of Automata, $($CIAA\/$)$.
  Volume 3317 of LNCS, Springer (2005)  213--224

\bibitem{Mas70}
Maslov, A.N.:
\newblock Estimates of the number of states of finite automata.
\newblock Dokl. Akad. Nauk SSSR \textbf{194} (1970)  1266--1268 (Russian)
  English translation: Soviet Math. Dokl. {\bf 11} (1970), 1373--1375.

\bibitem{Myh57}
Myhill, J.:
\newblock Finite automata and representation of events.
\newblock Wright Air Development Center Technical Report \textbf{57--624}
  (1957)

\bibitem{Ner58}
Nerode, A.:
\newblock Linear automaton transformations.
\newblock Proc. Amer. Math. Soc. \textbf{9} (1958)  541--544

\bibitem{PaPe65}
Paz, A., Peleg, B.:
\newblock Ultimate-definite and symmetric-definite events and automata.
\newblock J. ACM \textbf{12}(3) (1965)  399--410

\bibitem{Pic38}
Piccard, S.:
\newblock Sur les bases du group sym\'etrique et du groupe alternant.
\newblock Commentarii Mathematici Helvetici \textbf{11}(1) (1938)  1--8

\bibitem{Pic46}
Piccard, S.:
\newblock Sur les bases du groupe sym\'etrique et les couples de substitutions
  qui engendrent un groupe r\'egulier.
\newblock Librairie Vuibert, Paris (1946)

\bibitem{Pic57}
Piccard, S.:
\newblock Sur les bases des groupes d'ordre fini.
\newblock M\'emoires de l'universit\'e de Neuchatel, vol. 25. Universit\'e de
  Neuchatel, Switzerland (1957)

\bibitem{ReVa10}
Restivo, A., Vaglica, R.:
\newblock Automata with extremal minimality conditions.
\newblock In Gao, Y., Seki, S., Yu, S., eds.: Proceedings of the 14th
  International Conference on Developments in Language Theory $($DLT\/$)$.
  Volume 6224 of LNCS, Springer (2010)  399--410

\bibitem{Sal60}
Salomaa, A.:
\newblock A theorem concerning the composition of functions of several
  variables ranging over a finite set.
\newblock J. Symbolic Logic \textbf{25} (1960)  203--208

\bibitem{Sal62}
Salomaa, A.:
\newblock Some completeness criteria for sets of functions over a finite
  domain.
\newblock Ann. Univ. Turkuensis, Ser. AI \textbf{53} (1962)

\bibitem{Sal63}
Salomaa, A.:
\newblock On basic groups for the set of functions over a finite domain.
\newblock Ann. Acad. Scient. Fenn., Ser. A \textbf{338} (1963)

\bibitem{Sal03}
Salomaa, A.:
\newblock Composition sequences for functions over a finite domain.
\newblock Theoret. Comput. Sci. \textbf{292} (2003)  263--281

\bibitem{VeGi79}
Veloso, P.A.S., Gill, A.:
\newblock Some remarks on multiple-entry finite automata.
\newblock J. Comput. Syst. Sci. \textbf{18} (1979)  304--306

\bibitem{Yu01}
Yu, S.:
\newblock State complexity of regular languages.
\newblock J. Autom. Lang. Comb. \textbf{6} (2001)  221--234

\end{thebibliography}
\end{document}